\begin{document}
\title{Hardware Impairments Aware Transceiver for Full-Duplex Massive MIMO Relaying}

\author{Xiaochen~Xia,
        Dongmei~Zhang,
        Kui~Xu,~\IEEEmembership{Member,~IEEE},
        Wenfeng~Ma, and
        Youyun Xu, ~\IEEEmembership{Senior Member,~IEEE}
\thanks{This work is supported by Major Research Plan of National Natural Science Foundation of China (No. 91438115), Special Financial Grant of the China Postdoctoral Science foundation (2015T81079), National Natural Science Foundation of China (No. 61371123, No. 61301165), Jiangsu Province Natural Science Foundation (BK2011002, BK2012055), China Postdoctoral Science Foundation (2014M552612) and Jiangsu Postdoctoral Science Foundation (No.1401178C).}
\thanks{The authors are with the Institute of Communication Engineering, PLA University of Science and Technology (email: tjuxiaochen@gmail.com, zhangdm72@163.com, \{lgdxxukui, lgdxmwf, lgdxxyy\}@126.com).}}


\maketitle

\begin{abstract}
This paper studies the massive MIMO full-duplex relaying (MM-FDR), where multiple source-destination pairs communicate simultaneously with the help of a common full-duplex relay equipped with very large antenna arrays. Different from the traditional MM-FDR protocol, a general model where sources/destinations are allowed to equip with multiple antennas is considered. In contrast to the conventional MIMO system, massive MIMO must be built with low-cost components which are prone to hardware impairments. In this paper, the effect of hardware impairments is taken into consideration, and is modeled using transmit/receive distortion noises. We propose a low complexity hardware impairments aware transceiver scheme (named as HIA scheme) to mitigate the distortion noises by exploiting the statistical knowledge of channels and antenna arrays at sources and destinations. A joint degree of freedom and power optimization algorithm is presented to further optimize the spectral efficiency of HIA based MM-FDR. The results show that the HIA scheme can mitigate the ``ceiling effect" appears in traditional MM-FDR protocol, if the numbers of antennas at sources and destinations can scale with that at the relay.
\end{abstract}

\begin{IEEEkeywords}
Massive MIMO full-duplex relaying, hardware impairments, transceiver design, joint degree of freedom and power optimization, achievable rate.
\end{IEEEkeywords}

\IEEEpeerreviewmaketitle

\newtheorem{lemma}{Lemma}
\newtheorem{theorem}{Theorem}
\newtheorem{corollary}{Corollary}
\newtheorem{remark}{Remark}
\section{Introduction}

In multi-user MIMO systems, one main challenge is the increased complexity and energy consumption of the signal processing to mitigate the interferences between multiple co-channel users. To achieve energy efficient transmission, the multi-user MIMO system with very large antenna arrays at each base station (known as ``massive MIMO'' system) has been advocated recently \cite{IEEEhowto:Marz2010TWC}. The key result is that, with very large antenna arrays at each base station, both the intracell and intercell interferences can be substantially reduced with simple linear beamforming (BF) processing \cite{IEEEhowto:Marz2010TWC,IEEEhowto:Ngo2013TC}.

On the other hand, full-duplex relaying (FDR) is a promising approach to improve the spectral efficiency (SE) of relaying network while retains the merits of half-duplex relaying (HDR) (e.g., path loss reduction). In FDR, the relay transmits and receives simultaneously at the same frequency and time, but at the cost of a strong echo interference (EI) due to signal leakage between the relay output and input \cite{IEEEhowto:BharSigcomm13}. To mitigate EI, three approaches have been investigated, i.e., 1) passive cancellation, 2) time-domain cancellation \cite{IEEEhowto:BharSigcomm13} and 3) spatial suppression \cite{IEEEhowto:SuraweeraTW14,IEEEhowto:ShangAusCTW14}. The passive cancellation relies on a combination of path loss, cross-polarization and antenna directionality \cite{IEEEhowto:EverettTWC13}. The time-domain cancellation is based on the fact that EI signal is known at full-duplex node. Thus, it can be regenerated and removed in time-domain \cite{IEEEhowto:BharSigcomm13}. In spatial suppression, EI is mitigated with the multiple transmit/receive antennas by approaches such as null-space projection \cite{IEEEhowto:SuraweeraTW14}. Inspired by these works, a number of works have dedicated to the study of FDR protocol on both theory and testbed (See \cite{IEEEhowto:NguyenTSP13,IEEEhowto:SabharwalJSAC14} and the references therein). To achieve spectral and energy efficient transmissions of multiple source-destination pairs, recent works sought to incorporate both HDR \cite{IEEEhowto:CuiTWC14,IEEEhowto:SuraICC13} and FDR \cite{IEEEhowto:NgoICC13}-\cite{IEEEhowto:XXCWCNC14} with massive MIMO.

However, the aforementioned works on massive MIMO are actually based on the assumption that the base stations or relays are equipped with a large number of high-quality transmit/receive radio frequency (RF) chains (which are expensive and power-hungry). In contrast to conventional MIMO system (e.g., at most 8 antennas in LTE system), massive MIMO must be built with low-cost components \cite{IEEEhowto:LarssonCM14} since the deploy cost and energy consumption of circuits will increase dramatically as the number of antennas grows very large. Such low-cost components are prone to hardware imperfections (e.g., phase noise, nonlinear power amplifier, I/Q imbalance, nonlinear low-noise amplifier and ADC impairments), which must be considered in the design of practical massive MIMO system.

This paper focuses on the transceiver design for massive MIMO full-duplex relaying (MM-FDR) with hardware impairments. The effect of hardware impairments is modeled using transmit/receive distortion noises \cite{IEEEhowto:DayJSAC12,IEEEhowto:ZhengTWC13}. There are several challenges in the design of practical transceiver scheme in the considered system. The first is: \emph{how to deal with the EI cancellation without instantaneous EI channel?} EI cancellation is a critical problem in MM-FDR transceiver design which is not only important to reduce EI power, but also useful to reduce distortion noises caused by hardware impairments at the relay (as will be shown in section IV). Different from FDR with small-scale relay antenna arrays \cite{IEEEhowto:BharSigcomm13}-\cite{IEEEhowto:SabharwalJSAC14}, the instantaneous EI channel is usually not easy to obtain in MM-FDR. This is because the learning of EI channel requires training sequence with length not less than the number of relay antennas \cite{IEEEhowto:Ngo2013TC}, which is prohibitive in MM-FDR as the channel coherent time is limited. The lack of EI channel makes traditional EI cancellation techniques (e.g., time-domain cancellation and spatial suppression) difficult to apply. Although passive cancellation does not relies on EI channel estimation, it usually cannot provide satisfactory performance when used alone \cite{IEEEhowto:SabharwalJSAC14}. Another problem is: \emph{How to suppress distortion noises caused by hardware impairments at sources and destinations.} Different from multi-user interference (MUI), the distortion noise caused by transmit imperfection of source can be viewed as an interference signal with the same channel as transmit data. Thus, it cannot be suppressed by relay antenna arrays during coherent combining (Similar problem appears in reception at destinations). This causes performance ceiling on achievable rate as the number of relay antennas grows large, which degrades the gain of massive MIMO significantly.

In this paper, we propose practical transceiver scheme for MM-FDR with hardware impairments considering the above problems. Different from the traditional MM-FDR protocol \cite{IEEEhowto:NgoJSAC14} (where sources and destinations are equipped with single antenna), we consider a general model where sources and destinations are allowed to equip with multiple antennas. The contributions are summarized as follows:
\begin{itemize}
  \item We first examine the limitation of traditional MM-FDR protocol under hardware impairments. In particular, we derive the upper bound on end-to-end achievable rate for traditional MM-FDR protocol with linear processing at the relay. The bound reveals that the achievable rate is limited by the hardware impairments at the sources and destinations, and performance ceiling appears as the number of relay antenna grows large. The result also implies that it is impossible to cancel the ``ceiling effect'' with linear processing, if sources and destinations are only equipped with single antenna.
  \item Based on the upper bound analysis, we propose a hardware impairments aware transceiver scheme (HIA scheme) to mitigate the distortion noises by exploiting the statistical channel knowledge and antennas arrays of sources/destinations. The scheme needs no instantaneous knowledge of EI channel. The asymptotic end-to-end achievable rate of MM-FDR with HIA scheme (HIA-MM-FDR) is derived and the scaling behaviors of MUI, EI and distortion noises are determined.
  \item A joint degree of freedom and power optimization (JDPO) algorithm is presented to further improve the SE of HIA-MM-FDR.
\end{itemize}

The paper is organized as follows. In section II, we review the related work. The system model is described in section III. The upper bound of achievable rate is analyzed in section IV. The HIA scheme is proposed in section V and the JDPO algorithm is presented in section VI. Simulation results are presented in section VII and some conclusions will be drawn at last. \underline{Notations}: $\mathbb E(\cdot)$ and ${\rm var}(\cdot)$ denote the expectation and variance. ${\bf I}_n$ is $n \times n$ identity matrix. $(\cdot)^{*}$, $(\cdot)^{T}$ and $(\cdot)^{H}$ denote conjugate, transpose and conjugate-transpose, respectively. $\rho(\bf A)$, ${\rm Tr} (\bf A)$, $[{\bf A}]_{ij}$, $\lambda_l(\bf A)$ and ${\bf u}_l(\bf A)$ denote the spectral radius, trace, $(i,j)$th element of matrix, $l$th largest eigenvalue and eigenvector with respect to $l$th largest eigenvalue of matrix $\bf A$, respectively. $a$ scales with $b$ means $0 < \mathop {\lim }\limits_{b \to \infty } \frac{a}{b} < \infty $.

\section{Related Work}

The design and performance of using unlimited number of antennas at the base station in cellular system were first considered for independent antennas \cite{IEEEhowto:Marz2010TWC,IEEEhowto:Ngo2013TC}, and soon extended to the scenarios with spatial correlated antennas \cite{IEEEhowto:HoydisJSAC13}-\cite{IEEEhowto:AdhikaryIT13}. The asymptotic SINR for single cell cellular system was analyzed in \cite{IEEEhowto:Ngo2013TC}. It has been shown that, with very large antenna arrays at the base station, a deterministic SINR (also called the ``deterministic equivalent'' of SINR) which depends only on the large-scale fading of channels can be achieved, if the transmit power is scaled by $N$ with perfect CSI and $\sqrt{N}$ with imperfect CSI ($N$ denotes the number of base station antennas). The authors in \cite{IEEEhowto:HoydisJSAC13}-\cite{IEEEhowto:YinJSSP14} have done considerable work to derive the ``deterministic equivalent'' of SINR for massive MIMO system with spatial correlated antennas.

In the field of massive MIMO relaying, the SE and EE of HDR with very large relay antenna arrays were investigated in \cite{IEEEhowto:SuraICC13,IEEEhowto:CuiTWC14}. The MM-FDR with decode-and-forward (DF) relay was first introduced in \cite{IEEEhowto:NgoICC13,IEEEhowto:NgoJSAC14}, and analyzed in end-to-end achievable rate as linear processing is employed. The asymptotic performance of amplify-and-forward based MM-FDR was considered in \cite{IEEEhowto:XXCWCNC14} and the scaling behavior of the infinitely repeating echo interference was determined.

Only a few works considered the effect of hardware impairments on massive MIMO system. A constant envelope signal design has been proposed in \cite{IEEEhowto:MohammedTC13} to facilitate the use of power-efficient RF power amplifiers. The authors in \cite{IEEEhowto:StuderJSAC13} presented a low peak-to-average-power ratio (PAR) precoding solution to enable efficient implementation using non-linear RF components in massive MIMO system. In FDR with small-scale antennas, the optimal precoding under limited ADC dynamic range was studied in \cite{IEEEhowto:DayJSAC12,IEEEhowto:ZhengTWC13}. However, these works cannot provide much insight for the effect of hardware impairments on MM-FDR.

\section{System Model}

Consider the network with $K$ source-destination (S-D) pairs and a single full-duplex relay $R$, where source $S_k$ wishes to communicate with destination $D_k$ ($k\in \{1,\cdots, K\}$) with the help of $R$. The relay adopts the DF policy. It is assumed that the sources and destinations are equipped with $N_S$ and $N_D$ antennas respectively, while the relay is equipped with $N_R+N_T$ antennas ($N_R$ for reception and $N_T$ for transmission). We are interested in the large-$(N_R,N_T)$ regime, i.e., $\min\{N_R,N_T\}\to \infty$. $N_S$, $N_D$ and $K$ can be either fixed or scale with $\min\{N_R,N_T\}$.

Let ${{\bf{H}}_{SR,k}} \in {\mathbb C^{N_R \times N_S}}$ be the channel matrix from $S_k$ to receive antenna array of relay and let ${{\bf{H}}_{RD,k}} \in {\mathbb C^{N_T \times N_D}}$ be the channel matrix from $D_k$ to transmit array of relay. Let ${{\bf{H}}_{EI}} \in {\mathbb C^{N_R \times N_T}}$ denote the EI channel matrix between transmit and receive arrays of relay. The spatial correlation of each MIMO channel is characterized by the Kronecker model \cite{IEEEhowto:KotechaTSP04}. Thus, ${{\bf{H}}_{SR,k}}$ can be expressed as ${{\mathbf{H}}_{SR,k}} = \sqrt {{\beta _{SR,k}}} {\mathbf{C}}_{SR,k}^{1/2}{{\mathbf{X}}_{SR,k}}{\mathbf{\tilde C}}_{SR,k}^{1/2}$, where ${{\beta _{SR,k}}}$ represents the large-scale fading. ${\mathbf{C}}_{SR,k} \!\in\! {\mathbb C^{N_R \times N_R}} $ and ${\mathbf{\tilde C}}_{SR,k} \!\in\! {\mathbb C^{N_S \times N_S}} $ characterize the spatial correlation of received signals across receive array of relay and that of transmitted signals across transmit array of $S_k$. ${{\mathbf{X}}_{SR,k}} \!\in\! {\mathbb C^{N_R \times N_S}}$ consists of the random components of channels whose elements are i.i.d with distribution ${\cal CN}(0,1)$. Based on the same model, the channel matrices ${{\bf{H}}_{RD,k}}$ and ${{\bf{H}}_{EI}}$ can be expressed as ${{\mathbf{H}}_{RD,k}}\! \!=\!\! \sqrt {{\beta _{RD,k}}} {\mathbf{C}}_{RD,k}^{1/2}{{\mathbf{X}}_{RD,k}}{\mathbf{\tilde C}}_{RD,k}^{1/2}$ and ${{\mathbf{H}}_{EI}} \!\! =\!\!\sqrt {{\beta _{EI}}} {\mathbf{C}}_{EI}^{1/2}{{\mathbf{X}}_{EI}}{\mathbf{\tilde C}}_{EI}^{1/2}$. To facilitate the analysis, we assume the following conditions on correlation matrices, i.e., $\forall {\bf C} \in \{ {{{\mathbf{C}}_{SR,k}},{{{\mathbf{\tilde C}}}_{SR,k}},{{\mathbf{C}}_{RD,k}},{{{\mathbf{\tilde C}}}_{RD,k}},{{\mathbf{C}}_{EI}},{{{\mathbf{\tilde C}}}_{EI}}} \}$, where $k\in\{1,\cdots,K\}$,
\begin{itemize}
  \item \textbf{A1}: The spectral radius of ${\bf C}$ is bounded by a constant, i.e., $\rho({\bf C})\le C$.
  \item \textbf{A2}: ${\bf C}$ is a Hermitian and Teoplitz matrix and has unit diagonal elements.
\end{itemize}
The former is a common assumption in the studies of massive MIMO which follows from energy conservation \cite{IEEEhowto:HoydisJSAC13} and \textbf{A2} corresponds to the case of uniform linear {array\footnotemark[1]} (ULA) \cite{IEEEhowto:KotechaTSP04}.
{\footnotetext[1]{We will restrict our analysis to the scenario with ULA at each node. The analysis for arbitrary correlation matrices will be considered as further work.}}

To characterize the effect of hardware imperfections, we adopt the new signal model from \cite{IEEEhowto:DayJSAC12,IEEEhowto:ZhengTWC13}.

\subsubsection{Imperfect Transmit RF Chain}

We model the effect of imperfect transmit RF chain by adding, per transmit antenna, an independent zero-mean Gaussian ``transmit distortion noise'', whose power is proportional to the signal power transmitted at that antenna. The experimental results in \cite{IEEEhowto:SantellaVT98,IEEEhowto:SuzukiJSAC08} have shown that the independent Gaussian distortion noise model closely captures the joint effect of imperfect components in transmit RF chain. Let ${{\mathbf{ x}}_{S,k}}\left[ u \right] \in \mathbb C^{N_S\times 1}$ and ${{\mathbf{ x}}_R}\left[ u \right]\in \mathbb C^{N_T\times 1}$ denote the transmit vectors of source $S_k$ and relay at time instant $u$. Based on the above model, the distorted transmit signals can be expressed as
\begin{equation}\label{2-3}
{{\mathbf{\tilde x}}_{S,k}}\left[ u \right] = {{\mathbf{x}}_{S,k}}\left[ u \right] + {{\mathbf{t}}_{S,k}}\left[ u \right] \hspace{.5em} {\rm and} \hspace{.5em} {{\mathbf{\tilde x}}_R}\left[ u \right] = {{\mathbf{x}}_R}\left[ u \right] + {{\mathbf{t}}_R}\left[ u \right]
\end{equation}
where the distortion noises ${{\mathbf{t}}_{S,k}}\left[ u \right]\sim {\cal CN}\left( {0,{\nu _{S,k}}{\text{diag}}\left( {\mathbb{E}\left[ {{{\mathbf{x}}_{S,k}}\left[ u \right]{\mathbf{x}}_{S,k}^H\left[ u \right]} \right]} \right)} \right)$ and ${{\bf{t}}_R}\left[ u \right]\sim{\cal C}{\cal N}\left( 0,\nu _R \times \right.$ $\left.{\rm{diag}}\left( {{\mathbb{E}}\left[ {{{\bf{x}}_R}\left[ u \right]{{\bf{x}}_R^H}\left[ u \right]} \right]} \right) \right)$. Note that $\nu _{S,k}> 0$ ($\nu _R > 0$) characterizes the level of transmit imperfection. For example, $\nu _{S,k}= 0$ ($\nu _R=0$) corresponds to the conventional assumption of perfect transmit RF chains. The quality of transmit RF chains degrades as $\nu _{S,k}$ ($\nu _R$) increases.

\subsubsection{Imperfect Receive RF Chain}

We model the effect of imperfect receive RF chain by adding, per receive antenna, an independent zero-mean Gaussian ``receive distortion noise'', whose variance is proportional to the signal power received at that antenna. More precisely, let ${{\bf{y}}_R}\left[ u \right]\in \mathbb C^{N_R\times 1}$ and ${{\bf{y}}_{D,k}}\left[ u \right] \in \mathbb C^{N_D\times 1}$ be the undistorted received signals of the relay and destination $D_k$ at time instant $u$, the distorted received signals can be expressed as
\begin{equation}\label{2-4}
    \begin{aligned}
        & {{{\bf{\tilde y}}}_R}\left[ u \right] = {{\bf{y}}_R}\left[ u \right] + {{\bf{r}}_R}\left[ u \right] \hspace{.5em}  {\rm and} \hspace{.5em} {{{\bf{\tilde y}}}_{D,k}}\left[ u \right] = {{\bf{y}}_{D,k}}\left[ u \right] + {{\bf{r}}_{D,k}}\left[ u \right]
    \end{aligned}
\end{equation}
where the distortion noises ${{\bf{r}}_R}[ u ] \sim {\cal C}{\cal N}( {0,{\mu _R}{\rm{diag}}( {{\mathbb{E}}[ {{{\bf{y}}_R}[ u ]{{\bf{y}}_R^H}[ u ]} ]} )} )$ and ${{\mathbf{r}}_{D,k}}[ u ] \sim{\cal CN}( 0,{\mu _{D,k}} {\text{diag}}( {\mathbb{E}[ {{{\mathbf{y}}_{D,k}}[ u ]{\mathbf{y}}_{D,k}^H[ u ]} ]} ) )$. $\mu _R> 0$ ($\mu _{D,k} > 0$) characterizes the level of receive imperfection. The experimental studies in \cite{IEEEhowto:NamgoongTWC05} have shown that the independent Gaussian noise model is a good approximation to the joint effect of imperfect components in receive RF chain.

At time instant $u$, all sources transmit signals ${\bf x}_{S,k}[u]$ ($k\!=\!1,\cdots, K$) to the relay simultaneously. Meanwhile, the relay broadcasts the decoded signals to the destinations. Based on the models in (\ref{2-3}) and (\ref{2-4}), the received signals at the relay and $D_k$ can be expressed as
\begin{equation}\label{2-5}
    \begin{aligned}
        & {{{\mathbf{\tilde y}}}_R}\left[ u \right] = \sum\limits_{k = 1}^K {{\mathbf{H}}_{SR,k}}\left( {{{\mathbf{x}}_{S,k}}\left[ u \right] + {{\mathbf{t}}_{S,k}}\left[ u \right]} \right) \\
        &+ {{\mathbf{H}}_{EI}}\left( {{{\mathbf{x}}_R}\left[ u \right] + {{\mathbf{t}}_R}\left[ u \right]} \right) + {{\mathbf{r}}_R}\left[ u \right] + {{\mathbf{n}}_R}\left[ u \right] \\
        & {{\mathbf{\tilde y}}_{D,k}}\left[ u \right] = {\mathbf{H}}_{RD,k}^H\left( {{{\mathbf{x}}_R}\left[ u \right] + {{\mathbf{t}}_R}\left[ u \right]} \right) + {{\mathbf{r}}_{D,k}}\left[ u \right] + {{\mathbf{n}}_{D,k}}\left[ u \right]
    \end{aligned}
\end{equation}
To keep the complexity low, it is assumed that a singe data steam is transmitted at each source and each node employs only linear processing. In particular, $S_k$ transmits the unit-power symbol $s_k[u]$ using the unitary BF vector ${\bf p}_{S,k}$. Therefore, the transmit vector of $S_k$ can be expressed as ${\bf x}_{S,k}[u] =\sqrt{E_{S,k}}{\bf p}_{S,k} s_k[u]$, where $E_{S,k}$ denotes the transmit power. Relay combines the received signal by multiplying the receive BF matrix ${{\bf{W}}_R} = \left[ {{{\bf{w}}_{R,1}}, \cdots ,{{\bf{w}}_{R,K}}} \right]$, i.e., ${{{\bf{\mathord{\buildrel{\lower3pt\hbox{$\scriptscriptstyle\smile$}}
\over y} }}}_R}\left[ u \right] = {\bf{W}}_R^H{{{\bf{\tilde y}}}_R}\left[ u \right]$. The $k$th element of ${{{\bf{\mathord{\buildrel{\lower3pt\hbox{$\scriptscriptstyle\smile$}}
\over y} }}}_R}\left[ u \right]$
\begin{equation}\label{2-6}
    \begin{aligned}
        & {{\mathord{\buildrel{\lower3pt\hbox{$\scriptscriptstyle\smile$}} \over y} }_{R,k}}\left[ u \right]  = \sqrt{E_{S,k}}{\mathbf{w}}_{R,k}^H{{\mathbf{H}}_{SR,k}}{{\mathbf{p}}_{S,k}}{s_k}\left[ u \right] \\
        & +  \sqrt{E_{S,j}} \sum\limits_{j = 1,j \ne k}^K {{\mathbf{w}}_{R,k}^H{{\mathbf{H}}_{SR,j}}{{\mathbf{p}}_{S,j}}{s_j}\left[ u \right]}  +{\mathbf{w}}_{R,k}^H\sum\limits_{j = 1}^K {{{\mathbf{H}}_{SR,j}}{{\mathbf{t}}_{S,j}}\left[ u \right]} \\
        & + {\bf{w}}_{R,k}^H{{\bf{H}}_{EI}}\left( {{{\bf{x}}_R}\left[ u \right] + {{\bf{t}}_R}\left[ u \right]} \right) + {\bf{w}}_{R,k}^H{{\bf{r}}_R}\left[ u \right] + {\bf{w}}_{R,k}^H{{\bf{n}}_R}\left[ u \right]
    \end{aligned}
\end{equation}
is used to decode the symbol of $S_k$. The relay forwards the decoded symbol using transmit BF matrix ${{\bf{W}}_T} = \left[ {{{\bf{w}}_{T,1}}, \cdots ,{{\bf{w}}_{T,K}}} \right]$. The transmit vector of the relay can be expressed as ${{\mathbf{x}}_R}\left[ u \right]\! = \! {{\mathbf{W}}_T}{\mathbf{\Lambda} _R^{1/2}}{\mathbf{s}}\left[ {u - d} \right]$, where ${\mathbf{s}}\left[ {u - d} \right]\!=\![{{s_1}}\left[ {u - d} \right], \cdots,{{s_K}}\left[ {u - d} \right]]^T$ and ${\mathbf{\Lambda} _R}\!=\!{\rm diag}(E_{R,1},\cdots,E_{R,K})$ is the power allocation matrix at the relay. $d$ denotes the processing delay of the relay. To meet the relay's power constraint, ${{\mathbf{x}}_R}\left[ u \right]$ must satisfy ${\rm{Tr}}\!\left( {{\mathbb{E}}\left[ {{{\bf{x}}_R}\left[ u \right]{\bf{x}}_R^H\left[ u \right]} \right]} \right) \!\! =\!\! \sum_{l = 1}^K {{E_{R,l}}} \!\!\le\! \!E_R^{\rm max}$. $D_k$ uses the unitary BF vector ${\bf p}_{D,k}$ to combine the received signal ${{\mathbf{\tilde y}}_{D,k}}\left[ u \right]$. The combined signal is expressed as
\begin{equation}\label{2-7}
    \begin{aligned}
        & {{\tilde y}_{D,k}}\left[ u \right] =  \sqrt{E_{R,k}}{\mathbf{p}}_{D,k}^H{\mathbf{H}}_{RD,k}^H{{\mathbf{w}}_{T,k}}{s_k}\left[ {u - d} \right] \\
         & + {\mathbf{p}}_{D,k}^H\sum\limits_{j = 1,j \ne k}^K \sqrt{E_{R,j}} {{\mathbf{H}}_{RD,k}^H{{\mathbf{w}}_{T,j}}{s_j}\left[ {u - d} \right]}  \\
        & + {\mathbf{p}}_{D,k}^H{\mathbf{H}}_{RD,k}^H{{\mathbf{t}}_R}\left[ u \right] + {\mathbf{p}}_{D,k}^H{{\mathbf{r}}_{D,k}}\left[ u \right] + {\mathbf{p}}_{D,k}^H{{\mathbf{n}}_{D,k}}\left[ u \right]
    \end{aligned}
\end{equation}

\section{Limitation of MM-FDR with Single Antenna Sources and Destinations}

This section analyzes the upper bound on achievable rate of MM-FDR with {single\footnotemark[2]} antenna sources and destinations. The goal is to reveal the fundamental limitation of previous MM-FDR protocol with single antenna sources and destinations in combating distortion noises. Moreover, the bound provides us important insight on the design of practical transceiver scheme to mitigate the distortion noises, when multiple antennas are available at sources and destinations (See section V).
{\footnotetext[2]{The analysis for general case with multiple antennas at sources/destinations is more informative. However, the derivation is challenging since the BF vectors at sources, destinations and relay must be designed jointly to optimize the achievable rate, and closed-form solution is not available \cite{IEEEhowto:SuraweeraTW14}. The FDR with general number of antennas at each node was studied in \cite{IEEEhowto:ShomoronyIT14}-\cite{IEEEhowto:QianIT11}. However, their analyses assume perfect hardware, and thus give no insight on the effect of hardware impairments. Moreover, the methods in \cite{IEEEhowto:ShomoronyIT14}-\cite{IEEEhowto:QianIT11} cannot be applied in the considered system directly since the transmissions from sources to relay and relay to destinations are coupled due to the presence of distortion noises.}}

\subsection{Upper Bound on Achievable Rate}

As $N_S\!=\!N_D\!=\!1$, we set ${\bf p}_{S,k}\!=\!{\bf p}_{D,k}\!=\!1$, replace ${\bf H}_{i,k}$ ($i\in\{SR,RD\}$) with ${\bf h}_{i,k}$, and let ${\bf H}_{i}=[{\bf h}_{i,1},\cdots,{\bf h}_{i,K}]$ and ${{\mathbf{t}}_S}\left[ u \right] = {\left[ {{\mathbf{t}}_{S,1}^T\left[ u \right], \cdots ,{\mathbf{t}}_{S,K}^T\left[ u \right]} \right]^T}$. We consider a block-fading channel with coherence time $T$ (symbol times), where $\tau$ are used for uplink training of each source/destination, and the remaining $T-2K\tau$ are used for data transmission. The upper bound on achievable rate is obtained by assuming perfect knowledge of ${\bf h}_{SR,k}$, ${\bf h}_{RD,k}$ and ${\bf H}_{EI}$ can be provided with pilot signals, and meanwhile, the MUIs (i.e., the second terms of right-hand side of (\ref{2-6}) and (\ref{2-7})) and EI term ${\mathbf{w}}_{R,k}^H{{\mathbf{H}}_{EI}}{{\mathbf{x}}_R}\left[ u \right]$ can be somehow {cancelled\footnotemark[3]}. This gives us the following upper bound on end-to-end achievable rate
{\footnotetext[3]{With ${\bf H}_{EI}$, the EI signal ${\mathbf{w}}_{R,k}^H{{\mathbf{H}}_{EI}}{{\mathbf{x}}_R}\left[ u \right]$ can be perfectly canceled since ${\bf x}_R[u]$ is known at the relay. For MUIs, this can be realized by some sophisticated interference cancellation methods, e.g., the minimum mean square error with successive interference cancellation (MMSE-SIC).}}
\begin{equation}\label{3-0-1}
\textsf{R}_{k}^{{\text{Upper}}} = \mathop {\max }\limits_{\left\| {{{\mathbf{w}}_{R,k}}} \right\|=\left\| {{{\mathbf{w}}_{D,k}}} \right\| = 1} \min\left\{\textsf{R}_{SR,k}, \textsf{R}_{RD,k}\right\}
\end{equation}
where $\textsf{R}_{SR,k}$ and $\textsf{R}_{RD,k}$ denote the achievable rates of $S_k\to R$ and $R \to D_k$ channels, respectively. According to (\ref{2-6}), (\ref{2-7}) and the assumptions in the above, we have
\begin{equation}\label{3-0-2}
\begin{aligned}
& \textsf{R}_{SR,k} = {\frac{{T - 2K\tau }}{T}{{\log }_2}\left( {1 + \frac{{{E_{S,k}}{\mathbf{w}}_{R,k}^H{{\mathbf{h}}_{SR,k}}{\mathbf{h}}_{SR,k}^H{{\mathbf{w}}_{R,k}}}}{{{\mathbf{w}}_{R,k}^H{\mathbf{Q}}{{\mathbf{w}}_{R,k}}}}} \right)} \\
& \textsf{R}_{RD,k} = \frac{{T - 2K\tau }}{T}  \\
& \times {\log _2}\left( {1 + \frac{E_{R,k}{{\mathbf{w}}_{T,k}^H{{\mathbf{h}}_{RD,k}}{\mathbf{h}}_{RD,k}^H{{\mathbf{w}}_{T,k}}}}{{{\mathbf{h}}_{RD,k}^H{\mathbf{\Theta }}_R^\textsf{T}{{\mathbf{h}}_{RD,k}} + {\mathbb E\left[ {{{\left\| {{{\mathbf{r}}_{D,k}}\left[ u \right]} \right\|}^2}} \right]} + 1}}} \right)
\end{aligned}
\end{equation}
where ${{\mathbf{Q}}}$ is given as ${\mathbf{Q}} = {{\mathbf{H}}_{SR}}{{\bf\Theta } _S^{\textsf{T}}}{\mathbf{H}}_{SR}^H + {{\mathbf{H}}_{EI}}{{\bf\Theta } _R^{\textsf{T}}}{\mathbf{H}}_{EI}^H + {\mathbf{\Theta }}_R^\textsf{R} + {{\mathbf{I}}_{{N_R}}}$ with ${{\bf\Theta } _S^{\textsf{T}}} = \mathbb E\left[{\bf t}_S[u]{\bf t}^H_S[u]\right]$, ${{\bf\Theta } _R^{\textsf{T}}}= \mathbb E\left[{\bf t}_R[u]{\bf t}^H_R[u]\right]$ and ${\mathbf{\Theta }}_R^\textsf{R}= \mathbb E\left[{\bf r}_R[u]{\bf r}^H_S[u]\right]$ denoting the covariance matrices of distortion noises ${\bf t}_S[u]$, ${\bf t}_R[u]$ and ${\bf r}_R[u]$, respectively.

\emph{1) Achievable Rate of $S_k \to R$ Channel:} Since ${\bf w}_{R,k}$ is only related with $\textsf{R}_{SR,k}$, it can be optimized separately to maximize $\textsf{R}_{SR,k}$. From (\ref{3-0-2}), the optimization problem is equivalent to the generalized Rayleigh quotient problem, which can be solved as ${{\mathbf{w}}_{R,k}} = \frac{{{{\bf{Q}}^{ - 1}}{{\bf{h}}_{SR,k}}}}{{{\rm{ }}\left\| {{{\bf{Q}}^{ - 1}}{{\bf{h}}_{SR,k}}} \right\|}}$. The resultant upper bound on $\textsf{R}_{SR,k}$ (denoted by $\textsf{R}_{SR,k}^{{\text{Upper}}}$) can be expressed as
\begin{equation}\label{3-1}
\begin{aligned}
& \textsf{R}_{SR,k}^{{\text{Upper}}} = \frac{{T - 2K\tau }}{T}{\log _2}\left( {1 + {E_{S,k}}{\mathbf{h}}_{SR,k}^H{{\mathbf{Q}}^{ - 1}}{{\mathbf{h}}_{SR,k}}} \right) \\
& = \frac{{T - 2K\tau }}{T}{\log _2}\left( {1 + \frac{{{E_{S,k}}{\mathbf{h}}_{SR}^H{\mathbf{Q}}_k^{ - 1}{{\mathbf{h}}_{SR,k}}}}{{1 + {\nu _{S,k}}{E_{S,k}}{{\mathbf{h}}_{SR,k}}{\mathbf{Q}}_k^{ - 1}{\mathbf{h}}_{SR,k}^H}}} \right)
\end{aligned}
\end{equation}
where $\mathbf{Q}_k$ is defined as ${{\mathbf{Q}}_k} = {\mathbf{Q}} - {\nu _{S,k}}{p_{S,k}}{{\mathbf{h}}_{SR,k}}{\mathbf{h}}_{SR,k}^H$ and is independent of ${\bf h}_{SR,k}$. The second step follows from the matrix inversion lemma \cite{IEEEhowto:Silverstein95}.

\begin{theorem}\label{t1}
Assuming that the assumptions \textbf{A1} and \textbf{A2} hold. As $N_S=N_D=1$, the achievable rate of $S_k \to R$ channel in the large-$(N_R,N_T)$ regime is bounded as
\begin{equation}\label{3-2}
\begin{aligned}
& \textsf{R}_{SR,k} \le \textsf{R}_{SR,k}^{{\text{Upper}}} =  \frac{{T - 2K\tau }}{T} \\
& \times {\log _2}\left( {1 + \frac{{{E_{S,k}}{\beta _{SR,k}}{\text{Tr}}\left( {{{\mathbf{C}}_{SR,k}}{{\mathbf{\Psi }}_k}} \right)}}{{1 + {\nu _{S,k}}{E_{S,k}}{\beta _{SR,k}}{\text{Tr}}\left( {{{\mathbf{C}}_{SR,k}}{{\mathbf{\Psi }}_k}} \right)}}} \right)
\end{aligned}
\end{equation}
where $\rho =\beta_{EI}{\text{Tr}} \left( {{{{\mathbf{\tilde C}}}_{EI}}{\mathbf{\Theta }}_R^\textsf{T}} \right) + 1$, ${{{\mathbf{\Psi }}_k}\left( \rho  \right)}$ is determined by the following fixed-point algorithm with $\delta _{k,l}^{\left( 0 \right)}\left( \rho  \right) = \frac{1}{\rho }$
\begin{equation}\label{3-3}
\begin{array}{ll}
  {{\mathbf{\Psi }}_k}\left( \rho  \right) = {\left( {\frac{1}{{{N_R}}}\sum\limits_{l = 1,l \ne k}^K {\frac{{{\nu _{S,l}}{p_{S,l}}{\beta _{SR,l}}{{\mathbf{C}}_{SR,l}}}}{{1 + {\delta _{k,l}}\left( \rho  \right)}} + {\mathbf{\Theta }}_R^\textsf{R} + \rho {{\mathbf{I}}_{{N_R}}}} } \right)^{ - 1}}\hfill \\
  {\mathbf{\Theta }}_R^\textsf{R }  = {\mu _R}\sum\limits_{l = 1}^K {{\nu _{S,l}}{p_{S,l}}{\beta _{SR,l}}{{\mathbf{I}}_{{N_R}}}}  + {\beta _{EI}}{\text{Tr}}\left( {{{{\mathbf{\tilde C}}}_{EI}}{\mathbf{\Theta }}_R^\textsf{T}} \right){{\mathbf{I}}_{{N_R}}} + {\mu _R}{{\mathbf{I}}_{{N_R}}}\hfill \\
  {\delta _{k,l}}\left( \rho  \right) = \mathop {\lim }\limits_{n \to \infty } \delta _{k,l}^{\left( n \right)}\left( \rho  \right)  \hfill \\
 \delta _{k,l}^{\left( n \right)}\left( \rho  \right) = \frac{{{\nu _{S,l}}{p_{S,l}}{\beta _{SR,l}}}}{{{N_R}}} \\
 \times{\text{Tr}}\left( {{{\mathbf{C}}_{SR,l}}{{\left( {\frac{1}{{{N_R}}}\sum\limits_{j = 1,j \ne k}^K {\frac{{{\nu _{S,j}}{p_{S,j}}{\beta _{SR,j}}{{\mathbf{C}}_{SR,j}}}}{{1 + \delta _{k,l}^{\left( {n - 1} \right)}\left( \rho  \right)}}}  + {\mathbf{\Theta }}_R^\textsf{R} + \rho {{\mathbf{I}}_{{N_R}}}} \right)}^{ - 1}}} \right) \hfill \\
\end{array}
\end{equation}
\end{theorem}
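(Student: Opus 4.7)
The plan is to transform the random quantity ${\mathbf{h}}_{SR,k}^H{\mathbf{Q}}_k^{-1}{\mathbf{h}}_{SR,k}$ appearing in $\textsf{R}_{SR,k}^{\mathrm{Upper}}$ into its deterministic limit by means of two standard random-matrix-theory devices: a trace (quadratic-form) lemma, and a Silverstein-type fixed-point equation for the resolvent. Since $x \mapsto Ex/(1+\nu Ex)$ and $\log_2(1+\cdot)$ are continuous, it suffices to identify the almost sure limit of ${\mathbf{h}}_{SR,k}^H{\mathbf{Q}}_k^{-1}{\mathbf{h}}_{SR,k}$ in order to establish (\ref{3-2}).

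First, writing ${\mathbf{h}}_{SR,k} = \sqrt{\beta_{SR,k}}\,{\mathbf{C}}_{SR,k}^{1/2}{\mathbf{x}}_{SR,k}$ with ${\mathbf{x}}_{SR,k}$ i.i.d.\ ${\cal CN}(0,1)$, and noting that ${\mathbf{Q}}_k$ is by construction independent of ${\mathbf{h}}_{SR,k}$, I would invoke the Bai--Silverstein trace lemma to obtain
\[
\tfrac{1}{N_R}\bigl({\mathbf{h}}_{SR,k}^H{\mathbf{Q}}_k^{-1}{\mathbf{h}}_{SR,k} - \beta_{SR,k}\,\mathrm{Tr}({\mathbf{C}}_{SR,k}{\mathbf{Q}}_k^{-1})\bigr) \xrightarrow{\mathrm{a.s.}} 0,
\]
using A1 together with the positive definiteness of ${\mathbf{\Theta}}_R^\textsf{R} + {\mathbf{I}}_{N_R}$ to bound $\|{\mathbf{Q}}_k^{-1}\|$ almost surely. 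The remaining task is to find a deterministic equivalent for $\tfrac{1}{N_R}\mathrm{Tr}({\mathbf{C}}_{SR,k}{\mathbf{Q}}_k^{-1})$.

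Next, I would decompose ${\mathbf{Q}}_k$ into its three structurally distinct pieces: the rank-$K$ source-distortion outer-product sum $\sum_{l\neq k}\nu_{S,l}p_{S,l}{\mathbf{h}}_{SR,l}{\mathbf{h}}_{SR,l}^H$, the echo-interference Gram ${\mathbf{H}}_{EI}{\mathbf{\Theta}}_R^\textsf{T}{\mathbf{H}}_{EI}^H$, and the noise-level regularization ${\mathbf{\Theta}}_R^\textsf{R} + {\mathbf{I}}_{N_R}$. Under A2, a direct evaluation of the diagonal of $\mathbb{E}[{\mathbf{y}}_R{\mathbf{y}}_R^H]$ shows ${\mathbf{\Theta}}_R^\textsf{R}$ is a scalar multiple of ${\mathbf{I}}_{N_R}$, explaining its closed form in the statement. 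For the EI Gram, as $N_T \to \infty$ I would apply Gaussian concentration on the columns of ${\mathbf{X}}_{EI}$ (trace lemma applied column by column) to show that, against any bounded deterministic matrix, ${\mathbf{H}}_{EI}{\mathbf{\Theta}}_R^\textsf{T}{\mathbf{H}}_{EI}^H$ is asymptotically equivalent in the trace sense to $\beta_{EI}\mathrm{Tr}(\tilde{\mathbf{C}}_{EI}{\mathbf{\Theta}}_R^\textsf{T})\,{\mathbf{C}}_{EI}$; the Toeplitz unit-diagonal structure of A2 then collapses this contribution into the scalar $\rho = \beta_{EI}\mathrm{Tr}(\tilde{\mathbf{C}}_{EI}{\mathbf{\Theta}}_R^\textsf{T}) + 1$ that multiplies ${\mathbf{I}}_{N_R}$ in the fixed-point equation. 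With this reduction, ${\mathbf{Q}}_k$ is asymptotically equivalent to a regularized correlated-Wishart matrix of Silverstein type, so the standard deterministic-equivalent theorem (e.g.\ Hoydis et al.\ Thm.~1, Wagner et al.\ Thm.~3) evaluated at $z = -\rho$ delivers $\tfrac{1}{N_R}\mathrm{Tr}({\mathbf{C}}_{SR,k}{\mathbf{Q}}_k^{-1}) - \tfrac{1}{N_R}\mathrm{Tr}({\mathbf{C}}_{SR,k}{\mathbf{\Psi}}_k) \to 0$, with ${\mathbf{\Psi}}_k$ and the scalars $\delta_{k,l}$ characterized by the stated fixed-point system; existence, uniqueness, and convergence of the iteration from $\delta_{k,l}^{(0)}(\rho)=1/\rho$ follow from the standard Banach contraction argument used in Silverstein's original proof.

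The main obstacle is the joint handling of the two qualitatively different random matrices inside ${\mathbf{Q}}_k$: the rank-$K$ sum of source-distortion outer products has Silverstein structure driven by $K$ correlated ``columns'', whereas the EI Gram has full-rank Marchenko--Pastur structure driven by $N_T$ columns. The cleanest route is to decouple the two by first replacing the EI term by its scalar deterministic equivalent and only then invoking Silverstein's theorem on the remaining correlated-Wishart piece. This decoupling rests crucially on A2 (Toeplitz, unit-diagonal covariances), which is what allows the trace of ${\mathbf{C}}_{SR,k}$ against the EI contribution to collapse to a scalar rather than dragging ${\mathbf{C}}_{EI}$ into the fixed point --- precisely the reason the authors restrict themselves to the ULA setting in their footnote, and the step where a fully rigorous argument needs the most care.
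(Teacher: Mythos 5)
Your overall architecture matches the paper's: Appendix-B likewise computes $\mathbf{\Theta}_R^{\textsf{R}}$ from the model, reduces $\mathbf{h}_{SR,k}^H\mathbf{Q}_k^{-1}\mathbf{h}_{SR,k}$ via the quadratic-form (trace) lemma [17, Lemma 4(ii)], and then invokes the deterministic-equivalent theorem [17, Theorem 1] to obtain $\mathbf{\Psi}_k$ and the $\delta_{k,l}$ fixed point, with convergence delegated to Wagner et al. However, there is one genuine gap, and it sits exactly where you yourself flag the need for care: the EI Gram $\mathbf{H}_{EI}\mathbf{\Theta}_R^{\textsf{T}}\mathbf{H}_{EI}^H$ inside $\mathbf{Q}_k$. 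Its deterministic equivalent is $\beta_{EI}\operatorname{Tr}(\tilde{\mathbf{C}}_{EI}\mathbf{\Theta}_R^{\textsf{T}})\,\mathbf{C}_{EI}$, and \textbf{A2} (Hermitian Toeplitz with unit diagonal) does \emph{not} collapse $\mathbf{C}_{EI}$ to the identity --- an exponential-correlation Toeplitz matrix with unit diagonal is far from $\mathbf{I}_{N_R}$, and $\operatorname{Tr}(\mathbf{C}_{SR,k}(\cdots+c\,\mathbf{C}_{EI}+\cdots)^{-1})\neq\operatorname{Tr}(\mathbf{C}_{SR,k}(\cdots+c\,\mathbf{I}_{N_R}+\cdots)^{-1})$ in general. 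Followed literally, your route leaves $\mathbf{C}_{EI}$ inside the fixed-point equation and therefore does not yield the stated $\mathbf{\Psi}_k$, in which the EI contribution appears only through the scalar $\rho\,\mathbf{I}_{N_R}$.

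The paper closes this gap with an extra, explicitly non-asymptotic step that your proposal is missing: it first replaces $\mathbf{H}_{EI}\mathbf{\Theta}_R^{\textsf{T}}\mathbf{H}_{EI}^H$ in $\mathbf{Q}_k$ by $\operatorname{diag}(\mathbf{H}_{EI}\mathbf{\Theta}_R^{\textsf{T}}\mathbf{H}_{EI}^H)$, arguing that discarding the off-diagonal structure only reduces the residual EI after combining and hence preserves the upper-bound property of $\textsf{R}_{SR,k}^{\text{Upper}}$; only \emph{then} does the unit-diagonal part of \textbf{A2} give $\frac{1}{N_T}\operatorname{diag}(\mathbf{H}_{EI}\mathbf{\Theta}_R^{\textsf{T}}\mathbf{H}_{EI}^H)\to\frac{1}{N_T}\beta_{EI}\operatorname{Tr}(\tilde{\mathbf{C}}_{EI}\mathbf{\Theta}_R^{\textsf{T}})\mathbf{I}_{N_R}$. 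A companion approximation of the same flavour ($\mathbb{E}[\mathbf{x}_R\mathbf{x}_R^H]\approx\operatorname{diag}(\mathbb{E}[\mathbf{x}_R\mathbf{x}_R^H])=\tfrac{1}{\nu_R}\mathbf{\Theta}_R^{\textsf{T}}$, again declared as yielding a new upper bound) is what decouples $\mathbf{\Theta}_R^{\textsf{R}}$ from $\mathbf{W}_T$ and produces its closed form in (\ref{3-3}); you gesture at the diagonal of $\mathbb{E}[\mathbf{y}_R\mathbf{y}_R^H]$ but do not supply this decoupling. In short: the theorem is not a pure deterministic-equivalent statement about the original $\mathbf{Q}_k$; it is an upper bound obtained after two diagonal-replacement relaxations, and without naming those relaxations your argument cannot arrive at (\ref{3-2})--(\ref{3-3}) as stated.
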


\begin{proof}
See Appendix-B.
\end{proof}

The convergence of the fixed-point algorithm in Theorem \ref{t1} has been proved in \cite{IEEEhowto:WagnerIT12}.

\emph{2) Achievable Rate of $R \to D_k$ Channel:} From (\ref{3-0-2}), the achievable rates of $S_k \to R$ and $R \to D_k$ channels are coupled through ${{\bf\Theta } _R^{\textsf{T}}}= \mathbb E\left[{\bf t}_R[u]{\bf t}^H_R[u]\right]= {\rm diag}\left(\mathbb E\left[{\bf W}_T {\bf \Lambda}_R{\bf W}_T^H\right]\right)$, which makes the design of ${\bf w}_{T,k}$ very challenging. However, with the following theorem we show that this coupling disappears in the large-$(N_R,N_T)$ regime, which allows a simple upper bound for achievable rate of $R \to D_k$ channel.

\begin{theorem}\label{t2}
Assuming that the assumptions \textbf{A1} and \textbf{A2} hold. As $N_S=N_D=1$, the achievable rate of $S_k \to R$ channel is independent of ${\bf w}_{T,k}$. The achievable rate of $R \to D_k$ channel in the large-$(N_R,N_T)$ regime is bounded as (\ref{3-3-1}), which is achieved by the eigen BF ${{\mathbf{w}}_{T,k}} = {{{{\mathbf{h}}_{RD,k}}} \mathord{\left/
 {\vphantom {{{{\mathbf{h}}_{RD,k}}} {\left\| {{{\mathbf{h}}_{RD,k}}} \right\|}}} \right.
 \kern-\nulldelimiterspace} {\left\| {{{\mathbf{h}}_{RD,k}}} \right\|}}$.
\end{theorem}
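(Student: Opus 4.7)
My plan is to split the proof according to the two assertions in the statement. Both rest on the Kronecker channel model together with assumptions \textbf{A1} and \textbf{A2}, but the first is a purely algebraic identity that holds for any $(N_R,N_T)$, while the second combines a Cauchy--Schwarz step with a trace-lemma concentration argument that uses the large-antenna regime.

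For the first claim, I exploit the diagonal structure of ${\bf \Theta}_R^{\mathsf{T}} = \nu_R\,\mathrm{diag}\!\bigl(\sum_l E_{R,l}{\bf w}_{T,l}{\bf w}_{T,l}^H\bigr)$. Because the off-diagonals of ${\bf \Theta}_R^{\mathsf{T}}$ vanish, $\mathrm{Tr}(\tilde{\bf C}_{EI}{\bf \Theta}_R^{\mathsf{T}}) = \sum_i [\tilde{\bf C}_{EI}]_{ii}[{\bf \Theta}_R^{\mathsf{T}}]_{ii} = \mathrm{Tr}({\bf \Theta}_R^{\mathsf{T}}) = \nu_R\sum_l E_{R,l}$, where the middle equality uses the unit-diagonal property of $\tilde{\bf C}_{EI}$ in \textbf{A2} and the last uses $\|{\bf w}_{T,l}\|=1$. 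Propagating this into the definitions of $\rho$, ${\bf \Theta}_R^{\mathsf{R}}$, and the fixed-point equation for ${\bf \Psi}_k$ in Theorem~\ref{t1} shows that $\textsf{R}_{SR,k}^{\mathrm{Upper}}$ depends on the relay transmit design only through the aggregate powers $\{E_{R,l}\}$ and not through the individual vectors ${\bf w}_{T,k}$, which is the first assertion.

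For the second claim, the Cauchy--Schwarz inequality immediately bounds the numerator of the SINR in (\ref{3-0-2}) by $E_{R,k}|{\bf w}_{T,k}^H{\bf h}_{RD,k}|^2 \le E_{R,k}\|{\bf h}_{RD,k}\|^2$, with equality exactly at ${\bf w}_{T,k} = {\bf h}_{RD,k}/\|{\bf h}_{RD,k}\|$. I then decompose the denominator into pieces that depend on ${\bf w}_{T,k}$ and pieces that do not: expanding $\mathbb{E}[\|{\bf r}_{D,k}\|^2]$ produces a term $\mu_{D,k}E_{R,k}|{\bf w}_{T,k}^H{\bf h}_{RD,k}|^2$ that is proportional to the numerator $u$ itself, the $l{=}k$ diagonal contribution to ${\bf h}_{RD,k}^H{\bf \Theta}_R^{\mathsf{T}}{\bf h}_{RD,k}$ equals $\nu_R E_{R,k}\sum_i|[{\bf w}_{T,k}]_i|^2|[{\bf h}_{RD,k}]_i|^2 \le \nu_R E_{R,k}\|{\bf h}_{RD,k}\|_\infty^2 = o(\|{\bf h}_{RD,k}\|^2)$ under the Kronecker Gaussian model, and all remaining terms (indexed by $l\neq k$, together with the background noise) concentrate on a deterministic constant $D_0$ by a trace-lemma/Tchebychev argument using $\mathbb{E}|[{\bf h}_{RD,k}]_i|^2 = \beta_{RD,k}$ (from \textbf{A2}) and the uniform variance bound provided by \textbf{A1}.

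Collecting the pieces, the SINR is asymptotically $u/(\mu_{D,k}u + o(\|{\bf h}_{RD,k}\|^2) + D_0)$, which is a strictly increasing function of $u$ whose residual dependence on ${\bf w}_{T,k}$ outside $u$ is negligible in the large-$(N_R,N_T)$ regime; hence the Cauchy--Schwarz maximizer, namely the eigen BF, also maximizes the SINR. Substituting ${\bf w}_{T,k} = {\bf h}_{RD,k}/\|{\bf h}_{RD,k}\|$ together with $\|{\bf h}_{RD,k}\|^2 \to N_T\beta_{RD,k}$ (trace lemma plus \textbf{A2}) collapses all remaining terms into the closed-form bound~(\ref{3-3-1}). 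I expect the concentration step to be the main technical obstacle, specifically verifying that the $l{=}k$ self-term in ${\bf \Theta}_R^{\mathsf{T}}$, which is the only piece that genuinely varies with ${\bf w}_{T,k}$, contributes only $o(N_T)$ to the denominator while the numerator can reach $\Theta(N_T)$; this asymptotic dominance is exactly what later produces the ``ceiling effect'' the paper addresses in Section~V.
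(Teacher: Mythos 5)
Your proposal is correct and follows essentially the same route as the paper's Appendix-C: the first claim reduces to the identity ${\rm{Tr}}( {{{{\mathbf{\tilde C}}}_{EI}}{\mathbf{\Theta }}_R^\textsf{T}} ) = {\rm{Tr}}( {{\mathbf{\Theta }}_R^\textsf{T}} ) = \nu_R\sum_l E_{R,l}$ via the unit-diagonal property in \textbf{A2}, and the second claim maximizes the numerator of (\ref{3-0-2}) after observing that the denominator is, up to a term proportional to the numerator itself, asymptotically a deterministic constant obtained from the trace lemma. Your explicit $\ell_\infty$ bound on the $l=k$ self-term of ${\bf h}_{RD,k}^H{\mathbf{\Theta }}_R^\textsf{T}{\bf h}_{RD,k}$ (which is the only denominator piece not independent of ${\bf w}_{T,k}$, so the trace lemma does not directly apply to it) is a small point of extra rigor that the paper glosses over, but it does not change the argument.
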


\begin{proof}
See Appendix-C.
\end{proof}

\newcounter{mytempeqncnt}
\begin{figure*}
\normalsize
\setcounter{mytempeqncnt}{\value{equation}}
\setcounter{equation}{10}
\begin{equation}\label{3-3-1}
\textsf{R}_{RD,k} \le \textsf{R}_{RD,k}^{\rm Upper} \!=\! \frac{{T - 2K\tau }}{T}  {\log _2} \left( 1+ \frac{{{N_T}{\beta _{RD,k}}{E_{R,k}}}}{{{\nu _R}{\beta _{RD,k}}\sum_{l = 1}^K {{E_{R,l}}}  + {\mu _{D,k}}\left( {{N_T}{\beta _{RD,k}}{E_{R,k}} + {\nu _R}{\beta _{RD,k}}\sum_{l = 1}^K {{E_{R,l}}}  + 1} \right) + 1}}\right)
\end{equation}
\setcounter{equation}{\value{mytempeqncnt}}
\hrule
\end{figure*}

With Theorems \ref{t1} and \ref{t2}, the upper bound of end-to-end achievable in the large-$(N_R,N_T)$ regime can be expressed as
\setcounter{equation}{11}
\begin{equation}\label{3-4}
\textsf{R}_{k}^{{\text{Upper}}} = \min\left\{\textsf{R}_{SR,k}^{{\text{Upper}}}, \textsf{R}_{RD,k}^{{\text{Upper}}}\right\}
\end{equation}

\begin{remark}\label{rm1}
The bound in (\ref{3-4}) is derived by assuming linear processing and fixed transmit powers at sources and relay, which is in general not capacity achieving. One exception is when $K \ll \min\{N_R,N_T\}$ and $\min\{N_R,N_T\}\to \infty$. From Theorem \ref{t2}, the design of receive and transmit BF matrices is decoupled as $\min\{N_R,N_T\}\to \infty$. Moreover, the distortion noises are assumed to be circularly symmetric complex Gaussian distributed and independent of the desired signal. Thus, based on the results in \cite{IEEEhowto:TelatarETT99,IEEEhowto:BjCL13}, linear processing along with power control is sufficient to achieve the capacity. When $K \ll \min\{N_R,N_T\}$, the terms $\frac{1}{{{N_R}}}\sum_{l = 1,l \ne k}^K {\frac{{{\nu _{S,l}}{E_{S,l}}{\beta _{SR,l}}{{\mathbf{C}}_{SR,l}}}}{{1 + {\delta _{k,l}}\left( \rho  \right)}}} $ and $\frac{1}{{{N_R}}}\sum_{j = 1,j \ne k}^K {\frac{{{\nu _{S,j}}{E_{S,j}}{\beta _{SR,j}}{{\mathbf{C}}_{SR,j}}}}{{1 + \delta _{k,l}^{\left( {n - 1} \right)}\left( \rho  \right)}}}$ in (\ref{3-3}) vanish. By neglecting the low order term (in $\min\{N_R,N_T\}$), the bound in (\ref{3-4}) reduces to $\frac{{T - 2K\tau }}{T}{\log _2}( {1 + \min \{ {{{{v_{S,k}^{-1}}}},{{{\mu _{D,k}^{-1}}}}} \}} )$, which is independent of $E_{S,k}$ and $E_{R,k}$. Thus, (\ref{3-4}) is the upper bound for capacity of $S_k\to R\to D_k$ channel as $K \ll \min\{N_R,N_T\}$ and $\min\{N_R,N_T\}\to \infty$.
\end{remark}


\subsection{Limitation with Hardware Impairments}

Different from Remark \ref{rm1}, we consider the general case in which $K$ can be either fixed or scales with $\min\{N_R,N_T\}$. With Theorem \ref{t1} and Theorem \ref{t2}, the effect of hardware impairments is still hard to analysis since ${{\mathbf{\Psi }}_k} $ is not in closed-form. To make the analysis tractable, we assume ${{\mathbf{C}}_{SR,l}} = {{\mathbf{C}}_{SR}}$ $\forall l\in\{1,\cdots,K\}$. Using eigenvalue decomposition ${{\mathbf{C}}_{SR}} = {\mathbf{U}}{\Sigma }_{SR}{{\mathbf{U}}^H}$ on $\delta _{k,l}^{\left( n \right)}(\rho)$ in (\ref{3-3}), we have
\begin{equation}\label{3-5}
\begin{aligned}
& \delta _{k,l}^{\left( n \right)}\left( \rho  \right) = \frac{{{\nu _{S,l}}{p_{S,l}}{\beta _{SR,l}}}}{{{N_R}}} {\text{Tr}}\left( {\Sigma _{SR}} \right. \\
& \left. \times{{\left( {\frac{1}{{{N_R}}}\sum\limits_{j = 1,j \ne k}^K {\frac{{{\nu _{S,j}}{p_{S,j}}{\beta _{SR,j}}}}{{1 + \delta _{k,l}^{\left( {n - 1} \right)}\left( \rho  \right)}}} {\Sigma _{SR}} + {\mathbf{\Theta }}_R^\textsf{R} + \rho {{\mathbf{I}}_{{N_R}}}} \right)}^{ - 1}} \right)
\end{aligned}
\end{equation}
By examining (\ref{3-5}) and the expression of ${\mathbf{\Theta }}_R^\textsf{R }$ in (\ref{3-3}), $\delta _{k,l}^{\left( n \right)}(\rho) $ is an ${\cal O}(1)$ term. Note that this is valid for arbitrary $n>0$. Thus, we can conclude that $ {\delta _{k,l}}(\rho)={\cal O}(1)$. Replacing ${\delta _{k,l}}\left( \rho  \right)$ in ${{\bf{\Psi }}_k}$ with the symbol ${\cal O}(1)$ and substituting the result into (\ref{3-2}), $\textsf{R}_{SR,k}$ can be written as (\ref{3-6}).

\begin{figure*}
\normalsize
\setcounter{mytempeqncnt}{\value{equation}}
\setcounter{equation}{13}
\begin{equation}\label{3-6}
\begin{aligned}
\textsf{R}_{SR,k}^{{\text{Upper}}} =\frac{{T - 2K\tau }}{T}{\log _2}\left( {1 + \frac{{{E_{S,k}}{\beta _{SR,k}}\sum\limits_{l = 1}^{{N_R}} {{\lambda _l}\left( {{{\mathbf{C}}_{SR}}} \right){{\left( {\frac{{{\lambda _l}\left( {{{\mathbf{C}}_{SR}}} \right)}}{{{N_R}}}\sum\limits_{l = 1,l \ne k}^K {\frac{{{\nu _{S,l}}{E_{S,l}}{\beta _{SR,l}}}}{{1 + O\left( 1 \right)}} + \frac{{{\text{Tr}}\left( {{\mathbf{\Theta }}_R^\textsf{R}} \right)}}{{{N_T}}} + \rho } } \right)}^{ - 1}}} }}{{1 + {\nu _{S,k}}{p_{S,k}}{\beta _{SR,k}}\sum\limits_{l = 1}^{{N_R}} {{\lambda _l}\left( {{{\mathbf{C}}_{SR}}} \right){{\left( {\frac{{{\lambda _l}\left( {{{\mathbf{C}}_{SR}}} \right)}}{{{N_R}}}\sum\limits_{l = 1,l \ne k}^K {\frac{{{\nu _{S,l}}{E_{S,l}}{\beta _{SR,l}}}}{{1 + O\left( 1 \right)}} + \frac{{{\text{Tr}}\left( {{\mathbf{\Theta }}_R^\textsf{R}} \right)}}{{{N_T}}} + \rho } } \right)}^{ - 1}}} }}} \right)
\end{aligned}
\end{equation}
\setcounter{equation}{\value{mytempeqncnt}}
\hrule
\end{figure*}

For convenience, we further assume $\left\{ {{\beta _{SR,l}},{\beta _{RD,l}},{E_{S,l}},{E_{R,l}},{\nu _{S,l}},{\mu _{D,l}}} \right\} = \left\{ {{\beta _{SR}},{\beta _{RD}},{E_S},{E_R},{\nu _S},{\mu _D}} \right\}$. Note that this assumption does not effect the basis conclusions of the analysis. By neglecting the low order terms (in $\min\{N_R,N_T\}$) in (\ref{3-6}) and (\ref{3-3-1}), the upper bound on end-to-end achievable rate reduces to
\setcounter{equation}{14}
\begin{equation}\label{3-7}
\begin{aligned}
&\textsf{R}_{k}^{{\text{Upper}}} = \frac{{T - 2K\tau }}{T} \\
& \times {\log _2}\left( 1 + \min \left\{ \frac{1}{{{\nu _S} + \frac{K}{{{N_R}}}{\mu _R}{\nu _S} + \frac{K}{{{N_R}}}{\nu _R}\frac{{{E_R}{\beta _{EI}}}}{{{E_S}{\beta _{SR}}}}}}, \right. \right. \\
& \hspace{8em}\left.\left. \frac{1}{{{\mu _{D,k}} + \frac{K}{{{N_T}}}{\nu _R}\left( {1 + {\mu _{D,k}}} \right)}} \right\} \right)
\end{aligned}
\end{equation}

One can make several observations from (\ref{3-7}):

1) As $K\ll \min\{N_R,N_T\}$, the effect of hardware impairments at the relay and EI disappears in the large-$(N_R,N_T)$ regime and the upper bound converges to that in Remark \ref{rm1}. The end-to-end achievable rate is limited by distortion noises caused by hardware impairments at sources and destinations. The explanation is that the transmit distortion can be viewed as an interference signal the same channel as data of $S_k$ from (\ref{2-5}). Thus, it cannot be suppressed after combining by ${\bf w}_{R,k}$. The power of distortion due to receive imperfection of $D_k$ is proportional to that of desired receive signal (after eigen BF) and also cannot be reduced by the transmit BF scheme of relay. Thus, there is a finite ceiling on end-to-end achievable rate as $\min\{N_R,N_T\}\to \infty$. In fact, this poses a major limitation on MM-FDR with single antenna sources and destinations, which degrades the gain of massive array of relay greatly.

2) As $K$ scales with $\min\{N_R,N_T\}$, the effect of hardware impairments at the relay and EI is not negligible. Since the EI channel is typically stronger than desired channels, the $S_k\to R$ channel becomes the bottleneck of end-to-end achievable rate. This indicates that it is of great importance to suppress EI to avoid the bottleneck effect.

\section{Hardware Impairments Aware Transceiver}

In this section, we consider a general model where sources and destinations are equipped with an arbitrary number of antennas. We first present a low complexity hardware impairments aware transceiver scheme (referred to as HIA scheme). Then the achievable rate of MM-FDR with proposed HIA scheme (HIA-MM-FDR) is analyzed in the large-$(N_R,N_T)$ regime and the effect of hardware impairments is discussed.

\subsection{Transceiver Scheme Description}

The optimal transceiver scheme for MM-FDR to optimize achievable rate is hard to find since the transmissions of $S_k \!\to \!R$ and $R \!\to\! D_k$ channels are deeply coupled due to the presence of EI and distortion noises. Even without EI and distortion noises, the problem has been proved NP-hard \cite{IEEEhowto:NguyenTSP13}. As ${\bf H}_{EI}$ is an $N_R\! \times \!N_T$ matrix, the learning of ${\bf H}_{EI}$ requires training sequence with length no less than $\min\{N_R,N_T\}$. This task is prohibitive as the duration of channel coherent time is limited. Thus, we assume that passive EI cancellation \cite{IEEEhowto:EverettTWC13} has been used and instantaneous EI channel is not available. A low complexity HIA scheme is proposed to mitigate EI (after passive cancellation) and distortion noises. As will be shown in simulations, the HIA scheme along with passive EI cancellation provides satisfactory gain compared to HDR.

\emph{1) Transceiver Design at the Relay:} As the variance of EI channel is typically stronger than desired channels, it is of great importance to control the EI power. From section IV-B, EI suppression is also important in reducing distortion noise due to hardware impairments of relay. We consider a two-stage BF scheme at the relay. The receive and transmit BF matrices are expressed as ${\bf W}_R = {\bf P}_R{\bf \underline{W}}_R$ and ${\bf W}_T = {\bf P}_T{\bf \underline{W}}_T$.

\emph{Outer BF matrices:} The outer BF matrices ${\bf P}_R\!  \in \! {\mathbb C}^{N_R\times A_R}$ ($K \! \le \! A_R \! \le\!  N_R$) and ${\bf P}_T \! \in \! {\mathbb C}^{N_T\times A_T}$ ($K \! \le  \! A_T \! \le  \! N_T$) are designed to suppress EI. $A_R$ and $A_T$ are design parameters. Due to the lack of ${\bf H}_{EI}$, ${\bf P}_R$ and ${\bf P}_T$ are allowed to depend only on the statistical knowledge of EI channels. With (\ref{2-6}), we design an average EI power minimization problem, i.e.,
\begin{equation}\label{4-1}
\begin{aligned}
        & \left({\bf \dot{P}}_R,{\bf \dot{P}}_T\right) = \arg \mathop {\min}\limits_{{{\mathbf{P}}_R},{{\mathbf{P}}_T}}\\
         & \mathbb E\bigg[\underbrace{ {{\mathbf{\underline{w}}}_{R,k}^H{\mathbf{P}}_R^H{{\mathbf{H}}_{EI}}\left( {{{\mathbf{P}}_T}{{\mathbf{\underline{W}}}_T}{\Lambda _R}{\mathbf{\underline{W}}}_T^H{\mathbf{P}}_T^H + {\mathbf{\Theta }}_R^\textsf{T}} \right){\mathbf{H}}_{EI}^H{{\mathbf{P}}_R}{{\mathbf{\underline{w}}}_{R,k}}} }_{{\textsf{EI}}_k}\bigg]
 \end{aligned}
\end{equation}
In (\ref{4-1}), the design of ${\bf P}_R$ and ${\bf P}_T$ is coupled with the inner BF matrix, which makes closed-form solution inaccessible. Thus, it is of great interest to find new target function to decouple the problem. For such, using Cauchy-Schwarz inequality twice, an upper bound on average EI power can be obtained as
\begin{equation}\label{4-2}
\begin{aligned}
&\textsf{EI}_k  \le \textsf{EI}_k^{\rm Upper} \\
& = \left\| {{{\mathbf{\underline{w}}}_{R,k}}} \right\|^2{\text{Tr}}\left( {{{\mathbf{\underline{W}}}_T}{\Lambda _R}{\mathbf{\underline{W}}}_T^H} \right){\text{Tr}}\left( {{\mathbf{P}}_R^H{{\mathbf{H}}_{EI}}{{\mathbf{P}}_T}{\mathbf{P}}_T^H{\mathbf{H}}_{EI}^H{{\mathbf{P}}_R}} \right)  \\
& + {\nu _R}\sum\limits_{l = 1}^K {{E_{R,k}}}  \left\| {{{\mathbf{\underline{w}}}_{R,k}}} \right\|^2{\text{Tr}}\left( {{\mathbf{P}}_R^H{{\mathbf{H}}_{EI}}{\mathbf{H}}_{EI}^H{{\mathbf{P}}_R}} \right) \\
\end{aligned}
\end{equation}
Then the problem (\ref{4-1}) can be rewritten approximately as
\begin{equation}\label{4-3}
\begin{aligned}
  & \left( {{{\mathbf{\dot{P}}}_R},{{\mathbf{\dot{P}}}_T}} \right)  = {\text{arg}}\mathop {{{\min}}}\limits_{{{\mathbf{P}}_R},{{\mathbf{P}}_T}} \left\{ {\text{Tr}}\left( {{{\mathbf{\underline{W}}}_T}{\Lambda _R}{\mathbf{\underline{W}}}_T^H} \right) \right.    \\
  &  \times \mathbb E\left[ {{\text{Tr}}\left( {{\mathbf{P}}_R^H{{\mathbf{H}}_{EI}}{{\mathbf{P}}_T}{\mathbf{P}}_T^H{\mathbf{H}}_{EI}^H{{\mathbf{P}}_R}} \right)} \right] \\
  & \bigg. +{  {\nu _R}\sum\limits_{l = 1}^K {{E_{R,k}}} \mathbb E\left[ {{\text{Tr}}\left( {{\mathbf{P}}_R^H{{\mathbf{H}}_{EI}}{\mathbf{H}}_{EI}^H{{\mathbf{P}}_R}} \right)} \right]} \bigg\}  \\
  & = {\text{arg}}\mathop {{{\min}}}\limits_{{{\mathbf{P}}_R},{{\mathbf{P}}_T}} \left\{ {\text{Tr}}\left( {{{\mathbf{\underline{W}}}_T}{\Lambda _R}{\mathbf{\underline{W}}}_T^H} \right){\text{Tr}}\left( {{\mathbf{P}}_T^H{{{\mathbf{\tilde C}}}_{EI}}{{\mathbf{P}}_T}} \right)\right. \\
  & \bigg. \times{\text{Tr}}\left( {{\mathbf{P}}_R^H{{\mathbf{C}}_{EI}}{{\mathbf{P}}_R}} \right)+ {\nu _R}{N_T}\sum\limits_{l = 1}^K {{E_{R,k}}} {\text{Tr}}\left( {{\mathbf{P}}_R^H{{\mathbf{C}}_{EI}}{{\mathbf{P}}_R}} \right) \bigg\}
\end{aligned}
\end{equation}
The multiplicative term $\left\| {{{\mathbf{w}}_{R,k}}} \right\|^2$ has been removed since it has no effect on the optimal solution. The second step follows from a similar derivation of (\ref{A3}). From (\ref{4-3}), the columns of ${\bf \dot{P}}_R$ (${\bf \dot{P}}_T$) are composed of the eigenvectors corresponding to the $A_R$ ($A_T$) smallest eigenvalues of ${{\mathbf{C}}_{EI}}$ (${{\mathbf{\tilde C}}_{EI}}$), i.e.,
\begin{equation}\label{4-3-1}
\begin{aligned}
& {\bf \dot{P}}_R = \left[ {{{\mathbf{u}}_{{N_R} - {A_R} + 1}}\left( {{{\mathbf{C}}_{EI}}} \right),{{\mathbf{u}}_{{N_R} - {A_R} + 2}}\left( {{{\mathbf{C}}_{EI}}} \right), \cdots ,{{\mathbf{u}}_{{N_R}}}\left( {{{\mathbf{C}}_{EI}}} \right)} \right] \\
& {\bf \dot{P}}_T = \left[ {{{\mathbf{u}}_{{N_T} - {A_T} + 1}}\left( {{{{\mathbf{\tilde C}}}_{EI}}} \right),{{\mathbf{u}}_{{N_T} - {A_T} + 2}}\left( {{{{\mathbf{\tilde C}}}_{EI}}} \right), \cdots ,{{\mathbf{u}}_{{N_T}}}\left( {{{{\mathbf{\tilde C}}}_{EI}}} \right)} \right]
\end{aligned}
\end{equation}
In (\ref{4-3-1}), $A_R$ and $A_T$ can be viewed as parameters to balance the allowable EI power and available degree of freedom (DOF) for data transmission, which should be optimized with respect to specific metric. This will be considered in section VI.

\emph{Inner BF Matrix:} The inner BF matrices ${\bf \underline{W}}_R \in \mathbb C^{A_R\times K}$ and ${\bf \underline{W}}_T\in \mathbb C^{A_T\times K}$ are designed to realize the multi-user communication. ${\bf \underline{W}}_R $ and ${\bf \underline{W}}_T $ can be designed with different criteria, e.g., maximizing the desired signal power which corresponding to the eigen BF, or minimizing the MUI which corresponding to the zero-forcing (ZF) scheme \cite{IEEEhowto:NgoJSAC14}. We adopt the latter one since the ZF scheme is known to approach the asymptotic limit (in $\min\{N_R,N_T\}$) of achievable rate faster as the number of relay antennas increases \cite{IEEEhowto:CuiTWC14,IEEEhowto:SuraICC13}. For given BF vectors at sources and destinations, define effective channel vectors of $S_k\to R$ and $R \to D_k$ channels as ${\bf \underline{h}}_{SR,k}={\mathbf{P}}_R^H{{\mathbf{H}}_{SR,k}}{{\mathbf{p}}_{S,k}}$ and ${\bf \underline{h}}_{RD,k}={\mathbf{P}}_T^H{{\mathbf{H}}_{RD,k}}{{\mathbf{p}}_{D,k}}$, and let ${{\mathbf{\underline{H}}}_{SR}} = \left[ {{{\mathbf{\underline{h}}}_{SR,1}}, \cdots ,{{\mathbf{\underline{h}}}_{SR,K}}} \right]$ and ${{\mathbf{\underline{H}}}_{RD}} = \left[ {{{\mathbf{\underline{h}}}_{RD,1}}, \cdots ,{{\mathbf{\underline{h}}}_{RD,K}}} \right]$, the inner BF matrices can be written as
\begin{equation}\label{4-4}
\begin{aligned}
& {{\mathbf{\underline{W}}}_R} = {{\mathbf{\underline{H}}}_{SR}}{\left( {{\mathbf{\underline{H}}}_{SR}^H{{\mathbf{\underline{H}}}_{SR}}} \right)^{ - 1}} \\
& {{\mathbf{\underline{W}}}_T} = {{\mathbf{\underline{H}}}_{RD}}{\left( {{\mathbf{\underline{H}}}_{RD}^H{{\mathbf{\underline{H}}}_{RD}}} \right)^{ - 1}}{\mathbf{\Upsilon }}^{-1/2}
\end{aligned}
\end{equation}
where ${\mathbf{\Upsilon  }}$ is a diagonal normalized matrix with $[{\mathbf{\Upsilon  }}]_{l,l}={{\mathbf{e}}_l^H{{\left( {{\mathbf{\underline{H}}}_{RD}^H{{\mathbf{\underline{H}}}_{RD}}} \right)}^{ - 1}}{{\mathbf{e}}_l}}$. In practice, ${\bf H}_{SR,k}$ and ${\bf H}_{RD,k}$ should be estimated in order to compute (\ref{4-4}). This will be considered in section V-B. Note that (\ref{4-4}) requires $K \le \min\{N_R,N_T\}$ so that $A_R$ and $A_T$ can be selected to ensure the invertibility of ${{\mathbf{\underline{H}}}_{SR}^H{{\mathbf{\underline{H}}}_{SR}}}$ and ${{\mathbf{\underline{H}}}_{RD}^H{{\mathbf{\underline{H}}}_{RD}}}$.

\emph{2) Transceiver Design at the Sources and Destinations:} Similar to \cite{IEEEhowto:NgoJSAC14,IEEEhowto:HoydisJSAC13}, we assume no instantaneous knowledge of channels at sources and destinations. This is reasonable since the amount of feedback could be very huge and unaffordable in MM-FDR. However, it is assumed that the local statistical knowledge of channels (i.e., $({\bf C}_{SR,k},{\bf \tilde C}_{SR,k})$ for $S_k$ and $({\bf C}_{RD,k},{\bf \tilde C}_{RD,k})$ for $D_k$) can be obtained. As observed from (\ref{3-7}), when $N_S=N_D=1$, the achievable rates of $S_k \to R$ channel and $R \to D_k$ channel are limited by the transmit distortion noise at $S_k$ and receive distortion noise at $D_k$. This motivates us to design ${\bf p}_{S,k}$ and ${\bf p}_{D,k}$ to suppress these negative factors with the antenna arrays of $S_k$ and $D_k$.

\emph{Design of ${\bf p}_{S,k}$:} Intuitively, according to (\ref{2-5}) we can design the following problem
\begin{equation}\label{4-5-0}
\begin{aligned}
{{{\bf{\dot p}}}_{S,k}} & = \arg \mathop {\max}\limits_{\|{{\bf{p}}_{S,k}}\|=1} \frac{E_{S,k}{\mathbb E\left[ {{\rm{Tr}}\left( {{{\bf{H}}_{SR,k}}{{\bf{p}}_{S,k}}{\bf{p}}_{S,k}^H{\bf{H}}_{SR,k}^H} \right)} \right]}}{{\mathbb E\left[ {{\rm{Tr}}\left( {{\bf{H}}_{SR,k}{\bf{\Theta }}_{S,k}^\textsf{T}{{\bf{H}}_{SR,k}^H}} \right)} \right]}}\\
& = \arg \mathop {\max}\limits_{\|{{\bf{p}}_{S,k}}\|=1} \frac{{{\bf{p}}_{S,k}^H{{{\bf{\tilde C}}}_{SR,k}}{{\bf{p}}_{S,k}}}}{{{\nu _{S,k}}{\rm{Tr}}\left( {{{{\bf{\tilde C}}}_{SR,k}}{\rm{diag}}\left( {{\bf{p}}_{S,k}^H{{\bf{p}}_{S,k}}} \right)} \right)}} \\
& = \arg \mathop {\max}\limits_{\|{{\bf{p}}_{S,k}}\|=1} \frac{1}{{{\nu _{S,k}}}}{\bf{p}}_{S,k}^H{{{\bf{\tilde C}}}_{SR,k}}{{\bf{p}}_{S,k}}
\end{aligned}
\end{equation}
where ${{\bf{\Theta }}_{S,k}^\textsf{T}}$ is the covariance matrix of ${\bf t}_{S,k}$, i.e., ${{\bf{\Theta }}_{S,k}^\textsf{T}}= \mathbb E [{\bf t}_{S,k}[u]{\bf t}^H_{S,k}[u]]= \nu_{S,k}{\rm diag}\left({\bf p}_{S,k} {\bf p}^H_{S,k}\right)$. The second step follows from a similar derivation with that in (\ref{A3}) and the last step is based on assumption \textbf{A2}. From (\ref{2-5}), the right-hand side of (\ref{4-5-0}) can be interpreted as the average signal to transmit distortion noise ratio at $S_k$. The solution of problem (\ref{4-5-0}) is
\begin{equation}\label{4-6-0}
{{{\bf{\dot p}}}_{S,k}} = {{\bf{u}}_1}\left( {{{{\bf{\tilde C}}}_{SR,k}}} \right)
\end{equation}

\emph{Design of ${\bf p}_{D,k}$:} Similarly, based on (\ref{2-5}), the received BF vector at $D_k$ can be derived by solving the problem ${{\mathbf{\dot p}}_{D,k}} = \arg \mathop {\max }\limits_{\| {{{\mathbf{p}}_{R,k}}}\| = 1} \frac{{{E_{R,k}}\mathbb{E}\left[ {{\text{Tr}}\left( {{{\mathbf{H}}_{RD,k}}{{\mathbf{p}}_{D,k}}{\mathbf{p}}_{D,k}^H{\mathbf{H}}_{RD,k}^H} \right)} \right]}}{{{\mu _{D,k}}{\mathbf{p}}_{D,k}^H{\text{diag}}\left( {\mathbb{E}\left[ {{\mathbf{H}}_{RD,k}^H{{\mathbf{H}}_{RD,k}}} \right]} \right){{\mathbf{p}}_{D,k}}}}$, which results in
\begin{equation}\label{4-7-0}
{{{\bf{\dot p}}}_{D,k}} = {{\bf{u}}_1}\left( {{{{\bf{\tilde C}}}_{RD,k}}} \right)
\end{equation}

\subsection{Reduced Dimension Channel Estimation}

During a training phase, each source/destination transmits pilot sequence sequentially which allows the relay to compute the estimates of channels. With HIA scheme, it is sufficient to estimate the effective channel vectors ${\bf \underline{h}}_{SR,k}$ and ${\bf \underline{h}}_{RD,k}$. This allows a reduced dimension estimation {scheme\footnotemark[4]}. For brevity, we consider the estimation of ${\bf \underline{h}}_{SR,k}$. Let $\phi \in {\mathbb C}^{\tau \times 1}$ and $E_T$ denote the pilot sequence and power of each pilot symbol. $\phi $ is multiplied by ${{\mathbf{{\dot{p}}}}_{S,k}}$ and transmitted by $S_k$. The received pilot matrix at the relay is expressed as
{\footnotetext[4]{Direct estimation of ${\bf H}_{SR,k}$ and ${\bf H}_{RD,k}$ ($k=1,\cdots,K$) requires the total length of pilot sequences no less than $K(N_S+N_D)$. This results in long training phase that degrades the SE greatly as $N_S$ and $N_R$ are large.}}
\begin{equation}\label{4-5}
{{\mathbf{Z}}_{SR,k}} = {{\mathbf{H}}_{SR,k}}\left( {{{\mathbf{{\dot{p}}}}_{S,k}}{\phi ^T} + {{\mathbf{T}}_{SR,k}}} \right) + {{\mathbf{R}}_{SR,k}} + {{\mathbf{N}}_{SR,k}}
\end{equation}
where ${\bf{T}}_{SR,k}\in \mathbb{C}^{N_S \times \tau}$ denotes the transmit distortion noise of $S_k$, whose $l$th column has distribution (based on model (\ref{2-3})) ${\cal CN} ( {0,{\nu _{S,k}}{E_T}{\text{diag}}( {{{\mathbf{{\dot{p}}}}_{S,k}}{\mathbf{{\dot{p}}}}_{S,k}^H} )} )$. ${{\bf{R}}_{SR,k}} \in \mathbb{C}^{N_R \times \tau}$ denotes receive distortion noise at the relay. With model (\ref{2-4}), the $l$th column of ${{\bf{R}}_{SR,k}}$ has distribution $\mathcal{CN}( {0,{\mu _R} {\rm diag}(\mathbb E[ {{{\bf{Z}}_{SR,k}}{{\bf{e}}_l}{{( {{{\bf{Z}}_{SR,k}}{{\bf{e}}_l}} )}^H}} ])} )$. ${{\mathbf{N}}_{SR,k}}$ is the AWGN matrix, whose elements are i.i.d and distributed as ${\cal CN}(0,1)$. By multiplying both side of (\ref{4-5}) with ${\bf {\dot{P}}}_R^H$, we have
\begin{equation}\label{4-6}
{{\mathbf{\tilde{Z}}}_{SR,k}} = {{\mathbf{\underline{h}}}_{SR,k}}{\phi ^T} + {\mathbf{{\dot{P}}}}_R^H{{\mathbf{H}}_{SR,k}}{{\mathbf{T}}_{SR,k}} + {\mathbf{{\dot{P}}}}_R^H{{\mathbf{R}}_{SR,k}} + {\mathbf{{\dot{P}}}}_R^H{{\mathbf{N}}_{SR,k}}
\end{equation}
From (\ref{4-6}), as a merit of HIA scheme, the total length of pilot sequences to obtain the estimates of all effective channels can be as less as $2K$ (when $\tau$ is set to 1).

\begin{theorem}\label{t3}
The LMMES estimator of effective channel ${\bf \underline{h}}_{SR,k}$ can be expressed as
\begin{equation}\label{4-7}
{{{\bf{\hat {\underline{h}}}}}_{SR,k}} = {{\bf{\underline{C}}}_{SR,k}}{{\bf{\Gamma }}_{SR,k}}{{{\bf{\tilde z}}}_{SR,k}}
\end{equation}
where ${{\bf{\underline{C}}}_{SR,k}}$ is the covariance matrix of ${\bf \underline{h}}_{SR,k}$, which can be expressed as  ${{\bf{\underline{C}}}_{SR,k}}={\beta _{SR,k}}{\lambda _1}( {{{{\mathbf{\tilde C}}}_{SR,k}}} ) {\mathbf{{\dot{P}}}}_R^H{{\mathbf{C}}_{SR,k}}{{\mathbf{{\dot{P}}}}_R}$. ${{\bf{\tilde{z}}}_{R,k}}$ is given by ${{\bf{\tilde{z}}}_{SR,k}} = {{\bf{\underline{h}}}_{SR,k}} + \frac{1}{{\tau {E_T}}}{\bf{{\dot{P}}}}_R^H{{\bf{H}}_{SR,k}}{\bf{T}}_{SR,k}{\phi ^*} + \frac{1}{{\tau {E_T}}}{\bf{{\dot{P}}}}_R^H{{\bf{R}}_{SR,k}}{\phi ^*} + \frac{1}{{\tau {E_T}}}{\bf{{\dot{P}}}}_R^H{{\bf{N}}_{SR,k}}{\phi ^*}$ and
\begin{equation}\label{4-8}
\begin{aligned}
{{\mathbf{\Gamma }}_{SR,k}} = \left( {{{\mathbf{\underline{C}}}_{SR,k}} + \frac{{{\nu _{S,k}}}}{{{{\lambda _1}\left( {{{{\mathbf{\tilde C}}}_{SR,k}}} \right)}\tau }}{{\mathbf{\underline{C}}}_{SR,k}} + \frac{1}{{\tau {E_T}}}{{\mathbf{I}}_{{A_R}}}} \right. \\
{\left. { + \frac{{{\mu _R}}}{\tau }\left( {\frac{1}{{{E_T}}} + {\beta _{SR,k}}\left( {{\lambda _1}\left( {{{{\mathbf{\tilde C}}}_{SR,k}}} \right){\text{ + }}{\nu _{S,k}}} \right)} \right){{\mathbf{I}}_{{A_R}}}} \right)^{ - 1}}
\end{aligned}
\end{equation}

The real effective channel ${\bf {\underline{h}}}_{SR}$ can be decomposed as ${\bf {\underline{h}}}_{{SR},k} = {\bf  {\underline{\hat h}}}_{{SR},k} + \Delta {\bf {\underline{h}}}_{{SR},k}$ with $\Delta {\bf {\underline{h}}}_{{SR},k}$ denoting the estimation error. ${\bf \hat  {\underline{h}}}_{SR,k}$ and $\Delta{\bf  {\underline{h}}}_{SR,k}$ are uncorrelated whose covariance matrices can be expressed as ${{{\bf{\underline{\hat{C}}}}_{SR,k}}}={{\bf{\underline{C}}}_{SR,k}}{{\bf{\Gamma }}_{SR,k}}{{\bf{\underline{C}}}_{SR,k}}$ and ${{{\bf{\underline{C}}}_{SR,k}}}-{{{\bf{\underline{\hat{C}}}}_{SR,k}}}$, respectively.
\end{theorem}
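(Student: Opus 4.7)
The plan is to cast the observation model in (\ref{4-6}) into a scalar-channel-per-component form where standard LMMSE machinery applies, and then compute the two covariance matrices that the LMMSE solution requires. First, I would multiply both sides of (\ref{4-6}) by $\tfrac{1}{\tau E_T}\phi^{*}$ (assuming the standard normalization $\phi^{H}\phi=\tau E_T$ consistent with ``$E_T$ is the power per pilot symbol'') and verify that this yields exactly the expression for $\tilde{\mathbf{z}}_{SR,k}$ given in the statement. The observation then takes the form $\tilde{\mathbf{z}}_{SR,k} = \underline{\mathbf{h}}_{SR,k} + \mathbf{n}_{\mathrm{eff}}$, where $\mathbf{n}_{\mathrm{eff}}$ collects the three noise-like terms. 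Before invoking LMMSE, I would check zero cross-covariance between $\underline{\mathbf{h}}_{SR,k}$ and each term in $\mathbf{n}_{\mathrm{eff}}$: for the AWGN and the receive-distortion terms this is immediate by independence from $\mathbf{H}_{SR,k}$, while for the transmit-distortion term $\dot{\mathbf{P}}_R^H \mathbf{H}_{SR,k}\mathbf{T}_{SR,k}\phi^{*}$ one uses that $\mathbf{T}_{SR,k}$ is zero-mean and independent of $\mathbf{H}_{SR,k}$.

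Next I would compute $\underline{\mathbf{C}}_{SR,k}=\mathbb{E}[\underline{\mathbf{h}}_{SR,k}\underline{\mathbf{h}}_{SR,k}^{H}]$. Substituting the Kronecker decomposition of $\mathbf{H}_{SR,k}$ into $\underline{\mathbf{h}}_{SR,k}=\dot{\mathbf{P}}_R^H\mathbf{H}_{SR,k}\dot{\mathbf{p}}_{S,k}$, the choice $\dot{\mathbf{p}}_{S,k}=\mathbf{u}_1(\tilde{\mathbf{C}}_{SR,k})$ from (\ref{4-6-0}) gives $\tilde{\mathbf{C}}_{SR,k}^{1/2}\dot{\mathbf{p}}_{S,k}=\sqrt{\lambda_1(\tilde{\mathbf{C}}_{SR,k})}\,\dot{\mathbf{p}}_{S,k}$, and using $\mathbb{E}[\mathbf{X}_{SR,k}\dot{\mathbf{p}}_{S,k}\dot{\mathbf{p}}_{S,k}^H \mathbf{X}_{SR,k}^H]=\mathbf{I}_{N_R}$ produces the stated $\underline{\mathbf{C}}_{SR,k}=\beta_{SR,k}\lambda_1(\tilde{\mathbf{C}}_{SR,k})\dot{\mathbf{P}}_R^H\mathbf{C}_{SR,k}\dot{\mathbf{P}}_R$.

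The main computational step is then the covariance of $\mathbf{n}_{\mathrm{eff}}$, which has three contributions. (i) For the transmit-distortion term, conditioning on $\mathbf{H}_{SR,k}$ and using $\mathbb{E}[\mathbf{T}_{SR,k}\phi^{*}\phi^{T}\mathbf{T}_{SR,k}^{H}]=\|\phi\|^{2}\nu_{S,k}E_T\,\mathrm{diag}(\dot{\mathbf{p}}_{S,k}\dot{\mathbf{p}}_{S,k}^{H})$, followed by $\mathbb{E}[\mathbf{H}_{SR,k}\mathbf{D}\mathbf{H}_{SR,k}^{H}]=\beta_{SR,k}\mathrm{Tr}(\tilde{\mathbf{C}}_{SR,k}\mathbf{D})\mathbf{C}_{SR,k}$ for diagonal $\mathbf{D}$, together with the unit-diagonal hypothesis A2 which gives $\mathrm{Tr}(\tilde{\mathbf{C}}_{SR,k}\mathrm{diag}(\dot{\mathbf{p}}_{S,k}\dot{\mathbf{p}}_{S,k}^{H}))=\|\dot{\mathbf{p}}_{S,k}\|^{2}=1$, collapses this term to $\tfrac{\nu_{S,k}}{\tau\lambda_1(\tilde{\mathbf{C}}_{SR,k})}\underline{\mathbf{C}}_{SR,k}$. (ii) The AWGN term contributes $\tfrac{1}{\tau E_T}\mathbf{I}_{A_R}$ by direct calculation together with $\dot{\mathbf{P}}_R^H\dot{\mathbf{P}}_R=\mathbf{I}_{A_R}$. (iii) The receive-distortion piece is the delicate one: I would compute the per-antenna undistorted receive power $[\mathbb{E}[\mathbf{y}_l\mathbf{y}_l^H]]_{ii}$ on the $l$th pilot slot by summing the desired signal power $E_T\beta_{SR,k}\lambda_1(\tilde{\mathbf{C}}_{SR,k})[\mathbf{C}_{SR,k}]_{ii}$, the transmit-distortion contribution $\nu_{S,k}E_T\beta_{SR,k}[\mathbf{C}_{SR,k}]_{ii}$, and the unit thermal-noise power, then apply A2 to turn $[\mathbf{C}_{SR,k}]_{ii}=1$, obtaining a scalar covariance proportional to $\mathbf{I}_{N_R}$; projecting by $\dot{\mathbf{P}}_R^H$ and scaling by $1/(\tau E_T)$ yields precisely $\tfrac{\mu_R}{\tau}\bigl(\tfrac{1}{E_T}+\beta_{SR,k}(\lambda_1(\tilde{\mathbf{C}}_{SR,k})+\nu_{S,k})\bigr)\mathbf{I}_{A_R}$.

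Summing these three gives $\mathbf{C}_{\mathbf{n}_{\mathrm{eff}}\mathbf{n}_{\mathrm{eff}}}$, and adding $\underline{\mathbf{C}}_{SR,k}$ yields exactly $\mathbf{\Gamma}_{SR,k}^{-1}$. The LMMSE formula $\hat{\mathbf{h}}=\mathbf{C}_{hz}(\mathbf{C}_{zz})^{-1}\tilde{\mathbf{z}}$ with $\mathbf{C}_{hz}=\underline{\mathbf{C}}_{SR,k}$ then gives (\ref{4-7}). Finally, uncorrelatedness of $\hat{\underline{\mathbf{h}}}_{SR,k}$ and $\Delta\underline{\mathbf{h}}_{SR,k}$ follows from the orthogonality principle of LMMSE, and the covariance identities $\underline{\hat{\mathbf{C}}}_{SR,k}=\underline{\mathbf{C}}_{SR,k}\mathbf{\Gamma}_{SR,k}\underline{\mathbf{C}}_{SR,k}$ and $\underline{\mathbf{C}}_{SR,k}-\underline{\hat{\mathbf{C}}}_{SR,k}$ are a direct application of the standard formula $\mathrm{Cov}(\hat{\mathbf{h}})=\mathbf{C}_{hz}\mathbf{C}_{zz}^{-1}\mathbf{C}_{zh}$ and the Pythagorean decomposition. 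I expect the receive-distortion covariance in step (iii) to be the trickiest to write out cleanly, because one has to carefully track which terms enter the undistorted signal $\mathbf{y}_{R}$ before the diagonal operator is applied, and to exploit A2 at just the right moment to collapse the diagonal of $\mathbf{C}_{SR,k}$ to the identity.
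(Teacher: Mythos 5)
Your proposal is correct and follows essentially the same route as the paper's Appendix-D sketch: the standard LMMSE formula $\hat{\mathbf{h}}=\mathbf{C}_{hz}\mathbf{C}_{zz}^{-1}\tilde{\mathbf{z}}$ with $\mathbf{C}_{hz}={{\bf{\underline{C}}}_{SR,k}}$ (the cross terms with the distortion noises vanishing by the zero-mean/uncorrelatedness argument you give), followed by a term-by-term evaluation of $\mathbb E[\tilde{\mathbf z}_{SR,k}\tilde{\mathbf z}_{SR,k}^H]$ using the Kronecker structure and assumption \textbf{A2}. Your step (iii) merely fills in the receive-distortion covariance that the paper dispatches with ``using the similar approach on the remaining terms,'' and it reproduces the corresponding term of (\ref{4-8}) correctly.
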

\begin{proof}
A sketch of the proof is presented in Appendix-D.
\end{proof}

Different from that with ideal hardware \cite{IEEEhowto:KotechaTSP04}, the LMMSE estimator is not equivalent to MMSE estimator, since the received pilot signal is corrupted by the term ${\mathbf{{\dot{P}}}}_R^H{{\mathbf{H}}_{SR,k}}{{\mathbf{T}}_{SR,k}}$ (due to transmit imperfection of $S_k$), which is not independent with the channel to be estimated. There might exist non-linear estimator that results in smaller MSE. However, the difference should be small since the distortion noises are relatively weak.

The LMMSE estimator of ${\bf \underline{h}}_{RD,k}$ can be obtained using the same approach. For further analysis, we define ${{\bf{\underline{C}}}_{RD,k}}$ and ${\bf \underline{\hat{h}}}_{RD,k}$ as the covariance matrix and LMMSE estimates of ${\bf \underline{h}}_{RD,k}$, respectively. Moreover, we define ${{{\bf{\underline{\hat{C}}}}_{RD,k}}} = {{\bf{\underline{C}}}_{RD,k}}{{\bf{\Gamma }}_{RD,k}}{{\bf{\underline{C}}}_{RD,k}}$ as the covariance matrix of ${\bf \underline{\hat{h}}}_{RD,k}$, where ${{\mathbf{\Gamma }}_{RD,k}}$ is given by
\begin{equation}\label{4-8-1}
\begin{aligned}
{{\mathbf{\Gamma }}_{RD,k}} = \left( {{{\mathbf{\underline{C}}}_{RD,k}} + \frac{{{\nu _{D,k}}}}{{{{\lambda _1}\left( {{{{\mathbf{\tilde C}}}_{RD,k}}} \right)}\tau }}{{\mathbf{\underline{C}}}_{RD,k}} + \frac{1}{{\tau {E_T}}}{{\mathbf{I}}_{{A_T}}}} \right. \\
{\left. { + \frac{{{\mu _R}}}{\tau }\left( {\frac{1}{{{E_T}}} + {\beta _{RD,k}}\left( {{\lambda _1}\left( {{{{\mathbf{\tilde C}}}_{RD,k}}} \right){\text{ + }}{\nu _{D,k}}} \right)} \right){{\mathbf{I}}_{{A_T}}}} \right)^{ - 1}}
\end{aligned}
\end{equation}

\subsection{Achievable Rate Analysis}

This subsection analyzes the achievable rate of HIA-MM-FDR. The achievable rate expressions are derived based on the bounding technique in \cite{IEEEhowto:Hassibi}.

\emph{1) Achievable Rate of $S_k \!\to\! R$ Channel:} By treating $\sqrt{E_{S,k}}{\mathbb E[ {{\bf{w}}_{R,k}^H{{\bf{H}}_{SR,k}}{{\bf{{\dot{p}}}}_{S,k}}} ]}s_k[u]$ as the desired signal at $S_k \! \to\! R$ channel, and approximating the remaining terms in (\ref{2-6}), i.e., $ {{\mathord{\buildrel{\lower3pt\hbox{$\scriptscriptstyle\smile$}} \over y} }_{R,k}}\left[ u \right] \!-\!\sqrt{E_{S,k}}{\mathbb E[ {{\bf{w}}_{R,k}^H{{\bf{H}}_{SR,k}}{{\bf{{\dot{p}}}}_{S,k}}} ]}s_k[u]$, using the worst-case uncorrelated additive Gaussian noise with the same variance, the rate in (\ref{4-11}) is achievable on $S_k\!\! \to\!\! R$ channel, where $\textsf{EI}_k$, ${{\textsf{MUI}}_{R,k}}$, $\textsf{D}_{R,k}^\textsf{T}$ and $\textsf{D}_{R,k}^\textsf{R}$ denote the EI, MUI, effective distortion due to transmit imperfection of sources and effective distortion due to receive imperfection of relay respectively, which are given by
\begin{figure*}
\normalsize
\setcounter{mytempeqncnt}{\value{equation}}
\setcounter{equation}{28}
\begin{equation}\label{4-11}
      \textsf{R}_{SR,k}^{{\rm{HIA}}} = \frac{{T - 2K\tau }}{T} {\log _2}\left( 1 + {\frac{{{E_{S,k}}{{\left| {\mathbb E\left[ {{\bf{w}}_{R,k}^H{{\bf{H}}_{SR,k}}{{\bf{{\dot{p}}}}_{S,k}}} \right]} \right|}^2}}}{{{E_{S,k}}{\rm{var}}\left( {{\bf{w}}_{R,k}^H{{\bf{H}}_{SR,k}}{{\bf{{\dot{p}}}}_{S,k}}} \right){+} \mathbb E\left[ {{{\textsf{EI}}}{{{}}_k}} \right] + \mathbb E\left[ {{\rm{\textsf{MUI}}}{{{}}_{R,k}}} \right] + \mathbb E\left[ {{{\textsf{D}}}_{R,k}^{{\textsf{T}}}} \right] +\mathbb E\left[ {{{\textsf{D}}}_{R,k}^{{\textsf{R}}}} \right] + \mathbb E\left[ {{{\left\| {{\bf{w}}_{R,k}} \right\|}^2}} \right]}}} \right)
\end{equation}
\setcounter{equation}{\value{mytempeqncnt}}
\hrule
\end{figure*}
\setcounter{equation}{29}
\begin{equation}\label{4-12}
\begin{array}{ll}
\textsf{EI}_k = {\bf{w}}_{R,k}^H{{\bf{H}}_{EI}}\left( {{{\mathbf{W}}_T}{\Lambda _R}{\mathbf{W}}_T^H + \mathbb E\left[ {{{\mathbf{t}}_R}\left[ u \right]{\mathbf{t}}_R^H\left[ u \right]} \right]} \right){\bf{H}}_{EI}^H{{\bf{w}}_{R,k}} \\
{{\textsf{MUI}}_{R,k}} = \sum\limits_{j = 1,j \ne k}^K {{E_{S,j}}{{\left| {{\bf{w}}_{R,k}^H{{\bf{H}}_{SR,j}}{{\bf{{\dot{p}}}}_{S,j}}} \right|}^2}} \\
\textsf{D}_{R,k}^\textsf{T} = \sum\limits_{j = 1}^K {{\bf{w}}_{R,k}^H{{\bf{H}}_{SR,j}} \mathbb  E\left[ {{{\bf{t}}_{S,j}}[u]{\bf{t}}_{S,j}^H[u]} \right]{\bf{H}}_{SR,j}^H{{\bf{w}}_{R,k}}} \\
\textsf{D}_{R,k}^\textsf{R} = {\bf{w}}_{R,k}^H \mathbb E\left[ {{{\mathbf{r}}_R}\left[ u \right]{\mathbf{r}}_R^H\left[ u \right]} \right]{{\bf{w}}_{R,k}}
\end{array}
\end{equation}
Wherein, the BF matrices at the relay are given by
\begin{equation}\label{4-12-1}
\begin{aligned}
& {{\mathbf{{W}}}_R} = {\bf {\dot{P}}}_R {{\mathbf{\underline{\hat{H}}}}_{SR}}{\left( {{\mathbf{\underline{\hat{H}}}}_{SR}^H{{\mathbf{\underline{\hat{H}}}}_{SR}}} \right)^{ - 1}} \\
& {{\mathbf{{W}}}_T} = {\bf {\dot{P}}}_T {{\mathbf{\underline{\hat{H}}}}_{RD}}{\left( {{\mathbf{\underline{\hat{H}}}}_{RD}^H{{\mathbf{\underline{\hat{H}}}}_{RD}}} \right)^{ - 1}}{\mathbf{\hat{\Upsilon } }}^{-1/2}
\end{aligned}
\end{equation}
where ${{\mathbf{\underline{\hat{H}}}}_{i}} = \left[ {{{\mathbf{\underline{\hat{h}}}}_{i,1}}, \cdots ,{{\mathbf{\underline{\hat{h}}}}_{i,K}}} \right]$ ($i\in\{SR,RD\}$) and ${\mathbf{\hat\Upsilon  }}$ is a diagonal normalized matrix with $[{\mathbf{\hat\Upsilon  }}]_{l,l}={{\mathbf{e}}_l^H{{\left( {{\mathbf{\underline{\hat{H}}}}_{RD}^H{{\mathbf{\underline{\hat{H}}}}_{RD}}} \right)}^{ - 1}}{{\mathbf{e}}_l}}$. Note that we have replaced the effective channels in inner BF matrices with their estimates obtained in Theorem \ref{t3}.

\begin{theorem}\label{t4}
Assume that the assumptions \textbf{A1} and \textbf{A2} hold, and $A_R$ is selected so that ${\text{Tr}}\left( {{{\mathbf{\underline{C}}}_{SR,k}}} \right) = {\cal O}(N_R)$. {\footnotemark[5]} With HIA scheme, the achievable rate of $S_k \to R$ channel in the large-$(N_R,N_T)$ regime is given by $(\ref{4-11})$, where the powers of desired signal and AWGN are ${{{E_{S,k}}{{\left| {\mathbb E\left[ {{\bf{w}}_{R,k}^H{{\bf{H}}_{SR,k}}{{\bf{{\dot{p}}}}_{S,k}}} \right]} \right|}^2}}} = E_{S,k}$ and ${\mathbb E\left[ {{{\left\| {{{\bf{w}}_{R,k}}} \right\|}^2}} \right]}={( {{\rm{Tr}}( {{{{\bf{\hat {\underline{C}}}}}_{SR,k}}} )} )^{ - 1}} $, and
{\footnotetext[5]{By increasing $A_R$, this can always be achieved since ${\text{Tr}}\left( {{{\mathbf{\underline{C}}}_{SR,k}}} \right)$ approaches $N_R$ as $A_R$ increases from assumption \textbf{A2}.}}
\begin{equation}\label{4-13}
\begin{aligned}
& {E_{S,k}}{\rm{var}}\left( {{\bf{w}}_{R,k}^H{{\bf{H}}_{SR,k}}{{\bf{{\dot{p}}}}_{S,k}}} \right) = {E_{S,k}}{\left( {{\rm{Tr}}\left( {{{{\bf{\hat {\underline{C}}}}}_{SR,k}}} \right)} \right)^{ - 2}}\delta _{k}^{SR} \\
&  \mathbb E \left[\textsf{EI}_k\right] = \frac{{\beta _{EI}}{\rm{Tr}}\left( {{{{\bf{\tilde C}}}_{EI}}{{\bf \Omega} _R}} \right){\rm{Tr}}\left( {{{{\bf{\hat {\underline{C}}}}}_{SR,k}}{\bf{{\dot{P}}}}_R^H{{\bf{C}}_{EI}}{{\bf{{\dot{P}}}}_R}} \right)}{\left( {{\rm{Tr}}\left( {{{{\bf{\hat {\underline{C}}}}}_{SR,k}}} \right)} \right)^2} \\
& \mathbb E[{{\textsf{MUI}}_{R,k}}] = \sum_{j = 1,j \ne k}^K \frac{{E_{S,j}}{\rm{Tr}}\left( {\left( {{{\bf{\underline{C}}}_{SR,j}} - {{{\bf{\hat {\underline{C}}}}}_{SR,j}}} \right){{{\bf{\hat {\underline{C}}}}}_{SR,k}}} \right)}{\left( {{\rm{Tr}}\left( {{{{\bf{\hat {\underline{C}}}}}_{SR,k}}} \right)} \right)^{ 2}}\\
& \mathbb E[\textsf{D}_{R,k}^\textsf{T}] = \frac{{\nu _{S,k}}{E_{S,k}} g_{kk} + \sum\limits_{j = 1,j \ne k}^K {{\nu _{S,j}}{E_{S,j}} g_{kj}{{}}} }{\left( {{\rm{Tr}}\left( {{{{\bf{\hat {\underline{C}}}}}_{SR,k}}} \right)} \right)^{  2}}\\
& \mathbb E \left[\textsf{D}_{R,k}^\textsf{R} \right] = \frac{\mu _R}{ {{\rm{Tr}}\left( {{{{\bf{\hat {\underline{C}}}}}_{SR,k}}} \right)} } \\
& \times {{\sum\limits_{j = 1}^K {{E_{S,j}}{\beta _{SR,j}}{\rm{Tr}}\left( {{{{\bf{\tilde C}}}_{SR,j}}{{\bf{\Omega }}_{S,j}}} \right)}  + {\beta _{EI}}{\rm{Tr}}\left( {{{{\bf{\tilde C}}}_{EI}}{{\bf{\Omega }}_R}} \right) + 1}}
\end{aligned}
\end{equation}
where $\delta _{k}^{SR}$ is defined as $\delta _{k}^{SR} = \mathbb E[ {{{| {\Delta {\bf{\underline{h}}}_{SR,k}^H{{{\bf{\hat {\underline{h}}}}}_{SR,k}}} |}^2}} ]$ and is given by Lemma \ref{lm1} in Appendix-A. $g_{kk}$ and $g_{kj}$ are given by (\ref{D10}) and (\ref{D12}) in the Appendix-E. ${{\bf \Omega}_{S,k}}$ and ${\bf\Omega}_{R}$ are expressed as
\begin{equation}\label{4-14}
\begin{aligned}
 & {{\bf{\Omega }}_{S,k}} = {{\bf{{\dot{p}}}}_{S,k}}{\bf{{\dot{p}}}}_{S,k}^H + {\nu _{S,k}}{\rm{diag}}\left( {{{\bf{{\dot{p}}}}_{S,k}}{\bf{{\dot{p}}}}_{S,k}^H} \right)\\
 & {{\bf{\Omega }}_R} = \sum\limits_{l = 1}^K {{E_{R,l}} \frac{{{\bf{{\dot{P}}}}_T}{{{\bf{\hat {\underline{C}}}}}_{RD,l}}{\bf{{\dot{P}}}}_T^H + {\nu _R}{\rm{diag}}\left( {{{\bf{{\dot{P}}}}_T}{{{\bf{\hat {\underline{C}}}}}_{RD,l}}{\bf{{\dot{P}}}}_T^H} \right)}{{{{\rm{Tr}}\left( {{{{\bf{\hat {\underline{C}}}}}_{RD,l}}} \right)}}}}
\end{aligned}
\end{equation}
The scaling behaviors of above factors are shown in Table \ref{tab1}.
\end{theorem}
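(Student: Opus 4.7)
The plan is to exploit the two-stage structure ${\bf w}_{R,k}={\bf \dot P}_R\underline{\bf w}_{R,k}$, where $\underline{\bf w}_{R,k}$ is the $k$th column of the inner ZF receiver $\underline{\hat{\bf H}}_{SR}(\underline{\hat{\bf H}}_{SR}^H\underline{\hat{\bf H}}_{SR})^{-1}$ and ${\bf \dot P}_R$ is semi-unitary (a collection of eigenvectors of ${\bf C}_{EI}$). This reduces every quadratic form in ${\bf w}_{R,k}$ to a form in reduced-dimension quantities: in particular, $\mathbf{w}_{R,k}^{H}\mathbf{H}_{SR,k}\dot{\mathbf{p}}_{S,k}=\underline{\mathbf{w}}_{R,k}^{H}\underline{\mathbf{h}}_{SR,k}$ and $\|\mathbf{w}_{R,k}\|^{2}=\|\underline{\mathbf{w}}_{R,k}\|^{2}=[(\underline{\hat{\mathbf{H}}}_{SR}^{H}\underline{\hat{\mathbf{H}}}_{SR})^{-1}]_{kk}$. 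Combining with the decomposition $\underline{\bf h}_{SR,k}=\hat{\underline{\bf h}}_{SR,k}+\Delta\underline{\bf h}_{SR,k}$ from Theorem \ref{t3} and the ZF identity $\underline{\bf w}_{R,k}^{H}\hat{\underline{\bf h}}_{SR,j}=\delta_{kj}$, the effective signal gain becomes $1+\underline{\bf w}_{R,k}^{H}\Delta\underline{\bf h}_{SR,k}$. The mean $\mathbb{E}[\cdot]=1$, hence $E_{S,k}|\mathbb{E}[\cdot]|^{2}=E_{S,k}$, follows immediately since $\Delta\underline{\bf h}_{SR,k}$ is zero-mean and uncorrelated with $\hat{\underline{\bf h}}_{SR,k}$ (hence with $\underline{\bf w}_{R,k}$).

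For $\mathbb{E}[\|\mathbf{w}_{R,k}\|^{2}]$, I would apply the Bai--Silverstein trace lemma to $\tfrac{1}{N_{R}}\underline{\hat{\bf H}}_{SR}^{H}\underline{\hat{\bf H}}_{SR}$: independent columns force the off-diagonal entries to vanish while the diagonal entries concentrate on $\tfrac{1}{N_{R}}\mathrm{Tr}(\hat{\underline{\bf C}}_{SR,k})$ (the assumption $\mathrm{Tr}(\underline{\bf C}_{SR,k})=\mathcal{O}(N_{R})$ ensures this is $\Theta(1)$), so the $(k,k)$-entry of the inverse tends to $(\mathrm{Tr}(\hat{\underline{\bf C}}_{SR,k}))^{-1}$. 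For the variance term, I would then expand $\mathbb{E}[|\underline{\bf w}_{R,k}^{H}\Delta\underline{\bf h}_{SR,k}|^{2}]$ by conditioning on the estimates. Here the LMMSE residual $\Delta\underline{\bf h}_{SR,k}$ is \emph{uncorrelated but not independent} of $\hat{\underline{\bf h}}_{SR,k}$ (the transmit-distortion component $\dot{\bf P}_R^{H}\mathbf{H}_{SR,k}\mathbf{T}_{SR,k}\phi^{*}$ couples the two through $\mathbf{H}_{SR,k}$ itself), so the cross fourth-moment $\delta_{k}^{SR}=\mathbb{E}[|\Delta\underline{\bf h}_{SR,k}^{H}\hat{\underline{\bf h}}_{SR,k}|^{2}]$ provided by Lemma \ref{lm1} appears, and the factor $(\mathrm{Tr}(\hat{\underline{\bf C}}_{SR,k}))^{-2}$ comes from the $\|\underline{\bf w}_{R,k}\|^{2}/|\underline{\bf w}_{R,k}^{H}\hat{\underline{\bf h}}_{SR,k}|^{2}$ normalization.

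The EI/MUI/distortion expectations are handled by iterated expectations, using that ${\bf w}_{R,k}$ depends only on $\{\mathbf{H}_{SR,l}\}_{l}$ and the training noise, hence is independent of $\mathbf{H}_{EI}$ and of $\{\mathbf{H}_{RD,l}\}_{l}$. Taking the inner expectation first and invoking the Kronecker identity $\mathbb{E}[\mathbf{H}_{EI}\mathbf{A}\mathbf{H}_{EI}^{H}]=\beta_{EI}\mathrm{Tr}(\mathbf{A}\tilde{\bf C}_{EI})\mathbf{C}_{EI}$ with $\mathbf{A}=\mathbf{W}_{T}\boldsymbol{\Lambda}_{R}\mathbf{W}_{T}^{H}+\mathbb{E}[\mathbf{t}_{R}\mathbf{t}_{R}^{H}]$ yields the EI expression with ${\boldsymbol\Omega}_{R}$ as defined. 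The MUI analogue, applied to $\mathbf{H}_{SR,j}$ ($j\neq k$), gives $\beta_{SR,j}(\dot{\bf p}_{S,j}^{H}\tilde{\bf C}_{SR,j}\dot{\bf p}_{S,j})\mathbf{C}_{SR,j}$; after multiplying by $\mathbf{w}_{R,k}^{H}\cdot\mathbf{w}_{R,k}$ and invoking ZF nulling of the estimated part $\hat{\underline{\bf h}}_{SR,j}$, only the error covariance $\underline{\bf C}_{SR,j}-\hat{\underline{\bf C}}_{SR,j}$ survives. Transmit and receive distortion are handled similarly, substituting the distortion covariances into the inner expectation; the $\mathrm{diag}(\cdot)$ operator breaks the Kronecker symmetry and yields the explicit $g_{kk},g_{kj}$ of (\ref{D10})--(\ref{D12}) and the ${\boldsymbol\Omega}_{S,k}$ term, and $\mathbf{\Theta}_{R}^{\textsf{R}}=\mu_{R}\,\mathrm{diag}(\mathbb{E}[\tilde{\bf y}_{R}\tilde{\bf y}_{R}^{H}])$ is expanded into the total received power across the array --- signal, transmit distortion, EI and relay transmit distortion --- giving the final $\mathbb{E}[\mathsf{D}_{R,k}^{\textsf{R}}]$ formula.

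The main obstacle is the variance step: because the LMMSE residual inherits structure from the transmit-distortion term, the conditional second moment of $\Delta\underline{\bf h}_{SR,k}$ given $\hat{\underline{\bf h}}_{SR,k}$ is not the unconditional error covariance, which forces the detour through Lemma \ref{lm1} and a careful fourth-moment calculation. The secondary technical difficulty is the evaluation of $g_{kk}$ and $g_{kj}$, since the $\mathrm{diag}(\cdot)$ operator prevents a direct Kronecker-trace identity and requires expanding $\mathbf{H}_{SR,j}\,\mathrm{diag}(\dot{\bf p}_{S,j}\dot{\bf p}_{S,j}^{H})\mathbf{H}_{SR,j}^{H}$ entry-wise via Wick contractions before reassembling into traces of $\mathbf{C}_{SR,j}$, $\tilde{\bf C}_{SR,j}$ and $\hat{\underline{\bf C}}_{SR,k}$; all other steps are routine trace-lemma bookkeeping.
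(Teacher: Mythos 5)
Your proposal follows essentially the same route as the paper's Appendix-E proof: the two-stage decomposition ${\bf w}_{R,k}={\bf \dot P}_R\underline{\bf w}_{R,k}$ with the ZF identity, the asymptotic diagonalization of $\frac{1}{N_R}{\bf \hat{\underline{H}}}_{SR}^H{\bf \hat{\underline{H}}}_{SR}$ via the trace lemmas of [17, Lemma 4], iterated expectations exploiting independence from ${\bf H}_{EI}$ and ${\bf H}_{RD,l}$ together with the Kronecker identity of (\ref{A3}), Lemma \ref{lm1} for the fourth-moment term $\delta_k^{SR}$, and the entry-wise treatment of the ${\rm diag}(\cdot)$ terms to obtain $g_{kk}$ and $g_{kj}$. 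The two technical obstacles you single out (the uncorrelated-but-dependent LMMSE residual, and the loss of Kronecker symmetry under ${\rm diag}(\cdot)$) are exactly the points the paper handles in (\ref{X3})--(\ref{X5}) and (\ref{D12}), so the proposal is a faithful match.
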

\begin{proof}
See Appendix-E.
\end{proof}

\begin{table*}[!t]
\caption{Scaling behaviors of different factors in Theorem \ref{t4}.}
\label{tab1}
\centering
\includegraphics[width=12cm]{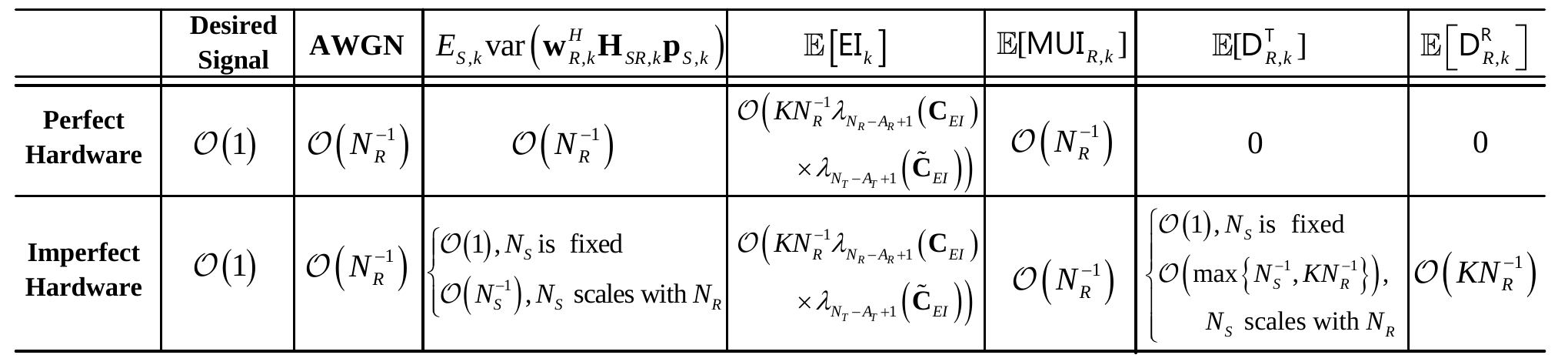}
\end{table*}
\begin{table*}[!t]
\caption{Scaling behaviors of different factors in Theorem \ref{t5}.}
\label{tab2}
\centering
\begin{tabular}{lll}
\includegraphics[width=11.5cm]{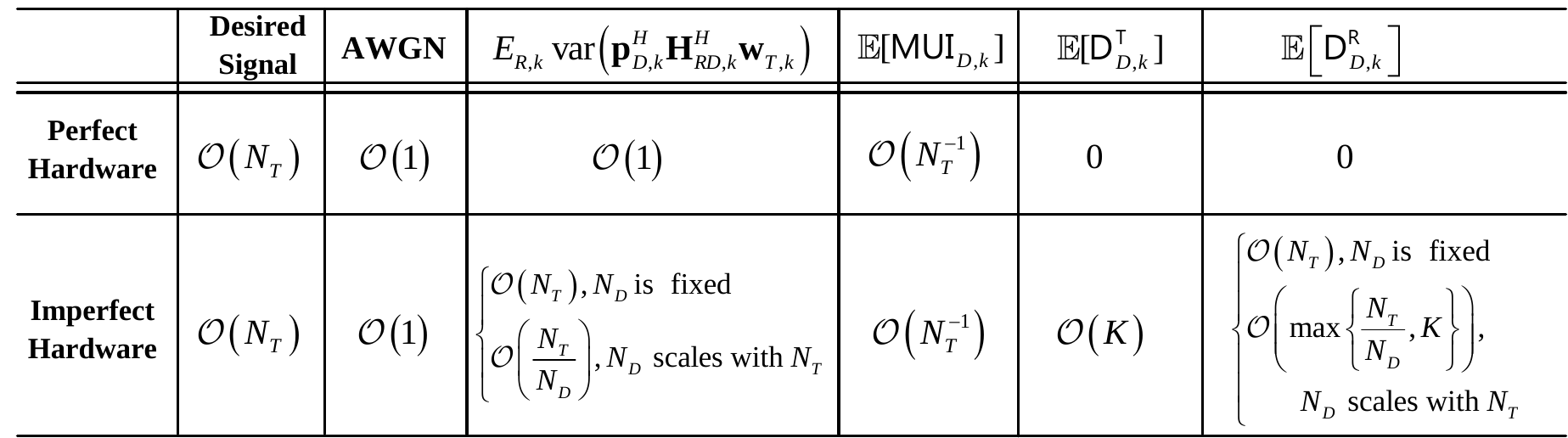}
\end{tabular}
\end{table*}

Several important results on the effect of EI, MUI and hardware impairments on achievable rate of $S_k \to R$ channel can be obtained from Theorem \ref{t4} and Table \ref{tab1}.

\emph{Effect of EI:} With the proposed HIA scheme, the scaling behavior of EI is high correlated with ${\lambda _{{N_R} - {A_R} + 1}}( {{{\mathbf{C}}_{EI}}} ) $ and ${\lambda _{{N_T} - {A_T} + 1}}( {{{{\mathbf{\tilde C}}}_{EI}}} )$. With the assumption \textbf{A1}, we have ${\lambda _{{N_R} - {A_R} + 1}}( {{{\mathbf{C}}_{EI}}} ) \le {\cal O}(1)$ and ${\lambda _{{N_T} - {A_T} + 1}}( {{{{\mathbf{\tilde C}}}_{EI}}} ) \le {\cal O}(1)$. In fact, under specific spatial correlation model (e.g., the physical channel model with a fixed number of angular bins \cite{IEEEhowto:HoydisJSAC13}), it is possible to make ${\lambda _{{N_R} - {A_R} + 1}}\left( {{{\mathbf{C}}_{EI}}} \right) <{\cal O}(1)$ and ${\lambda _{{N_T} - {A_T} + 1}}( {{{{\mathbf{\tilde C}}}_{EI}}} )<{\cal O}(1)$ by selecting proper $A_R$ and $A_T$. In this case, the power of EI can decrease faster than ${\cal O}(KN_R^{-1})$ in the large-$(N_R,N_T)$ regime. Note that this is in contrast with the result in \cite{IEEEhowto:NgoJSAC14}, which shown that the power of EI decreases exactly with ${\cal O}(KN_R^{-1})$.

\emph{Effect of MUI:} The scaling behavior of MUI confirms well with the classic result in MM-FDR with perfect hardware \cite{IEEEhowto:NgoJSAC14}, i.e., the MUI diminishes as $N_R\to \infty$ if $K \ll N_R$.

\emph{Effect of Transmit Imperfection of Sources:} (i) When $N_S$ is fixed, an interesting observation is that the term ${E_{S,k}}{\rm{var}} ( {{\bf{w}}_{R,k}^H{{\bf{H}}_{SR,k}}{{\bf{{\dot{p}}}}_{S,k}}}  )$, which is widely convinced $\mathcal{O} (N_R^{-1} )$ terms with perfect hardware, scales as $\mathcal{O} (1 )$. As a result, the achievable rate of $S_k \to R$ channel is limited by the joint effect of ${E_{S,k}}{\rm{var}} ( {{\bf{w}}_{R,k}^H{{\bf{H}}_{SR,k}}{{\bf{{\dot{p}}}}_{S,k}}}  )$ and the effective distortion due to the transmit imperfection of sources, i.e., $\textsf{D}_{R,k}^\textsf{T}$. In fact, the variation of scaling behavior of ${E_{S,k}}{\rm{var}} ( {{\bf{w}}_{R,k}^H{{\bf{H}}_{SR,k}}{{\bf{{\dot{p}}}}_{S,k}}}  )$ can also be viewed as a ``negative effect'' of transmit imperfection of sources. To see this, inserting the expression of $\delta _{k}^{SR}$ (given by Lemma \ref{lm1} in Appendix-A) into (\ref{4-13}) and letting $N_R\to \infty$, we can obtain $ {E_{S,k}}{\rm{var}} ( {{\bf{w}}_{R,k}^H{{\bf{H}}_{SR,k}}{{\bf{{\dot{p}}}}_{S,k}}}  ) = \frac{{{\nu _{S,k}}}}{\tau }{E_{R,k}}{\text{Tr}} ( {{{{\mathbf{\hat {\underline{C}}}}}_{SR,k}}}  ){\mathbf{{\dot{p}}}}_{S,k}^H{\text{diag}} ( {{\mathbf{{\dot{p}}}}_{S,k}^H{{\mathbf{{\dot{p}}}}_{S,k}}} ){{\mathbf{{\dot{p}}}}_{S,k}} + {\cal O} ( {N_R^{ - 1}}  )$, which demonstrates that ${E_{S,k}}{\rm{var}} ( {{\bf{w}}_{R,k}^H{{\bf{H}}_{SR,k}}{{\bf{{\dot{p}}}}_{S,k}}}  ) $ scales as $ {\cal O} ( 1  )$ as long as ${\nu _{S,k}} > 0$. (ii) When $N_S$ scales with $N_R$ and $K\ll N_R$, it is seen that the proposed HIA scheme reduces the effect of transmit imperfection at sources by a factor of $\frac{1}{N_S}$. The reason is that the transmit distortion is pre-suppressed by the transmit BF vector ${\bf {\dot{p}}}_{S,k}$.

\emph{Effect of Receive Imperfection of the Relay:} By comparing the result in Table \ref{tab1} and $(\ref{3-7})$, it is seen that the HIA scheme does not change the scaling behavior of distortion noise caused by receive imperfection of the relay. However, the receive distortion noise power is still reduced, since EI (which is a main contributor of receive distortion noise) is suppressed.

\emph{2) Achievable Rate of $R \to D_k$ Channel:} According to (\ref{2-7}) and the bounding technique in \cite{IEEEhowto:Hassibi}, the achievable rate of $R\to D_k$ channel can be expressed as
\begin{equation}\label{4-15}
\begin{aligned}
& \textsf{R}_{RD,k}^{\rm HIA}= \frac{{T - 2K\tau }}{T} \\
& \times{\log _2} \begin{array}{ll}\left( {1 + \frac{{{E_{R,k}}{{\left| {\mathbb E\left[ {{\bf{{\dot{p}}}}_{D,k}^H{\bf{H}}_{RD,k}^H{{\bf{w}}_{T,k}}} \right]} \right|}^2}}}{{{E_{R,k}}{\mathop{\rm var}} \left( {{\bf{{\dot{p}}}}_{D,k}^H{\bf{H}}_{RD,k}^H{{\bf{w}}_{T,k}}} \right) + \mathbb E\left[ {{\textsf{MUI}}_{D,k}} \right] + \mathbb E\left[ {\textsf{D}_{D,k}^\textsf{T}} \right] + \textsf{D}_{D,k}^\textsf{R}}+1}} \right)\end{array}
 \end{aligned}
\end{equation}
where ${{\textsf{MUI}}_{D,k}}$, $\textsf{D}_{D,k}^\textsf{T}$ and $\textsf{D}_{D,k}^\textsf{R}$ denote the MUI, effective distortion due to transmit imperfection of relay and effective distortion due to receive imperfection of destinations, respectively, which are given by
\begin{equation}\label{4-16}
\begin{array}{l}
{{\textsf{MUI}}_{D,k}} = \sum\limits_{j = 1,j \ne k}^K {{E_{R,j}}{{\left| {{\bf{{\dot{p}}}}_{D,k}^H{\bf{H}}_{RD,k}^H{{\bf{w}}_{T,j}}} \right|}^2}} \\
\textsf{D}_{D,k}^\textsf{T} = {\bf{{\dot{p}}}}_{D,k}^H{\bf{H}}_{RD,k}^H{\bf{\Theta }}_{R}^\textsf{T}{{\bf{H}}_{RD,k}}{{\bf{{\dot{p}}}}_{D,k}}\\
\textsf{D}_{D,k}^\textsf{R} = {\bf{{\dot{p}}}}_{D,k}^H{\bf{\Theta }}_{D,k}^\textsf{R}{{\bf{{\dot{p}}}}_{D,k}}
\end{array}
\end{equation}

\begin{theorem}\label{t5}
Assume that assumptions \textbf{A1} and \textbf{A2} hold, and $A_T$ is selected so that ${\text{Tr}}\left( {{{\mathbf{\underline{C}}}_{RD,k}}} \right) = {\cal O}(N_T)$. With HIA scheme, the achievable rate of $R \to D_k$ channel in the large-$(N_R,N_T)$ regime is given by $(\ref{4-15})$, where the power of desired signal is ${E_{R,k}}{\left| {\mathbb E[ {{\bf{{\dot{p}}}}_{D,k}^H{\bf{H}}_{RD,k}^H{{\bf{w}}_{T,k}}} ]} \right|^2} = {E_{R,k}}{ {{\rm{Tr}}( {{\bf{\hat {\underline{C}}}}_{RD,k}^{}} )} }$, and
\begin{equation}\label{4-17}
\begin{aligned}
& {E_{R,k}}\operatorname{var} \left( {{\mathbf{{\dot{p}}}}_{D,k}^H{\mathbf{H}}_{RD,k}^H{{\mathbf{w}}_{T,k}}} \right) = {\left( {{\text{Tr}}\left( {{\mathbf{\hat {\underline{C}}}}_{RD,k}^{}} \right)} \right)^{ - 1}}{E_{R,k}}\delta _{k}^{RD} \\
& \mathbb E\left[{{\textsf{MUI}}_{D,k}}\right] = \sum\limits_{j = 1,j \ne k}^K \frac{{E_{R,j}}{\rm{Tr}}\left( {\left( {{{\bf{\underline{C}}}_{RD,k}} - {{{\bf{\hat {\underline{C}}}}}_{RD,k}}} \right){{{\bf{\hat {\underline{C}}}}}_{RD,j}}} \right)}{{ {{\rm{Tr}}\left( {{\bf{\hat {\underline{C}}}}_{RD,j}^{}} \right)} }} \\
& \mathbb E\left[ \textsf{D}_{D,k}^\textsf{T} \right] = {\nu _R}\sum\limits_{l = 1}^K \frac{{E_{R,l}}{\rm{Tr}}\left( {{{\bf{C}}_{RD,k}}{\rm{diag}}\left( {{{\bf{{\dot{P}}}}_T}{{{\bf{\hat {\underline{C}}}}}_{RD,l}}{\bf{{\dot{P}}}}_T^H} \right)} \right)}{\left({\lambda _1}\left( {{{{\bf{\tilde C}}}_{RD,k}}} \right)\right)^{-1}{ {{\rm{Tr}}\left( {{{{\bf{\hat {\underline{C}}}}}_{RD,l}}} \right)} }} \\
& \textsf{D}_{D,k}^\textsf{R}  = \frac{{{\mu _{D,k}}{E_{R,k}}}}{{{\text{Tr}}\left( {{{{\mathbf{\hat {\underline{C}}}}}_{RD,k}}} \right)}}\left( {\frac{{{\mathbf{{\dot{p}}}}_{D,k}^H{\text{diag}}\left( {{{\mathbf{{\dot{p}}}}_{D,k}}{\mathbf{{\dot{p}}}}_{D,k}^H} \right){{\mathbf{{\dot{p}}}}_{D,k}}}}{{{{\left( {{\text{Tr}}\left( {{{{\mathbf{\hat {\underline{C}}}}}_{RD,k}}} \right)} \right)}^{ - 2}}}}} \right. \\
&  +  \frac{  \frac{{{\mu _R}}}{\tau }\left( {\frac{1}{{{E_T}}} { + \frac{1}{{{\mu _R}{E_T}}}}+ {\beta _{RD,k}}\left( {{\lambda _1}\left( {{{{\mathbf{\tilde C}}}_{RD,k}}} \right) + {\nu _{D,k}}} \right)} \right)    }{\left({{\text{Tr}}\left( {{\mathbf{\underline{C}}}_{RD,k}^3{\mathbf{\Gamma }}_{RD,k}^2} \right)}\right)^{-1}{{\lambda _1}\left( {{{{\mathbf{\tilde C}}}_{RD,k}}} \right)}} \\
& \left. { + \sum\limits_{j = 1,j \ne k}^K {\frac{{{E_{R,j}}{\beta _{RD,j}}}}{{{E_{R,k}}}}{\text{Tr}}\left( {{{\mathbf{C}}_{RD,k}}{{\mathbf{{\dot{P}}}}_T}{{{\mathbf{\hat {\underline{C}}}}}_{RD,j}}{\mathbf{{\dot{P}}}}_T^H} \right)}  + \frac{{{\text{Tr}}\left( {{{{\mathbf{\hat {\underline{C}}}}}_{RD,k}}} \right)}}{{{E_{R,k}}}}} \right)
\end{aligned}
\end{equation}
$\delta _{k}^{RD}$ is defined as $\delta _{k}^{RD} = \mathbb E[ {{{| {\Delta {\bf{\underline{h}}}_{RD,k}^H{{{\bf{\hat {\underline{h}}}}}_{RD,k}}} |}^2}} ]$ and is given by Lemma \ref{lm1} in Appendix-A. The scaling behaviors of above factors are shown in Table \ref{tab2}.
\end{theorem}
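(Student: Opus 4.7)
The plan is to mirror the Appendix-E proof of Theorem~\ref{t4}, now applied to the $R\to D_k$ rate~(\ref{4-15}). First I would push every quadratic form through the outer projection by writing $\mathbf{\dot p}_{D,k}^H\mathbf{H}_{RD,j}^H\mathbf{w}_{T,k} = \mathbf{\underline h}_{RD,j}^H\mathbf{\underline w}_{T,k}$ with $\mathbf{\underline h}_{RD,j}=\mathbf{\dot P}_T^H\mathbf{H}_{RD,j}\mathbf{\dot p}_{D,j}$, and then invoke the LMMSE decomposition $\mathbf{\underline h}_{RD,k} = \mathbf{\underline{\hat h}}_{RD,k}+\Delta\mathbf{\underline h}_{RD,k}$ from Theorem~\ref{t3}, whose two components are uncorrelated. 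The zero-forcing identity $\mathbf{\underline{\hat H}}_{RD}^H\mathbf{\underline W}_T = \mathbf{\hat\Upsilon}^{-1/2}$ then gives $\mathbf{\underline{\hat h}}_{RD,j}^H\mathbf{\underline w}_{T,k} = [\mathbf{\hat\Upsilon}]_{kk}^{-1/2}\delta_{jk}$, so every signal and interference inner product collapses to one depending only on $\Delta\mathbf{\underline h}_{RD,k}$ and $[\mathbf{\hat\Upsilon}]_{kk}$.

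Next, under the normalization $\mathrm{Tr}(\mathbf{\underline C}_{RD,k})=\mathcal{O}(N_T)$ together with \textbf{A1}--\textbf{A2}, the trace lemma used in~\cite{IEEEhowto:WagnerIT12} supplies the deterministic equivalent $[\mathbf{\hat\Upsilon}]_{kk}^{-1}\to \mathrm{Tr}(\mathbf{\underline{\hat C}}_{RD,k})$ in the large-$(N_R,N_T)$ regime. Substituting this, the desired-signal power becomes $E_{R,k}\mathrm{Tr}(\mathbf{\underline{\hat C}}_{RD,k})$ and the variance term reduces to $E_{R,k}\delta_k^{RD}/\mathrm{Tr}(\mathbf{\underline{\hat C}}_{RD,k})$ with $\delta_k^{RD}$ supplied by Lemma~\ref{lm1}. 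The MUI expectation, after ZF annihilates the $\mathbf{\underline{\hat h}}_{RD,k}$ part, reduces to $E_{R,j}\mathrm{Tr}\bigl((\mathbf{\underline C}_{RD,k}-\mathbf{\underline{\hat C}}_{RD,k})\mathbf{\underline{\hat C}}_{RD,j}\bigr)/\mathrm{Tr}(\mathbf{\underline{\hat C}}_{RD,j})$ once the Theorem~\ref{t3} error covariance and the deterministic equivalent of $\mathbb E[\mathbf{\underline w}_{T,j}\mathbf{\underline w}_{T,j}^H]$ are inserted. The transmit-distortion term follows by plugging $\mathbf{\Theta}_R^\textsf{T}=\nu_R\,\mathrm{diag}\bigl(\sum_l E_{R,l}\mathbb E[\mathbf{w}_{T,l}\mathbf{w}_{T,l}^H]\bigr)$ into $\textsf{D}_{D,k}^\textsf{T}$ and then exploiting \textbf{A2} together with $\mathbf{\dot p}_{D,k}=\mathbf{u}_1(\mathbf{\tilde C}_{RD,k})$ so that $\lambda_1(\mathbf{\tilde C}_{RD,k})$ factors out, as in the derivation leading to~(\ref{4-5-0}).

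The hard part is the receive-distortion $\textsf{D}_{D,k}^\textsf{R}=\mathbf{\dot p}_{D,k}^H\mathbf{\Theta}_{D,k}^\textsf{R}\mathbf{\dot p}_{D,k}$, whose covariance $\mathbf{\Theta}_{D,k}^\textsf{R}=\mu_{D,k}\,\mathrm{diag}(\mathbb E[\mathbf{y}_{D,k}\mathbf{y}_{D,k}^H])$ couples back to the entire received signal at $D_k$. I would decompose $\mathbb E[\mathbf{y}_{D,k}\mathbf{y}_{D,k}^H]$ into four pieces coming from the four terms of~(\ref{2-7}): (i) a desired-signal piece, proportional to the coefficient already computed, that yields the $\mathbf{\dot p}_{D,k}^H\mathrm{diag}(\mathbf{\dot p}_{D,k}\mathbf{\dot p}_{D,k}^H)\mathbf{\dot p}_{D,k}$ contribution; (ii) a channel-estimation-residual piece that, through Theorem~\ref{t3} and the pilot-phase noise variance in~(\ref{4-8-1}), produces the $\lambda_1(\mathbf{\tilde C}_{RD,k})\,\mathrm{Tr}(\mathbf{\underline C}_{RD,k}^3\mathbf{\Gamma}_{RD,k}^2)$ factor with its $\beta_{RD,k}$-dependent coefficient; (iii) an inter-user piece collapsing to $\sum_{j\ne k}(E_{R,j}\beta_{RD,j}/E_{R,k})\mathrm{Tr}(\mathbf{C}_{RD,k}\mathbf{\dot P}_T\mathbf{\underline{\hat C}}_{RD,j}\mathbf{\dot P}_T^H)$; and (iv) a background-noise piece giving $\mathrm{Tr}(\mathbf{\underline{\hat C}}_{RD,k})/E_{R,k}$. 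Carefully bookkeeping how $\mathrm{diag}(\cdot)$ interacts with the Toeplitz/unit-diagonal property of~\textbf{A2} so that per-antenna powers aggregate into the right trace normalization is what makes this step delicate. Collecting all four pieces under the common factor $\mu_{D,k}E_{R,k}/\mathrm{Tr}(\mathbf{\underline{\hat C}}_{RD,k})$ reproduces~(\ref{4-17}), after which the scaling rows of Table~\ref{tab2} follow by applying~\textbf{A1} and the $\mathcal{O}(N_T)$ normalization to each trace.
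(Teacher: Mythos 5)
Your proposal is correct and follows essentially the same route the paper intends: the paper omits the proof of Theorem~\ref{t5} precisely because it mirrors the Appendix-E argument for Theorem~\ref{t4}, and your plan reproduces that argument — effective channels through $\mathbf{\dot P}_T$, the LMMSE decomposition and ZF identity $\mathbf{\underline{\hat H}}_{RD}^H\mathbf{\underline W}_T=\mathbf{\hat\Upsilon}^{-1/2}$, the deterministic equivalents from (\ref{D1}), and the four-piece decomposition of $\mathbb E[\mathbf{y}_{D,k}\mathbf{y}_{D,k}^H]$ for the receive-distortion term. The only blemish is a harmless index slip in your opening identity (the MUI term involves $\mathbf{H}_{RD,k}$ with precoder $\mathbf{w}_{T,j}$, so the relevant effective channel is always $\mathbf{\underline h}_{RD,k}$), which does not affect the final expressions you state.
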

\begin{proof}
The proof is similar with that for Theorem \ref{t4}, and thus it is neglected.
\end{proof}

From Theorem \ref{t5} and Table \ref{tab2}, one can observe a strong similarity between the effects of hardware impairments on $R\to D_k$ channel and that on $S_k \to R$ channel. In particular, the effective distortion due to the transmit imperfection of relay $\textsf{D}_{D,k}^\textsf{T}$ is suppressed by the large transmit array of the relay, and it diminishes as $N_T\to \infty$ and $K \ll N_T$. Moreover, when $N_D$ is fixed, the achievable rate of $R\to D_k$ channel is limited by the term ${E_{R,k}}\operatorname{var} \left( {{\mathbf{{\dot{p}}}}_{D,k}^H{\mathbf{H}}_{RD,k}^H{{\mathbf{w}}_{T,k}}} \right)$ and distortion noise caused by receive imperfection of $D_k$. However, as $N_D$ scales with $N_T$ and $K \ll N_T$, these factors are suppressed by $\frac{1}{N_D}$. This is because that applying ${\bf {\dot{p}}}_{D,k}$ can be viewed as a post-suppression on receive distortion noise.

With HIA scheme, the following end-to-end rate for $S_k \to R \to D_k$ channel is achievable in the large-$(N_R,N_T)$ regime
\begin{equation}\label{4-18}
    \textsf{R}_k^{\rm HIA} = \min\left(\textsf{R}_{SR,k}^{\rm HIA},\textsf{R}_{RD,k}^{\rm HIA}\right)
\end{equation}
Different from that for upper bound in the last section, $\textsf{R}_k^{\rm HIA}$ cannot be expressed in a simple form like (\ref{3-7}). However, based on Table \ref{tab1} and Table \ref{tab2}, one can simply deduce that $\textsf{R}_k^{\rm Lower}$ will converge to a finite ceiling if $N_S$ and $N_D$ are fixed. However, with HIA scheme, the end-to-end achievable rate grows without bound as $\min(N_R,N_T) \to \infty$ if $N_S$ and $N_D$ scale with $\min(N_R,N_T)$ and $K\ll \min(N_R,N_T)$. This is the same to the situation with ideal hardware \cite{IEEEhowto:NgoJSAC14}.

\subsection{Discussion on Hardware Design}

Based on the achievable rate expressions, we discuss the hardware design of sources, destinations and relay in this subsection. Although the achievable rate expressions hold for arbitrary $K\le \min\{N_R,N_T\}$, we will restrict our analysis to $K\ll \min\{N_R,N_T\}$. This is a fundamental condition to ensure the benefit of massive MIMO, i.e., the MUI is suppressed by a large surplus of degrees of freedom \cite{IEEEhowto:LarssonCM14}. In MM-FDR with hardware impairments, it is also essential to control the number of S-D pairs (by, e.g., some user scheduling scheme) to suppress the EI and distortions.

\subsubsection{Hardware Design of Sources and Destinations}

As $N_S$ and $N_D$ are fixed and small (due to e.g., size limitation), from Theorem \ref{t4} and \ref{t5}, it is of great importance to reduce $\nu_{S,k}$, $\nu_{D,k}$ and $\mu_{D,k}$ in order to alleviate the ``ceiling effect'' on achievable rate. This indicates that it is beneficial to use high-quality hardware at sources and destinations in this case. Note that, different from the relay side, the use of high-quality hardware at sources and destinations is affordable, since the increased cost scales with $N_S$ and $N_D$, instead of $\min\{N_R,N_T\}$.

When $N_S$ and $N_D$ scale with $\min\{N_R,N_T\}$, from Theorem \ref{t4} and \ref{t5}, the distortion noise powers due to hardware impairments at sources and destinations are lower order terms when compared with desired signal powers. This means that the sources and destinations can decrease the hardware quality to some extent without degrading the performance. Later an example will be given to show how to achieve this.

\subsubsection{Hardware Design of Relay}

Similar to that for sources and destinations when $N_S$ and $N_D$ scale with $\min\{N_R,N_T\}$, the distortion noises caused by hardware impairments at the relay are lower order terms when compared with desired signal powers, which makes it possible to decrease the hardware quality without hurting the performance greatly.

\emph{Example:} To get a clear insight, we consider the achievable rate of $S\to R$ {channel\footnotemark[6]} given by Theorem \ref{t4}. We assume that there is only one S-D pair and let $\nu_{R}=\mu_{R}$. When $N_S$ scales with $N_R$, the achievable rate of $S_k\to R$ channel reduces to
{\footnotetext[6]{We consider the achievable rate of $S_k \!\to \! R$ channel to simplify the analysis. Similar result can be obtained based on the achievable rate of $R\!\to \!D$ channel in Theorem \ref{t5}. Therefore, the analysis is also valid for end-to-end achievable rate.}}
\begin{equation}\label{4-19}
\begin{aligned}
 & \textsf{R}_{SR,k}^{\rm HIA} = \frac{{T - 2K\tau }}{T}\\
  &\times {\log _2} \left( {1 + \frac{1}{{{c_1}{\nu _{S,1}}N_R^{ - 1} + {c_2}{\mu _R}N_R^{ - 1} + {\cal O}\left( {N_R^{ - 1}} \right)}}} \right)
  \end{aligned}
\end{equation}
where $c_1$ and $c_2$ are positive constants independent of $\nu_{S,1}$, $\mu_{R}$ and $\{N_S,N_R\}$. In (\ref{4-19}), we have replaced $N_S$ with the product of $N_R$ and some constant. Assume we wish to achieve rate ${R}_T$. According to (\ref{4-19}), we have (as $N_R \to \infty$) ${c_1}{\nu _{S,1}} + {c_2}{\mu _R} = ( {{{( {{2^{\frac{T}{{T - 2K\tau }}{R_T}}} - 1} )}^{ - 1}} - {\cal O}( {N_R^{ - 1}} )} ){N_R} $ $\approx {( {{2^{\frac{T}{{T - 2K\tau }}{R_T}}} - 1} )^{ - 1}}{N_R}$. The result is encouraging since it implies that we can increase ${\nu _{S,1}}$ and ${\mu _R}$ linearly as $N_R$ increase (with $N_S$ scaling with $N_R$) without degrading the achievable rate. As a result, inexpensive MM-FDR is possible.

\section{Joint DOF and Power Optimization}

In this section, we propose a low-complexity JDPO algorithm to maximize the SE (defined as the sum of all destinations' achievable rates) of HIA-MM-FDR, subject to the maximum power constrains. The achievable rate expressions obtained in section V-C are utilized in the proposed algorithm. The algorithm needs only statistical knowledge of channels. Therefore, it can be computed offline at a central node (e.g., relay) and then broadcasts to the other nodes.

Let $E_{S,k}^{\max }$ and $E_R^{\max}$ be the maximum transmit power constraints at $S_k$ and relay, respectively, the SE optimization problem can be formulated as follows
\begin{equation}\label{5-2}
\begin{aligned}
\mathop {\max }\limits_{{E_{S,1}}, \cdots ,{E_{S,K}};{E_{R,1}}, \cdots ,{E_{R,K}};A_R,A_T} \textsf{SE} = \sum\limits_{k = 1}^K {\min \left\{ {{\textsf{R}_{SR,k}^{\rm HIA}},{\textsf{R}_{RD,k}^{\rm HIA}}} \right\}} \\
  s.t.\left\{ {\begin{array}{*{20}{l}}
  {0 \le {E_{S,k}} \le E_{S,k}^{\max },k = 1, \cdots ,K} \\
  {\sum\limits_{k = 1}^K {{E_{R,k}}}  \le E_R^{\max }} \\
  A_R \in \mathcal{A}_R, A_T \in {\cal A}_T
\end{array}} \right.
\end{aligned}
\end{equation}
where $\mathcal{A}_R = \{K,K+1,\cdots,N_R\}$ and  $\mathcal{A}_T = \{K,K+1,\cdots,N_T\}$.

According to Theorem 4 and Theorem 5, we rewrite ${\textsf{R}_{SR,k}^{\rm HIA}}$ and ${\textsf{R}_{RD,k}^{\rm HIA}}$ as ${\textsf{R}_{SR,k}^{\rm HIA}} = \frac{{T - 2K \tau }}{T}{\log _2}\left( {1 + {\gamma _{SR,k}}} \right)$ and ${\textsf{R}_{RD,k}^{\rm HIA}} = \frac{{T - 2K \tau }}{T}{\log _2}\left( {1 + {\gamma _{RD,k}}} \right)$, where ${{\gamma _{SR,k}}}$ and ${{\gamma _{RD,k}}}$ denote the effective received SINRs of $S_k\to R$ and $R \to D_k$ channels respectively, which can be expressed as
\begin{equation}\label{5-3}
    \begin{aligned}
        & {\gamma _{SR,k}} = \frac{{{E_{S,k}}}}{{\sum_{j = 1}^K {{a^R_{k,j}}{E_{S,j}}}  + \sum_{j = 1}^K {{b^R_{k,j}}{E_{R,j}}}  + \frac{ \left( {{\mu _R} + 1} \right)}{{{\text{Tr}}\left( {{{{\mathbf{\hat {\underline{C}}}}}_{SR,k}}} \right)}} }}\\
        & {\gamma _{RD,k}} = \frac{{{\text{Tr}}\left( {{\mathbf{\hat {{\underline{C}}}}}_{RD,k}^{}} \right){E_{R,k}}}}{{\sum_{j = 1}^K {b_{k,j}^D{E_{R,j}}}  + {\mu _{D,k}} + 1}}
    \end{aligned}
\end{equation}
where
\begin{equation*}
\begin{aligned}
   & a_{kj}^R = \frac{\delta _{kj}^{SR} + {\nu _{S,k}}{g_{kj}}}{\left( {{\text{Tr}}\left( {{{{\mathbf{\hat {\underline{C}}}}}_{SR,k}}} \right)} \right)^{ 2}} + {\mu _R}{\beta _{SR,j}}\frac{{{\text{Tr}}\left( {{{{\mathbf{\tilde C}}}_{SR,j}}{{\mathbf{\Omega }}_{S,j}}} \right)}}{{{\text{Tr}}\left( {{{{\mathbf{\hat {\underline{C}}}}}_{SR,k}}} \right)}}\\
   &  b_{kj}^R = \frac{{{\beta _{EI}}{\text{Tr}}\left( {{{{\mathbf{\tilde C}}}_{EI}}\left( {{{\mathbf{{\dot{P}}}}_T}{{{\mathbf{\hat {\underline{C}}}}}_{RD,j}}{\mathbf{{\dot{P}}}}_T^H + {\nu _R}{\text{diag}}\left( {{{\mathbf{{\dot{P}}}}_T}{{{\mathbf{\hat {\underline{C}}}}}_{RD,j}}{\mathbf{{\dot{P}}}}_T^H} \right)} \right)} \right)}}{{{\text{Tr}}\left( {{{{\mathbf{\hat {\underline{C}}}}}_{SR,k}}} \right){\text{Tr}}\left( {{{{\mathbf{\hat {\underline{C}}}}}_{RD,j}}} \right)}} \\
   & \times \left( {\frac{{{\text{Tr}}\left( {{{{\mathbf{\hat {\underline{C}}}}}_{SR,k}}{\mathbf{{\dot{P}}}}_R^H{{\mathbf{C}}_{EI}}{{\mathbf{{\dot{P}}}}_R}} \right)}}{{{\text{Tr}}\left( {{{{\mathbf{\hat {\underline{C}}}}}_{SR,k}}} \right)}} + {\mu _R}} \right)\\
\end{aligned}
\end{equation*}
\begin{equation*}
\begin{aligned}
   & b_{kj}^D = \\
   & \left\{ \begin{aligned}
   & \frac{{\delta _{kk}^{RD}}}{{{\text{Tr}}\left( {{\mathbf{\hat {\underline{C}}}}_{RD,k}^{}} \right)}} + \frac{{{\nu _R}{\text{Tr}}\left( {{{\mathbf{C}}_{RD,k}}{\text{diag}}\left( {{{\mathbf{{\dot{P}}}}_T}{{{\mathbf{\hat  {\underline{C}}}}}_{RD,k}}{\mathbf{{\dot{P}}}}_T^H} \right)} \right)}}{{{{\left( {{\lambda _1}\left( {{{{\mathbf{\tilde C}}}_{RD,k}}} \right)} \right)}^{ - 1}}{\text{Tr}}\left( {{{{\mathbf{\hat {\underline{C}}}}}_{RD,k}}} \right)}}\\
   & + \frac{{{\mu _{D,k}}{\mathbf{{\dot{p}}}}_{D,k}^H{\text{diag}}\left( {{{\mathbf{{\dot{p}}}}_{D,k}}{\mathbf{{\dot{p}}}}_{D,k}^H} \right){{\mathbf{{\dot{p}}}}_{D,k}}}}{{{{\left( {{\text{Tr}}\left( {{{{\mathbf{\hat  {\underline{C}}}}}_{RD,k}}} \right)} \right)}^{ - 1}}}} +\frac{{{\mu _{D,k}}{\mu _R}{\text{Tr}}\left( {{\mathbf{\underline{C}}}_{RD,k}^3{\mathbf{\Gamma }}_{RD,k}^2} \right)}}{{\tau {\lambda _1}\left( {{{{\mathbf{\tilde C}}}_{RD,k}}} \right){\text{Tr}}\left( {{{{\mathbf{\hat {\underline{C}}}}}_{RD,k}}} \right)}} \\
   & \times \left( {\frac{1}{{{E_T}}} + {\beta _{RD,k}}\left( {{\lambda _1}\left( {{{{\mathbf{\tilde C}}}_{RD,k}}} \right) + {\nu _{D,k}}} \right) + \frac{1}{{{E_T}{\mu _R}}}} \right),j = k \\
   & \frac{{\delta _{kj}^{RD}}}{{{\text{Tr}}\left( {{\mathbf{\hat {\underline{C}}}}_{RD,j}^{}} \right)}} + {\nu _R}\frac{{{\text{Tr}}\left( {{{\mathbf{C}}_{RD,k}}{\text{diag}}\left( {{{\mathbf{{\dot{P}}}}_T}{{{\mathbf{\hat {\underline{C}}}}}_{RD,j}}{\mathbf{{\dot{P}}}}_T^H} \right)} \right)}}{{{{\left( {{\lambda _1}\left( {{{{\mathbf{\tilde C}}}_{RD,k}}} \right)} \right)}^{ - 1}}{\text{Tr}}\left( {{{{\mathbf{\hat {\underline{C}}}}}_{RD,j}}} \right)}} \\
   & + \frac{{{\mu _{D,k}}{\beta _{RD,j}}{\text{Tr}}\left( {{{\mathbf{C}}_{RD,k}}{{\mathbf{{\dot{P}}}}_T}{{{\mathbf{\hat {\underline{C}}}}}_{RD,j}}{\mathbf{{\dot{P}}}}_T^H} \right)}}{{{\text{Tr}}\left( {{{{\mathbf{\hat {\underline{C}}}}}_{RD,k}}} \right)}},j \ne k
\end{aligned}  \right.
\end{aligned}
\end{equation*}

With the above results and the formula $\frac{{T - 2K\tau }}{T}\sum_{k = 1}^K$ $ {{{\log }_2}\left( {1 + {\gamma _k}} \right)}  = \frac{{T - 2K\tau }}{T}{\log _2}\left( {\prod_{k = 1}^K {\left( {1 + {\gamma _k}} \right)} } \right)$ ($\gamma_k$ is defined as $\gamma_k =\min\{\gamma_{SR,k},\gamma_{RD,k}\}$), (\ref{5-2}) can be rewritten as
\begin{equation}\label{5-4}
    \begin{array}{l}
        \mathop {\min }\limits_{{E_{S,1}}, \cdots ,{E_{S,K}};{E_{R,1}}, \cdots ,{E_{R,K}};{A_R},{A_T}} \prod\limits_{k = 1}^K {{{\left( {1 + {\gamma _k}} \right)}^{ - 1}}}  \\
        s.t.{\rm{ }}\left\{ \begin{array}{l}
        {C_1}:\frac{{{E_{S,k}}}}{{\sum\limits_{j = 1}^K {{a^R_{k,j}}{E_{S,j}}}  + \sum\limits_{j = 1}^K {{b^R_{k,j}}{E_{R,j}}}  +  \left( {{\mu _R} + 1} \right){\left( {{\text{Tr}}\left( {{{{\mathbf{\hat {\underline{C}}}}}_{SR,k}}} \right)} \right)^{ - 1}} }} \ge {\gamma _k}, \\
        k = 1,2, \cdots ,K\\
        {C_2}:\frac{{{\text{Tr}}\left( {{\mathbf{\hat {{\underline{C}}}}}_{RD,k}^{}} \right){E_{R,k}}}}{{\sum_{j = 1}^K {b_{k,j}^D{E_{R,j}}}  + {\mu _{D,k}} + 1}} \ge {\gamma _k},k = 1,2, \cdots ,K\\
        {C_3}:{E_{S,k}} \le E_{S,k}^{\max },k = 1,2, \cdots ,K\\
        {C_4}:\sum\limits_{j = 1}^K {{E_{R,j}}}  \le {E_R} \\
        {C_5}:  A_R \in \mathcal{A}_R, A_T \in {\cal A}_T
        \end{array} \right.
    \end{array}
\end{equation}
By using some algebra manipulations on the inequality constraints, $C_1$, $C_2$ can be rewritten as
\begin{equation}\label{5-5}
    \begin{array}{l}
        {C_1}:\sum\limits_{j = 1}^K {a_{k,j}^RE_{S,k}^{ - 1}{E_{S,j}}{\gamma _k}}  + \sum\limits_{j = 1}^K {b_{k,j}^RE_{S,k}^{ - 1}{E_{R,j}}{\gamma _k}}  \\
        + \left( {{\mu _R} + 1} \right){\left( {{\text{Tr}}\left( {{{{\mathbf{\hat C}}}_{SR,k}}} \right)} \right)^{ - 1}}E_{S,k}^{ - 1}{\gamma _k} \le 1\\
        {C_2}:\sum\limits_{j = 1}^K {b_{k,j}^DE_{R,k}^{ - 1}{E_{R,j}}{\gamma _k}}  + \left( {{\mu _{D,k}} + 1} \right)E_{R,k}^{ - 1}{\gamma _k} \le {\text{Tr}}\left( {{\mathbf{\hat {\underline{C}}}}_{RD,k}^{}} \right)
    \end{array}
\end{equation}

Problem (\ref{5-4}) is a combinatorial optimization problem which is in general NP hard. To solve (\ref{5-4}), we propose a JDPO algorithm to find suboptimal solution. Our strategy is as follows. First, using the similar approach as that in \cite{IEEEhowto:WeeraddanaVT11}, we show that the power control problem with fixed $A_R$ and $A_T$ can be approximated as a geometric programming (GP) and solved efficiently using the convex optimization tools \cite{IEEEhowto:BoydOE2007}. Then we present a heuristic approach based on sequential optimization to solve $A_R$ and $A_T$.

\emph{Power Control with Fixed $A_R$ and $A_T$:} From (\ref{5-4}) and (\ref{5-5}), the inequality constraints are posynomial functions \cite{IEEEhowto:BoydOE2007}. As shown in \cite{IEEEhowto:WeeraddanaVT11}, for any ${\gamma _{k}}>0$, $1 +  {\gamma _{k}}$ can be approximated using a monomial function ${\theta _k}\gamma _{k}^{{{\omega _k}}}$ near a point ${{\hat\gamma} _{k}}$, where ${\omega_k} = {\hat\gamma _{k}}/\left( {1 + {\hat\gamma _{k}}} \right)$ and ${\theta_k} = \hat\gamma _{k}^{ - {\omega_k}}\left( {1 + {\hat\gamma _{k}}} \right)$. By using this, the target function of (\ref{5-4}) can be approximated as a monomial function, i.e., $\prod_{k = 1}^K {\theta _k^{ - 1}\gamma _k^{ - {\omega _k}}}$. As a result, the solution for power control problem (with fixed $A_R$ and $A_T$) can be obtained by solving several GPs. The algorithm is summarized in Algorithm \ref{alg1}.

\begin{algorithm}
  \caption{Solve the power control problem with fixed $A_R$ and $A_T$ by GP.}
  \begin{algorithmic}[1]
    \State {\textbf{Initialization:} Let ${{\hat\gamma}^{(i)} _{k}}$ denote the solution of ${{\gamma}_k}$ after the $i$th iteration. Compute the initial value ${{\hat\gamma}^{(0)} _{ k}}$ using (\ref{5-3}). Set a tolerance $\varepsilon$ and the maximum iteration times $L_{GP}$}.
    \State {\textbf{For the $(i+1)$th iteration:}}
        \begin{itemize}
          \item Compute ${\omega_k} = {{\hat\gamma}^{(i)} _{k}}/\left( {1 + {{\hat\gamma}^{(i)} _{k}}} \right)$ and ${\theta_k} = {\left( {\hat \gamma _{k}^{(i)}} \right)^{ - {b_k}}}\left( {1 + \hat \gamma _{ k}^{(i)}} \right)$.
          \item Solve the following GP problem
                    \begin{array}{l}
                        \mathop {\min }\limits_{{E_{S,1}}, \cdots ,{E_{S,K}};{E_{R,1}}, \cdots ,{E_{R,K}};} \prod\limits_{k = 1}^K {\theta _k^{ - 1}\gamma _k^{ - {\omega _k}}} \hspace{.5em} s.t.  \hspace{.5em}  C_1 \sim C_5
                    \end{array}
        \end{itemize}
    \State If $\mathop {\max }\limits_{k\! =\! 1, \cdots ,K} \left| {{\hat\gamma} _{ k}^{(i+1)}\!\! -\!\! {{\hat\gamma}^{(i)} _{ k}}} \right| \!<\! \varepsilon $ or $i\!+\!1\!=\!L_{GP}$, stop. Otherwise, set $i\! =\! i\!+\! 1$ and go back to step 2.
  \end{algorithmic}
  \label{alg1}
\end{algorithm}

\emph{Optimization of $A_R$ and $A_T$:} An obvious approach to obtain $A_R$ and $A_T$ is via 2-D search. However, the complexity is very high since  the power control problem described in the above should be computed for each $(A_R,A_T)$. In this work, we present a heuristic approach to find suboptimal solution. In this approach, $A_R$ ($A_T$) is searched over a subset ${\cal \tilde{A}}_R$ (${\cal \tilde{A}}_T$), whose elements are sampled from ${\cal {A}}_R$ (${\cal {A}}_T$). For example, ${\cal \tilde{A}}_R$ can be selected as $\{K, 2K, \cdots, \left\lfloor {\frac{{{N_R}}}{K}} \right\rfloor K\}$. The motivation of this idea is that the target function does not change dramatically as $A_R$ and $A_T$ increase/decrease by a small step as observed from the numerical results. Moreover, instead of 2-D search, ${\cal {A}}_R$ and ${\cal {A}}_T$ are optimized sequentially over the sampled subsets ${\cal \tilde{A}}_R$ and ${\cal \tilde{A}}_T$ via 1-D search. The process repeats several times (much less than $\min\{|{\cal \tilde{A}}_R|,|{\cal \tilde{A}}_T|\}$) to improve the solution.

A summary of the JDPO algorithm is presented in Algorithm \ref{alg2}. The algorithm converges to a local optimum since the target function is improved in each iteration. The complexity of Algorithm \ref{alg2} is upper bounded as $L( {| {{{\tilde {\cal A}}_R}} | + | {{{\tilde {\cal A}}_T}} |} ){L_{GP}}{{\cal C}_{GP}}$, where ${{\cal C}_{GP}}$ is the complexity to solve GP when $A_R=N_R$ and $A_T=N_T$. Usually, GP is solved using inner point method with polynomial time. The exact expression of ${{\cal C}_{GP}}$ is quite difficult and related with the structure of the problem. Some insights on ${{\cal C}_{GP}}$ can be found in [37, Sec. 11.5].

\begin{algorithm}[t]
  \caption{JDPO algorithm for SE optimization.}
  \begin{algorithmic}[1]
    \State \textbf{Initialization:} Set an initial $A_T$, i.e., $A_T^{(0)}$, and a maximum repeat times $L$. Select the subsets ${\cal \tilde{A}}_R$ and ${\cal \tilde{A}}_T$.
    \ForAll {$l=1,\cdots,L$}
    \State {Set $A_T=A_T^{(l-1)}$ and compute $\textsf{SE}$ using Algorithm \ref{alg1} for all $A_R \in {\cal \tilde{A}}_R$.}
    \State {Update $A_R^{(l)} = \arg \mathop {\max }\limits_{{A_R}} \textsf{SE}$}.
    \State {Set $A_R=A_R^{(l)}$ and compute $\textsf{SE}$ using Algorithm \ref{alg1} for all $A_T \in {\cal \tilde{A}}_T$.}
    \State Update $A_T^{(l)} = \arg \mathop {\max }\limits_{{A_T}} \textsf{SE}$.
    \EndFor
  \end{algorithmic}
  \label{alg2}
\end{algorithm}

\begin{remark}\label{rm2}
\emph{(JDPO for EE Optimization)} Instead of the SE optimization, we can also formulate an EE (defined as the SE divided by total transmit power \cite{IEEEhowto:Ngo2013TC}) optimization problem subject to a target SE ${R_T}$
\begin{equation}\label{5-6}
\begin{array}{ll}
    \mathop {\max }\limits_{{E_{S,1}}, \cdots ,{E_{S,K}};{E_{R,1}}, \cdots ,{E_{R,K}};A_R,A_T} \frac{{\sum\limits_{k = 1}^K {\min \left\{ {\textsf{R}_{SR,k}^{{\rm{HIA}}},\textsf{R}_{RD,k}^{{\rm{HIA}}}} \right\}} }}{{\sum\limits_{k = 1}^K {{E_{S,k}}}  + \sum\limits_{k = 1}^K {{E_{R,k}}} }} \\
    \hspace{3em} s.t.\left\{ \begin{array}{l}
    {\sum\limits_{k = 1}^K {\min \left\{ {{\textsf{R}_{SR,k}^{\rm HIA}},{\textsf{R}_{RD,k}^{\rm HIA}}} \right\}} = {R_T}} \\
    0 \le {E_{S,k}} \le E_{S,k}^{\max },k = 1, \cdots ,K\\
    \sum\limits_{k = 1}^K {{E_{R,k}}}  \le {E_R^{\max}}
    \end{array} \right.
    \end{array}
\end{equation}
Using (\ref{5-3}), the above problem is equivalent to $\mathop {\max }\limits_{{E_{S,1}}, \cdots ,{E_{S,K}};{E_{R,1}}, \cdots ,{E_{R,K}};A_R,A_T} {\sum_{k = 1}^K {{E_{S,k}}}  + \sum_{k = 1}^K {{E_{R,k}}} \hspace{.3em} s.t.  } $ ${C_1\sim C_5}$ and $ \prod_{k = 1}^K {\left( {1 + {\gamma _k}} \right)}  = {2^{\frac{T}{{T - 2K\tau }}{R_T}}} $. With the technique in \cite{IEEEhowto:WeeraddanaVT11}, the equality constraint can be converted to a monomial, and the power control problem (with fixed $A_R$ and $A_T$) becomes a GP. Thus, (\ref{5-6}) can be solving by using a similar JDPO algorithm as that in Algorithm \ref{alg2}.
\end{remark}

\section{Simulation Results and Discussion}
This section presents the simulation results to verify the analyses in the previous sections. Throughout this section, we set $\nu_{S,k}=\nu_{S}$, $\nu_{D,k}=\nu_{D}$, $\mu_{D,k}=\mu_{D}$ for convenience. The correlation matrices of desired channels are generated with the exponential correlation model \cite{IEEEhowto:LoykaCL01}
\begin{equation*}\label{6-1}
 {\left[ {{{\bf{C}}_{i,k}}{\text{ or }}{{{\bf{\tilde C}}}_{i,k}}} \right]_{l,j}} = \left\{ {\begin{array}{*{20}{l}}
{r_{i,k}^{j - l},l \le j}\\
{{{\left( {r_{i,k}^{l - j}} \right)}^*},l > j}
\end{array}} \right.,i \in \left\{ {SR,RD} \right\}
\end{equation*}
The model approximates the property of ULA, where the correlation between adjacent antennas is $|r_{i,k}|\in [0,1]$ and the phase of $r_{i,k}$ describes the angle of arrival/departure as seen from the array. \cite{IEEEhowto:HammarwallTSP09} shows how to map some of the parameters of ULA to this model. The correlation matrices of EI channel ${{\bf{C}}_{EI}}$ and ${{{\bf{\tilde C}}}_{EI}}$ are generated similarly with a parameter $r_{EI}$. For convenience, we let $|r_{i,k}|=r_0$, $\forall k\in\{1,\cdots, K\}$. The phases of $r_{i,k}$ and $r_{EI}$ are uniformly selected from $[0,\pi]$.

We assume that 25$\sim$35dB EI cancellation can be provided by passive EI suppression techniques (more than 40dB cancellation has been reported by using such techniques for infrastructure node \cite{IEEEhowto:EverettTWC13}). The variances of EI channel (after passive cancellation) and desired channels are selected as $\beta_{EI}/\beta_{i,k} \in[0,25]$dB. With the path loss model in \cite{IEEEhowto:EverettTWC13}, the above range corresponds to the setup with the distances from sources/destinations to relay varying from 250m to 500m and 10m segregation between relay transmit and receive arrays.

\subsection{Impact of Hardware Impairments}

This subsection considers the effect of hardware impairments on SE of MM-FDR. The channel coherent time is set to $T=300$ and the length of pilot sequence is $\tau=2$.

\begin{figure}[t]
\centering
\includegraphics[width=8.5cm]{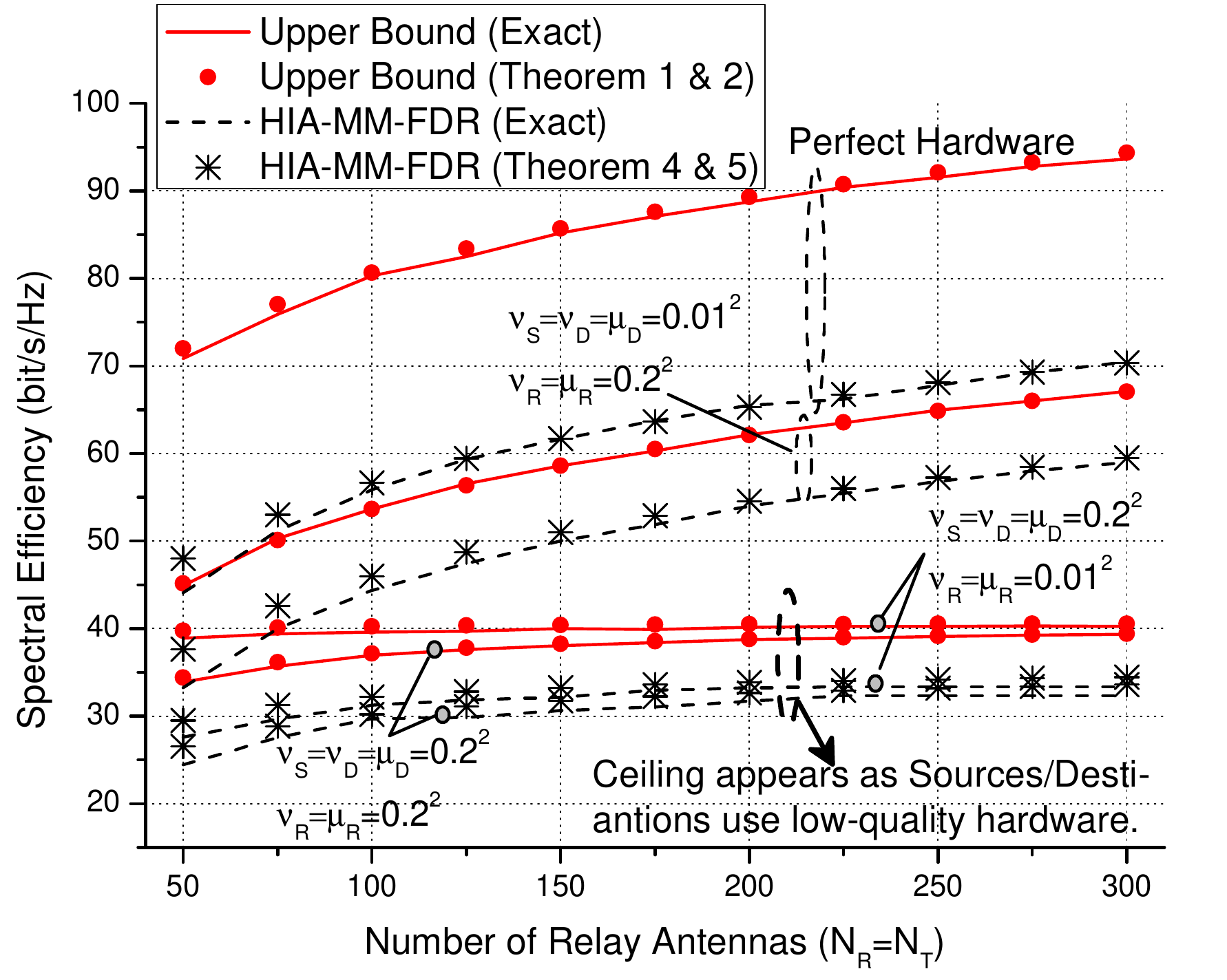}
\caption{SE of traditional MM-FDR with single antenna at sources/destinations, where $N_S=N_D=1$, $K=10$, $E_{S,k}=E_{R,k}=8$dB, $E_T=10$dB, $\beta_{SR,k}=\beta_{RD,k}=\beta_{EI} = 1$, $r_0=0.2$, $|r_{EI}|=0.8$, $A_R=\max\{K,\left\lfloor {\frac{2}{3}{N_R}} \right\rfloor \}$, $A_T=\max\{K,\left\lfloor {\frac{2}{3}{N_T}} \right\rfloor \}$.} \label{fig1a}
\end{figure}

\begin{figure}[t]
\centering
\includegraphics[width=8.5cm]{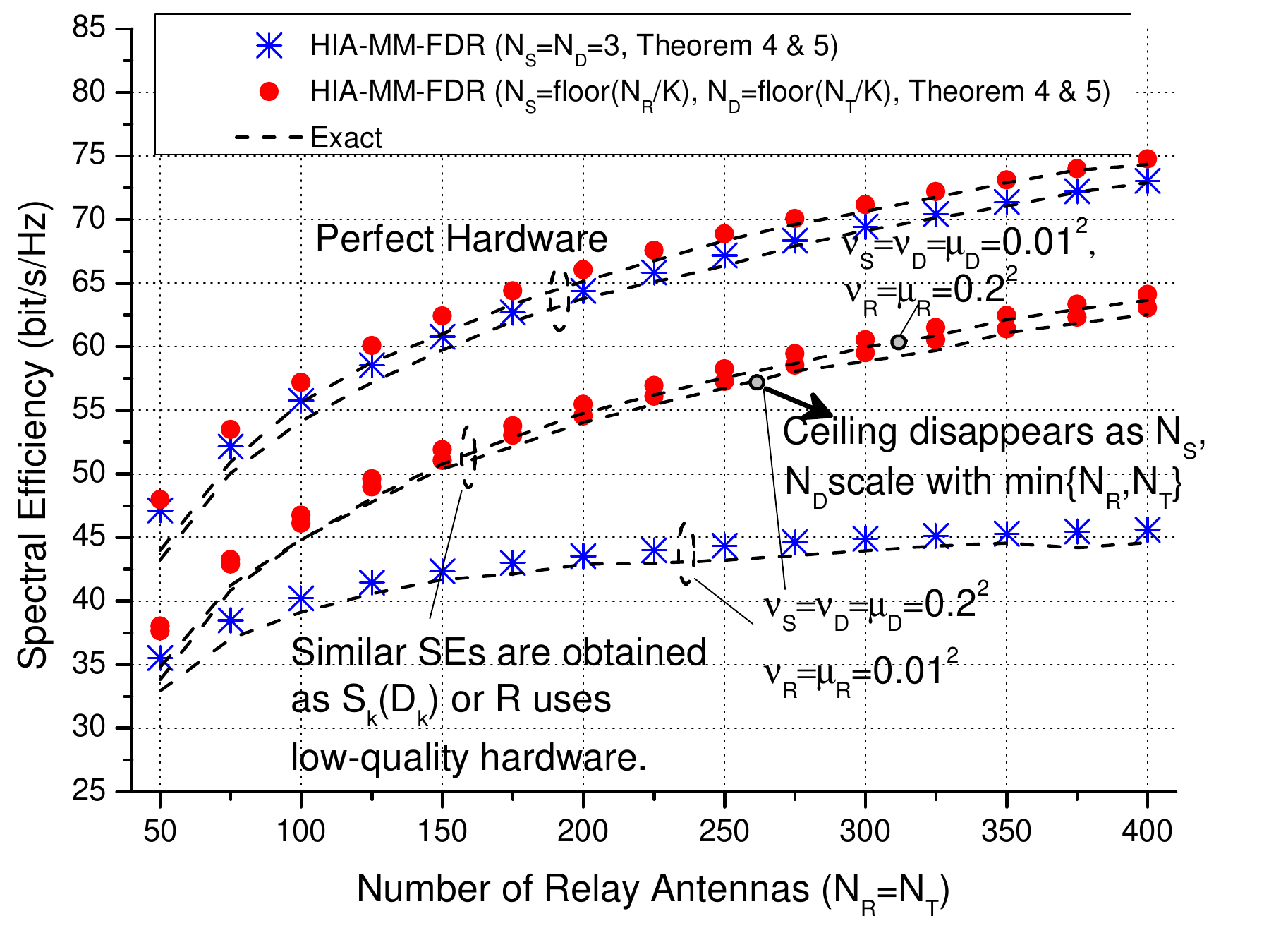}
\caption{SE of MM-FDR with multiple antennas at sources/destinations. The setup is the same with Fig \ref{fig1b}.} \label{fig1b}
\end{figure}

Fig. \ref{fig1a} shows the SE of MM-FDR with single antenna at sources/destinations with different levels of hardware impairments. The SEs based on transceiver scheme in section IV (which achieves the upper bound when $N_S=N_D=1$) and HIA scheme (with ${\bf p}_{S,k}={\bf p}_{D,k}=1$) are simulated. From Fig. \ref{fig1a}, the SE is more sensitive to hardware impairments at sources and destinations. When $\nu_S=\nu_D=\mu_D=0.2^2$, the SE approaches to a finite ceiling quickly as the number of relay antennas increases. Similar results can be observed when sources and destinations are equipped with multiple but fixed number of antennas in Fig. \ref{fig1b}. However, the result changes when $N_S$ and $N_D$ scale with $\min\{N_R,N_T\}$ and HIA scheme is used. From Fig. \ref{fig1b}, as ${N_S} = \left\lfloor {\frac{{{N_R}}}{K}} \right\rfloor $ and ${N_D} = \left\lfloor {\frac{{{N_T}}}{K}} \right\rfloor $, similar SEs are achieved as sources/destinations or relay employ low-quality hardware, and no performance ceiling appears. This demonstrates the validness of HIA scheme. At last, it is seen that the asymptotic results in Theorem \ref{t1} \& \ref{t2} and Theorem \ref{t4} \& \ref{t5} match well with exact results.

\begin{figure}[t]
\centering
\includegraphics[width=8.5cm]{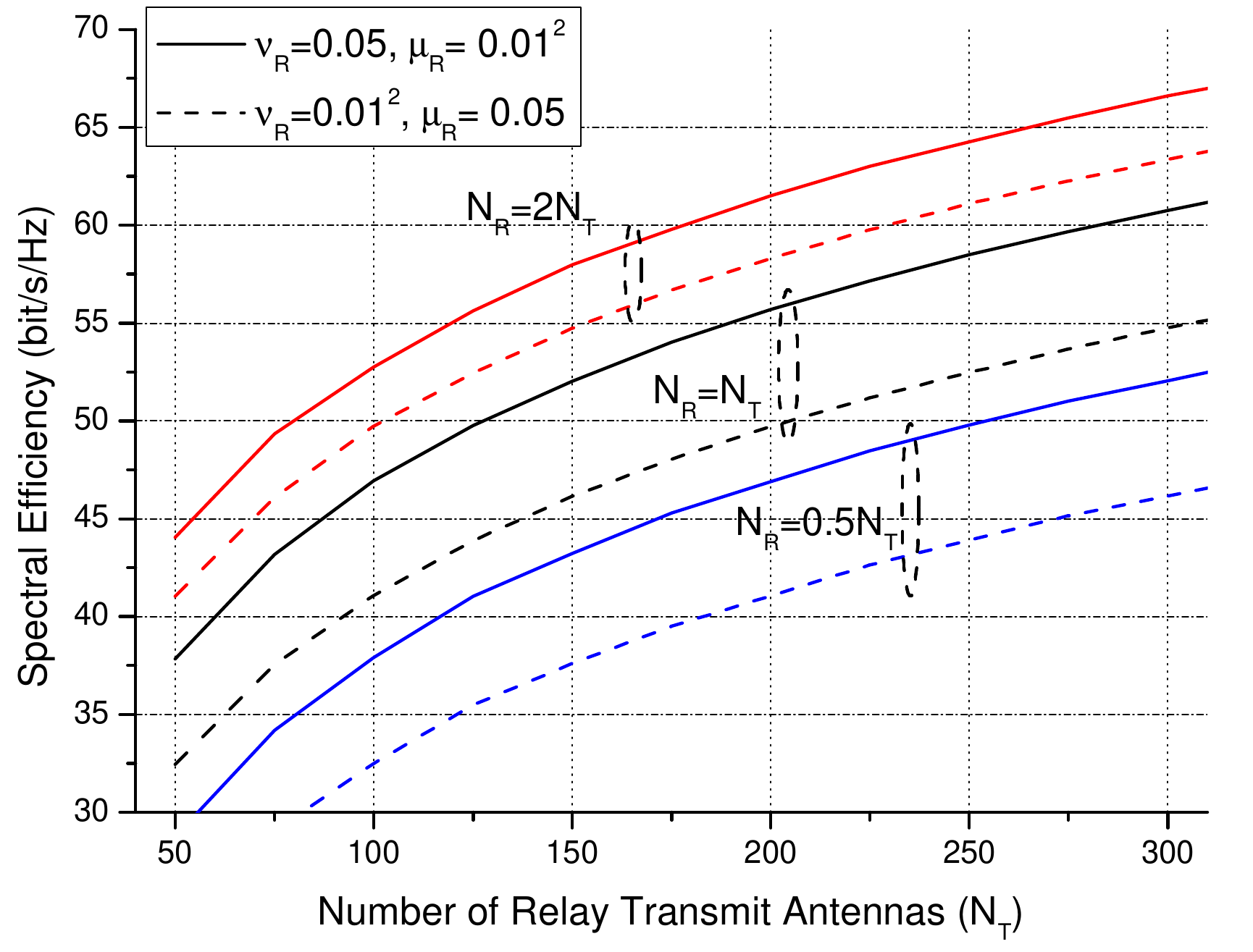}
\caption{Comparison between transmit/receive imperfections at the relay, where $\nu_S=\nu_D=\mu_D=0.01^2$, $K=10$, $E_{S,k}=E_{R,k}=5$dB, $E_T=10$dB, $\beta_{SR,k}=\beta_{RD,k}=1$, $\beta_{EI}=5$dB, $r_0=0.2$, $|r_{EI}|=0.8$, $A_R=\max\{K,\left\lfloor {\frac{2}{3}{N_R}} \right\rfloor \}$, $A_T=\max\{K,\left\lfloor {\frac{2}{3}{N_T}} \right\rfloor \}$, ${N_S} = \left\lfloor {\frac{{{N_R}}}{K}} \right\rfloor $, ${N_D} = \left\lfloor {\frac{{{N_T}}}{K}} \right\rfloor $.} \label{fig2}
\end{figure}

As the performance is affected by impairments of both transmit and receive RF chains at the relay, we compare the effect of transmit and receive imperfections on SE in Fig. \ref{fig2}. We assume that the sources and destinations use high-quality hardware ($\nu_S=\nu_R=\mu_R=0.01^2$). From the figure, the effect of receive imperfection is more detrimental. The reason is that, with EI at the relay, the power of distortion noise caused by receive imperfection is much stronger than that due to transmit imperfection. However, the performance difference decreases as $N_R$ increases from $0.5N_T$ to $N_R=2N_T$, since the power of effective receive distortion scales as ${\cal O}(KN_R^{-1})$ (see Table \ref{tab1}). This implies that use relatively higher-quality hardware or more antennas at the receive side of relay is beneficial. Using a similar setup, one can obtain a parallel conclusion for destination, i.e., the receive imperfection is more harmful than transmit imperfection. This is because that the transmit imperfection of destination only induces larger channel estimator errors, which is a part of the received signal at the destination. Thus, based on model (\ref{2-4}), the distortion noise due to receive imperfection will be more detrimental.

\subsection{Impact of Number of S-D Pairs and Channel Coherent Time}

\begin{figure}[t]
\centering
\includegraphics[width=8.5cm]{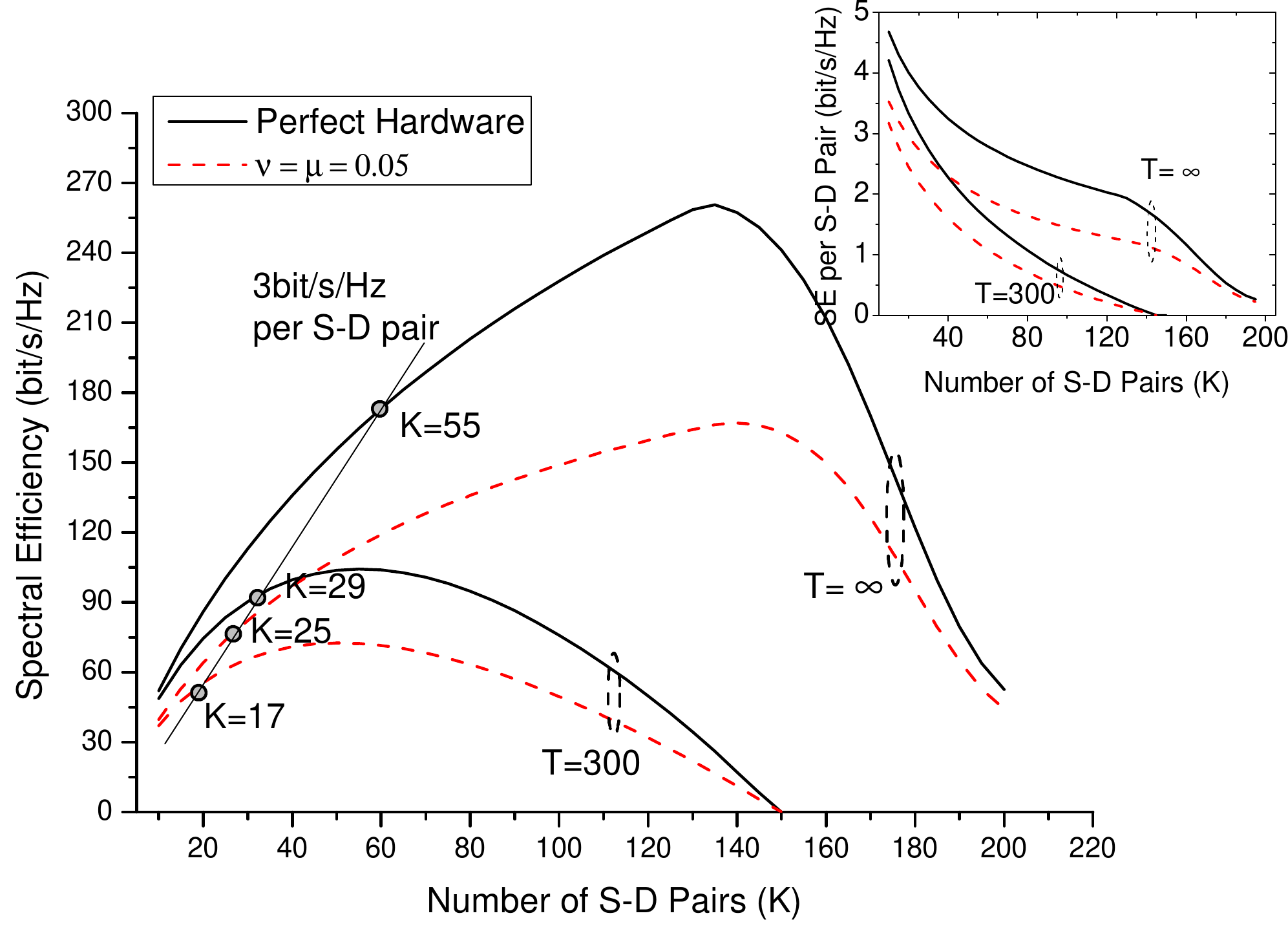}
\caption{SE v.s. number of S-D pair $K$, where $\nu_S=\nu_D=\nu_R = \nu$, $\mu_D=\mu_R=\mu$, $E_{S,k}=E_{R,k}=5$dB, $E_T=10$dB, $\beta_{SR,k}=\beta_{RD,k}=1$, $\beta_{EI}=5$dB, $r_0=0.4$, $|r_{EI}|=0.7$, $N_R=N_T=200$, $N_S=N_D=10$, $A_R=\max\{K,\left\lfloor {\frac{2}{3}{N_R}} \right\rfloor \}$, $A_T=\max\{K,\left\lfloor {\frac{2}{3}{N_T}} \right\rfloor \}$.} \label{fig3}
\end{figure}

Fig. \ref{fig4} shows the SE as a function of the number of S-D pairs $K$, where the length of pilot sequence of each source or destination is set to 1. When $T=300$, the figure reveals that the SE will not increase without bound as $K$ increases since SE is ultimately limited by the channel coherent time. In fact, as the number of S-D pair approaches to $T/2$, the SE converges to zero since all time is allocated to pilot phase. Meanwhile, the SE per S-D pair is a strictly decreasing function of $K$. Similar result can be observed as the channel coherent time is very long ($T\to \infty$). This is because that the MUI, EI and distortion noises become limiting factors when $K$ is large as seen from the scaling behaviors in Table \ref{tab1} and Table \ref{tab2}. This implies that the number of S-D pairs should be limited in order to ensure a SE guarantee for each S-D pair, and this number decreases as the quality of hardware degrades.

\subsection{Comparison with Relevant Schemes}

In this subsection, the SE of HIA-MM-FDR is compared with the massive MIMO HDR (MM-HDR) \cite{IEEEhowto:SuraICC13} and MM-FDR with ZF-based transceiver (ZF-MM-FDR) \cite{IEEEhowto:NgoJSAC14}. The channel coherent time is set to $T=300$ and the length of pilot sequence is $\tau=2$. The power of pilot symbol is $E_T=10$dB. For HIA-MM-FDR, the maximum power constraints at sources and relay are set to $E_{S_k}^{\max} =5$dB and $E_{R}^{\max} =K E_{S_k}^{\max}$. Without JDPO, we set $E_{S,k} =E_{S,k}^{\max}$, $E_{R,k}=E_{R}^{\max}/K$dB. Moreover, $A_R$ and $A_T$ are set to $A_R=\max\{K,\left\lfloor {\frac{2}{3}{N_R}} \right\rfloor \}$ and $A_T=\max\{K,\left\lfloor {\frac{2}{3}{N_R}} \right\rfloor \}$. When JDPO is applied, $E_{S,k}$, $E_{R,k}$, $A_R$ and $A_T$ are determined by Algorithm \ref{alg2}. The maximum repeat time $L$ of JDPO is set to 3 and the elements of subset ${\cal \tilde A}_R$ are picked uniformly from ${\cal A}_R$ with step $\max\{10, \left\lfloor {\frac{N_R}{K}{}} \right\rfloor \}$. ${\cal \tilde A}_T$ is obtained with the similar approach. The above parameters are chosen so that the performance loss due to the suboptimal search approach to obtain $A_R$ and $A_T$ is negligible.

\begin{figure}[t]
\centering
\includegraphics[width=8.5cm]{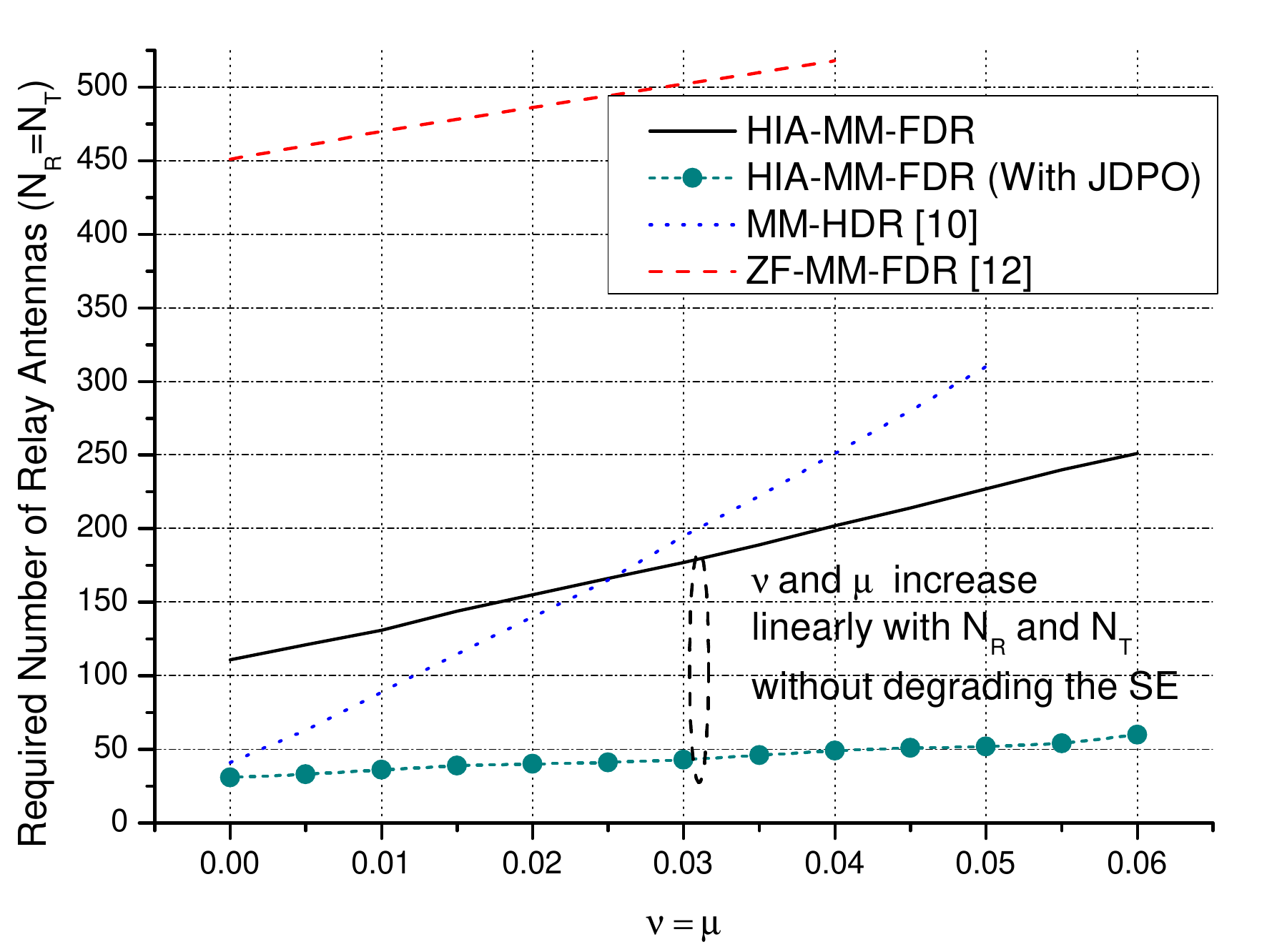}
\caption{Number of relay antennas required to achieve 3bit/s/Hz SE per S-D pair with different hardware qualities, where $K=10$, $\beta_{SR,k}=\beta_{RD,k}=1$, $\beta_{EI}=5$dB, ${N_S} = \left\lfloor {\frac{{{N_R}}}{K}} \right\rfloor $, ${N_D} = \left\lfloor {\frac{{{N_T}}}{K}} \right\rfloor $, $r_0=0.4$, $|r_{EI}|=0.7$.} \label{fig4}
\end{figure}

Fig. \ref{fig4} simulates the number of relay antennas required to achieve 3bit/s/Hz SE per S-D pair (which ideally can support the 64-QAM transmission with 1/2 channel code) with different levels of hardware quality, where we set $\nu_S=\nu_D=\nu_R = \nu$ and $\mu_D=\mu_R=\mu$. The figure reveals a tradeoff between the number of antennas and hardware quality, that is (as discussed in section V-D), by increasing $N_R$ and $N_T$, we can reduce $\nu$ and $\mu$ linearly without degrading the SE. Meanwhile, it is seen that the proposed HIA-MM-FDR reduces the required number of relay antennas significantly when compared with ZF-MM-FDR. Moreover, the HIA-MM-FDR outperforms the MM-HDR for large $\nu$ and $\mu$. This is because that the distortion noises become the main limiting factor in this case when compared with EI.

\begin{figure}[t]
\centering
\includegraphics[width=8.5cm]{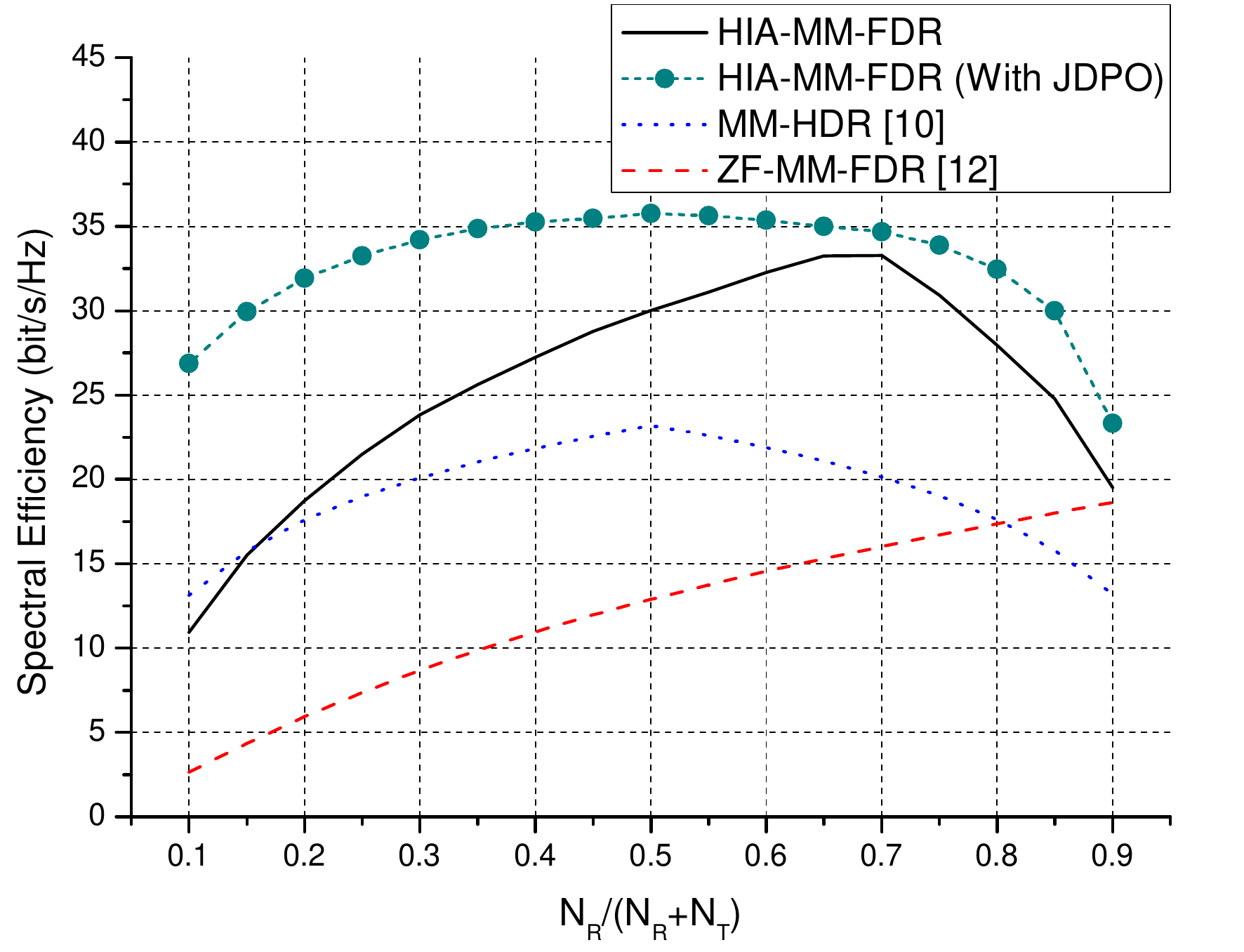}
\caption{SE comparison with asymmetric numbers of transmit and receive relay antennas, where $K=10$, ${N_S} = \left\lfloor {\frac{{{N_R}}}{K}} \right\rfloor $, ${N_D} = \left\lfloor {\frac{{{N_T}}}{K}} \right\rfloor $, $\beta_{SR,k}=\beta_{RD,k}=1$, $\beta_{EI}=5$dB, $r_0=0.4$, $|r_{EI}|=0.7$, $\nu_S=\nu_D=\nu_R =0.05$, $\mu_D=\mu_R =0.05$.} \label{fig5}
\end{figure}

The effect of asymmetric numbers of transmit and receive antennas at the relay is shown in Fig. \ref{fig5}, where we set $N_R+N_T=200$. It is shown that allocating more antennas to the receive side of relay is beneficial for HIA-MM-FDR (without JDPO) and ZF-MM-FDR. The reason is that, in the considered setup, the limiting factors of system performance are EI and distortion noise due to receive imperfection at the relay. From Theorem \ref{t4}, these factors can be suppressed by receive antenna array of relay. However, when JDPO is applied, it is optimal to set $N_R=N_T$. This is because that the JDPO algorithm in fact tries to balance the achievable rates of each hop by power control and adjusting $A_R$ and $A_T$. This makes it unnecessary to allocate more antennas to receive side.

\begin{figure}[t]
\centering
\includegraphics[width=8.5cm]{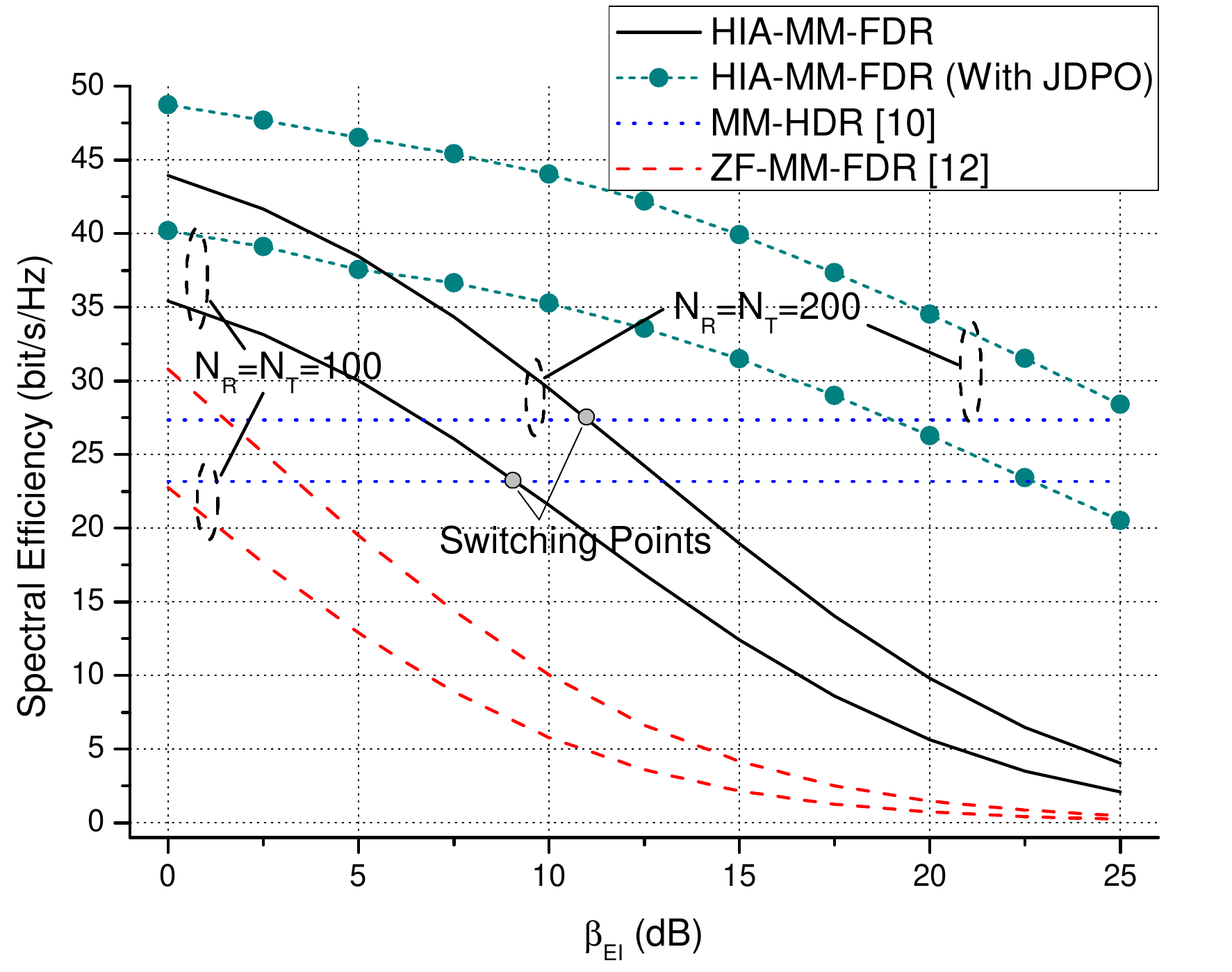}
\caption{SE v.s. variance of echo interference channel $\beta_{EI}$, where $K=10$, ${N_S} = \left\lfloor {\frac{{{N_R}}}{K}} \right\rfloor $, ${N_D} = \left\lfloor {\frac{{{N_T}}}{K}} \right\rfloor $, $\beta_{SR,k}=\beta_{RD,k}=1$, $r_0=0.4$, $|r_{EI}|=0.7$, $\nu_S=\nu_D=\nu_R =0.05$, $\mu_D=\mu_R =0.05$.} \label{fig6}
\end{figure}

Fig. \ref{fig6} shows the SE of HIA-MM-FDR as a function of variance for EI channel $\beta_{EI}$. As expected, there exists a switching point between HIA-MM-FDR and MM-HDR as $\beta_{EI}$ increases. By increasing $N_R$ and $N_T$, the constraint on $\beta_{EI}$ for HIA-MM-FDR to achieve a performance gain relaxes, which indicates that HIA-MM-FDR becomes more attractive when the number of relay antennas is large. Moreover, Fig. \ref{fig6} demonstrates that the proposed JDPO algorithm can reduce the constraint on $\beta_{EI}$ significantly. In particular, as $N=200$, a SE gain of 7.5bit/s/Hz can be achieved by HIA-MM-FDR when compared to MM-HDR as $\beta_{EI}$ is $20$dB.

\begin{figure}[t]
\centering
\includegraphics[width=8.5cm]{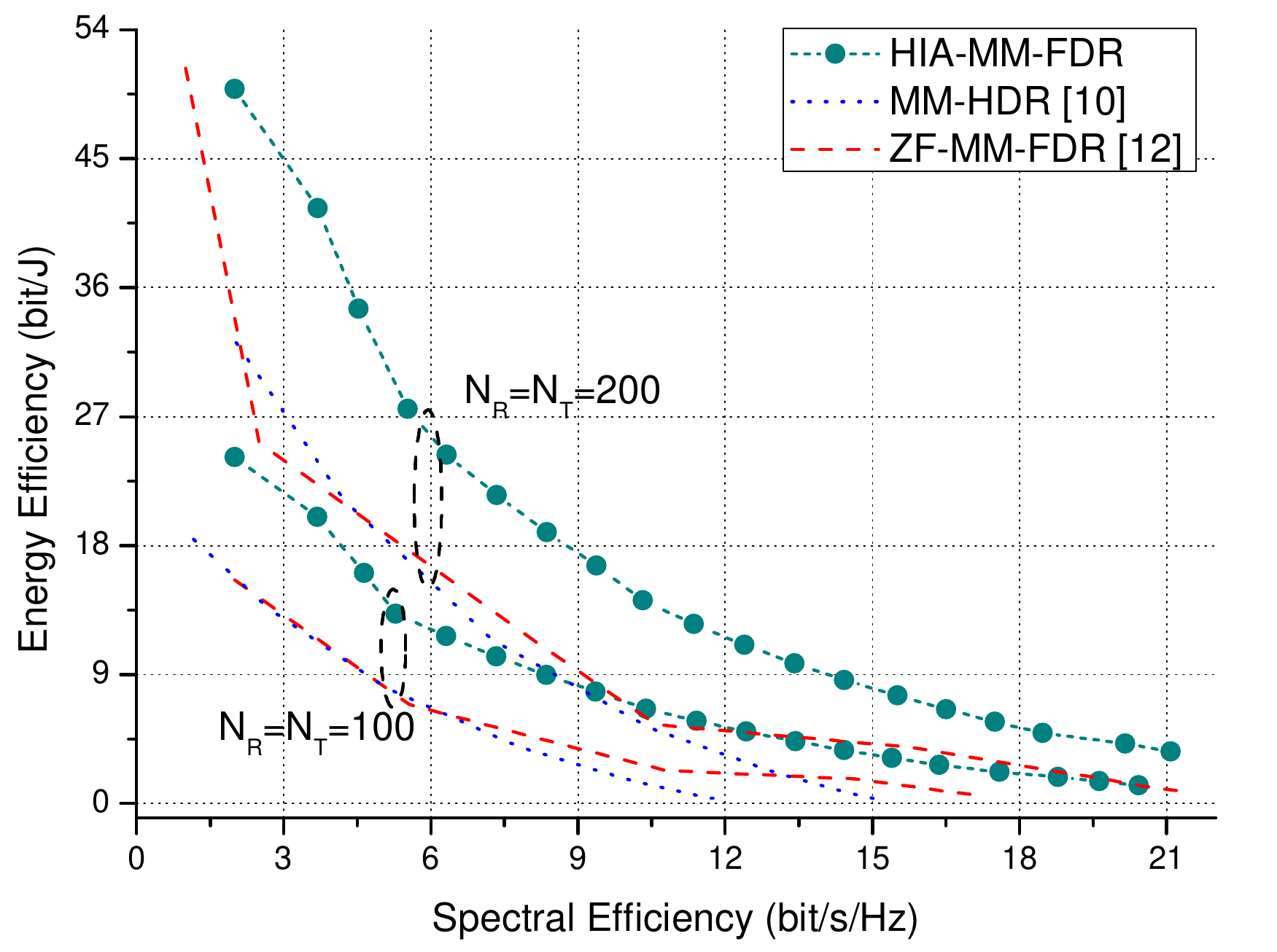}
\caption{Comparison on spectral-energy efficiency tradeoff, where $K=10$, $\beta_{EI}=10$dB, ${N_S} = \left\lfloor {\frac{{{N_R}}}{K}} \right\rfloor $, ${N_D} = \left\lfloor {\frac{{{N_T}}}{K}} \right\rfloor $, $r_0=0.2$, $|r_{EI}|=0.8$, $\nu_S=\nu_D=\nu_R =0.05$, $\mu_D=\mu_R =0.05$.} \label{fig7}
\end{figure}

Fig. \ref{fig7} considers the SE-EE tradeoff of different schemes. The large-scale fading coefficients of channels are set to
\begin{equation*}\label{6-2}
    \begin{aligned}
       \{\beta_{SR,1},\cdots,\beta_{SR,K}\} = \left\{0.818,0.052, 1.01, 0.026, 0.016,\right. \\
        \left.0.803, 0.051, 0.383, 2.85, 0.448\right\} \\
       \{\beta_{RD,1},\cdots,\beta_{RD,K}\} = \left\{1.187,0.011, 0.724, 2.11, 0.580,\right. \\
        \left. 0.012, 0.147, 0.085, 0.434, 0.458\right\}
    \end{aligned}
\end{equation*}
which is a realization generated with the model in \cite{IEEEhowto:NgoJSAC14}. The EEs of HIA-MM-FDR and ZF-MM-FDR are optimized by solving the problem in Remark \ref{rm2} with JDPO algorithm and [12, Algorithm 1], respectively. It is observed that HIA-MM-FDR achieves better SE-EE tradeoff when compared with ZF-MM-FDR. The gain is mainly due to the optimization of $A_R$ and $A_T$. This reason is that, if we fix $A_R=N_R$ and $A_T=N_T$, the JDPO algorithm for EE optimization is similar to [12, Algorithm 1]. The only difference is that the power for each steam at the relay, i.e., $E_{R,k}$, is optimized in JDPO algorithm and [12, Algorithm 1] optimizes only the total power of relay.

\section{conclusions}

This paper considers the transceiver design of MM-FDR with hardware impairments. A low complexity HIA scheme is proposed to mitigate the distortion noises by exploiting the statistical knowledge of channels and antenna arrays at sources and destinations. A joint degree of freedom and power optimization algorithm is presented to further optimize the SE of HIA-MM-FDR. The analytic results demonstrate that the proposed scheme can mitigate the ``ceiling effect" appears in traditional MM-FDR protocol, if the numbers of antennas at sources and destinations can scale with that at the relay. Moreover, simulation results show that the HIA-MM-FDR outperforms MM-FDR with traditional transceiver scheme.

\section*{Appendix}

\subsection*{A. Useful Lemmas Related to the Channel Estimates}

\begin{lemma}\label{lm1}
Let ${\bf \hat {\underline{h}}}_{SR,k}$ and ${\bf \hat {\underline{h}}}_{RD,k}$ be the LMMSE estimates of ${\bf {\underline{h}}}_{SR,k}$ and ${\bf {\underline{h}}}_{RD,k}$, respectively. Define $\delta _{k}^{SR} = \mathbb E\left[ {{{\left| {\Delta {\bf{\underline{h}}}_{SR,k}^H{{{\bf{\hat {\underline{h}}}}}_{SR,k}}} \right|}^2}} \right]$ and $\delta _{k}^{RD} = \mathbb E\left[ {{{\left| {\Delta {\bf{\underline{h}}}_{RD,k}^H{{{\bf{\hat {\underline{h}}}}}_{RD,k}}} \right|}^2}} \right]$. In the large-($N_R,N_T$) regime, we have
\begin{equation}\label{X1}
\begin{aligned}
& \delta _{k}^{SR} =
 \frac{\nu _{S,k}}{\tau}{\left( {{\rm{Tr}}\left( {{{{\bf{\hat {\underline{C}}}}}_{SR,k}}} \right)} \right)^2}{\mathbf{{\dot{p}}}}_{S,k}^H{\text{diag}}\left( {{\mathbf{{\dot{p}}}}_{S,k}^H{{\mathbf{{\dot{p}}}}_{S,k}}} \right){{\mathbf{{\dot{p}}}}_{S,k}}\\
& + \frac{{{\mu _R}}}{\tau }\left( {\frac{{\rm{1}}}{{{E_T}}} + {\frac{1}{{{\mu _R}{E_T}}}} + {\beta _{SR,k}}{\rm{Tr}}\left( {{{{\bf{\tilde C}}}_{SR,k}}{{\bf{\Omega }}_{S,k}}} \right)} \right) \\
& \times {\rm{Tr}}\left( {{\bf{\underline{C}}}_{SR,k}^3{\bf{\Gamma }}_{SR,k}^2} \right)
\end{aligned}
\end{equation}
The expression of $\delta _{kj}^{RD}$ can be obtained by replacing ${{{\bf{{\underline{C}}}}}_{SR,k}}$, ${{{\bf{{\underline{\hat C}}}}}_{SR,k}}$, ${{{\bf{{\Gamma}}}}_{SR,k}}$, $\nu_{S,k}$ and ${\bf {\dot{p}}}_{S,k}$ in (\ref{X1}) with ${{{\bf{{\underline{C}}}}}_{RD,k}}$, ${{{\bf{{\underline{\hat C}}}}}_{RD,k}}$, ${{{\bf{{\Gamma}}}}_{RD,k}}$, $\nu_{D,k}$ and ${\bf {\dot{p}}}_{D,k}$, respectively.
\end{lemma}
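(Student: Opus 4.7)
The plan is to directly compute $\delta_k^{SR} = \mathbb{E}[|\Delta \underline{\mathbf{h}}_{SR,k}^H \hat{\underline{\mathbf{h}}}_{SR,k}|^2]$ by exploiting the structure of the LMMSE expressions from Theorem \ref{t3} and the Kronecker channel model. Under ideal LMMSE with measurement noise independent of the channel, orthogonality would give $\mathbb{E}[\Delta \underline{\mathbf{h}}_{SR,k}^H \hat{\underline{\mathbf{h}}}_{SR,k}]=0$ and, by Gaussianity, the full second moment would also vanish. The key point is that hardware impairments destroy this independence: in the expression $\tilde{\mathbf{z}}_{SR,k} = \underline{\mathbf{h}}_{SR,k} + \tfrac{1}{\tau E_T}\dot{\mathbf{P}}_R^H \mathbf{H}_{SR,k} \mathbf{T}_{SR,k}\phi^* + \tfrac{1}{\tau E_T}\dot{\mathbf{P}}_R^H \mathbf{R}_{SR,k}\phi^* + \tfrac{1}{\tau E_T}\dot{\mathbf{P}}_R^H \mathbf{N}_{SR,k}\phi^*$, two noise terms remain correlated with $\underline{\mathbf{h}}_{SR,k}$: (i) the transmit-distortion term, which literally contains $\mathbf{H}_{SR,k}$, and (ii) the receive-distortion term $\mathbf{R}_{SR,k}$, whose conditional covariance scales with the instantaneous received signal power and therefore with $\|\underline{\mathbf{h}}_{SR,k}\|^2$. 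Only these two terms will contribute to $\delta_k^{SR}$.

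First I would substitute $\hat{\underline{\mathbf{h}}}_{SR,k} = \underline{\mathbf{C}}_{SR,k}\mathbf{\Gamma}_{SR,k}\tilde{\mathbf{z}}_{SR,k}$ and $\Delta \underline{\mathbf{h}}_{SR,k} = \underline{\mathbf{h}}_{SR,k} - \underline{\mathbf{C}}_{SR,k}\mathbf{\Gamma}_{SR,k}\tilde{\mathbf{z}}_{SR,k}$, expand the scalar $\Delta \underline{\mathbf{h}}_{SR,k}^H \hat{\underline{\mathbf{h}}}_{SR,k}$ into the sum of four noise-carrying terms, and condition on $\mathbf{H}_{SR,k}$. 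Because $\mathbf{T}_{SR,k}$, $\mathbf{R}_{SR,k}$, $\mathbf{N}_{SR,k}$ are independent zero-mean Gaussians conditional on $\mathbf{H}_{SR,k}$, all cross products between distinct noise sources vanish in expectation, and $|\Delta \underline{\mathbf{h}}_{SR,k}^H \hat{\underline{\mathbf{h}}}_{SR,k}|^2$ reduces to a sum of quadratic forms in each noise source separately, weighted by functions of $\mathbf{H}_{SR,k}$ and of the deterministic matrix $\underline{\mathbf{C}}_{SR,k}\mathbf{\Gamma}_{SR,k}$.

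Second, I would evaluate the two surviving contributions. For the transmit-distortion contribution, the conditional covariance of $\mathbf{T}_{SR,k}$ is $\nu_{S,k}E_T \operatorname{diag}(\dot{\mathbf{p}}_{S,k}\dot{\mathbf{p}}_{S,k}^H)$ per column; using $\phi^H\phi = \tau E_T$, assumption \textbf{A2} (unit diagonals), and the identity $\mathbf{H}_{SR,k}^H\mathbf{H}_{SR,k}$ averaged against the Kronecker model, the resulting quadratic form collapses to the factor $\dot{\mathbf{p}}_{S,k}^H\operatorname{diag}(\dot{\mathbf{p}}_{S,k}\dot{\mathbf{p}}_{S,k}^H)\dot{\mathbf{p}}_{S,k}$ with the prefactor $\nu_{S,k}/\tau$ and the weight $(\mathrm{Tr}(\hat{\underline{\mathbf{C}}}_{SR,k}))^2$ after using $\dot{\mathbf{P}}_R^H\mathbf{H}_{SR,k}\dot{\mathbf{p}}_{S,k} = \underline{\mathbf{h}}_{SR,k}$ and applying the deterministic-equivalent replacement $\underline{\mathbf{h}}_{SR,k}^H\underline{\mathbf{C}}_{SR,k}\mathbf{\Gamma}_{SR,k}\underline{\mathbf{h}}_{SR,k}\to \mathrm{Tr}(\underline{\mathbf{C}}_{SR,k}^2\mathbf{\Gamma}_{SR,k}) = \mathrm{Tr}(\hat{\underline{\mathbf{C}}}_{SR,k})$ (valid in the large-$(N_R,N_T)$ regime via standard trace-concentration results used throughout the paper). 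For the receive-distortion contribution, I would first compute the per-antenna variance $\mu_R(1/E_T + \beta_{SR,k}\mathrm{Tr}(\tilde{\mathbf{C}}_{SR,k}\mathbf{\Omega}_{S,k}))$ of $\mathbf{r}_R$ averaged over the pilot phase (the $1/E_T$ piece absorbs the AWGN and pilot normalization, the remaining piece reflects the received signal power with transmit distortion already present, mirroring the expression of $\mathbf{\Omega}_{S,k}$ in (\ref{4-14})), then combine it with the sandwich $\underline{\mathbf{C}}_{SR,k}\mathbf{\Gamma}_{SR,k}\cdot\underline{\mathbf{C}}_{SR,k}\cdot\mathbf{\Gamma}_{SR,k}\underline{\mathbf{C}}_{SR,k}$ that arises from the two LMMSE-weighted sides, yielding the trace $\mathrm{Tr}(\underline{\mathbf{C}}_{SR,k}^3\mathbf{\Gamma}_{SR,k}^2)$ with prefactor $\mu_R/\tau$ and the bracketed variance.

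The main obstacle will be the bookkeeping of the transmit-distortion term, because $\mathbf{T}_{SR,k}$ and $\mathbf{H}_{SR,k}$ both appear in the same quadratic form and must be decoupled by conditional Gaussian integration followed by the Kronecker-model expectation; in particular, identifying that the only surviving fourth-order moment produces $(\mathrm{Tr}(\hat{\underline{\mathbf{C}}}_{SR,k}))^2$ rather than a more complicated trace requires invoking the large-system concentration of $\underline{\mathbf{h}}_{SR,k}^H\underline{\mathbf{C}}_{SR,k}\mathbf{\Gamma}_{SR,k}\underline{\mathbf{h}}_{SR,k}$ around its mean and the vanishing of its fluctuation (an $\mathcal{O}(N_R^{-1})$ relative error term, which is discarded in the stated expression). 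Once (\ref{X1}) is established, $\delta_k^{RD}$ follows by exactly the same argument with the substitutions listed at the end of the statement, so no separate derivation is needed.
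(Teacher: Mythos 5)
Your overall plan (substitute the LMMSE expressions from Theorem~\ref{t3}, expand $\tilde{\mathbf z}_{SR,k}$ into its four components, kill the cross terms by conditional independence, and evaluate the surviving quadratic forms with the Kronecker-model expectations and large-system trace replacements) is workable and close in spirit to the paper's derivation, which instead first rewrites $\delta_k^{SR}=\mathbb E[|{\underline{\mathbf h}}_{SR,k}^H\hat{\underline{\mathbf h}}_{SR,k}|^2]-({\rm Tr}(\hat{\underline{\mathbf C}}_{SR,k}))^2$ using the uncorrelatedness of $\Delta\underline{\mathbf h}_{SR,k}$ and $\hat{\underline{\mathbf h}}_{SR,k}$ and then expands only $\mathbb E[|{\underline{\mathbf h}}_{SR,k}^H\hat{\underline{\mathbf h}}_{SR,k}|^2]$. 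However, your opening premise contains a genuine error that propagates into the final answer. You assert that for measurement noise independent of the channel, LMMSE orthogonality plus Gaussianity would make the full second moment $\mathbb E[|\Delta\underline{\mathbf h}_{SR,k}^H\hat{\underline{\mathbf h}}_{SR,k}|^2]$ vanish, and you conclude that ``only'' the transmit-distortion and receive-distortion terms contribute. This is false: for independent zero-mean $\Delta\underline{\mathbf h}$ and $\hat{\underline{\mathbf h}}$ one gets $\mathbb E[|\Delta\underline{\mathbf h}^H\hat{\underline{\mathbf h}}|^2]={\rm Tr}\bigl(\mathbb E[\Delta\underline{\mathbf h}\,\Delta\underline{\mathbf h}^H]\,\mathbb E[\hat{\underline{\mathbf h}}\hat{\underline{\mathbf h}}^H]\bigr)$, which is strictly positive in general; orthogonality only annihilates the first moment $\mathbb E[\Delta\underline{\mathbf h}^H\hat{\underline{\mathbf h}}]$. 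This is exactly why the stated result~(\ref{X1}) contains the pure-AWGN contribution $\frac{1}{\tau E_T}{\rm Tr}({\bf\underline C}_{SR,k}^3{\bf\Gamma}_{SR,k}^2)$, written there as the $\frac{1}{\mu_R E_T}$ summand inside the bracket. Your derivation, which discards the AWGN term on principle and only folds an AWGN-induced $1/E_T$ into the \emph{receive-distortion} variance (hence multiplied by $\mu_R$), would produce a formula missing this term and therefore cannot reproduce~(\ref{X1}).

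A secondary inaccuracy: in the paper's impairment model the receive-distortion covariance is ${\mu_R}\,{\rm diag}(\mathbb E[\mathbf Z_{SR,k}\mathbf e_l(\mathbf Z_{SR,k}\mathbf e_l)^H])$, i.e.\ it is defined through an expectation and is deterministic, so $\mathbf R_{SR,k}$ is \emph{not} correlated with the instantaneous channel as you claim. It nevertheless contributes to $\delta_k^{SR}$ for the same reason the AWGN does. Your evaluation of the transmit-distortion term itself (using $\tilde{\mathbf C}_{SR,k}\dot{\mathbf p}_{S,k}=\lambda_1(\tilde{\mathbf C}_{SR,k})\dot{\mathbf p}_{S,k}$ and the deterministic equivalent $\underline{\mathbf h}_{SR,k}^H\underline{\mathbf C}_{SR,k}{\bf\Gamma}_{SR,k}\underline{\mathbf h}_{SR,k}\to{\rm Tr}(\hat{\underline{\mathbf C}}_{SR,k})$) matches the paper's computation in~(\ref{X4})--(\ref{X5}), but the argument as a whole needs the independent-noise contributions restored before it is a valid proof of the lemma.
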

\begin{proof}
We show the proof for $\delta _{k}^{SR}$ and the derivation for $\delta _{k}^{RD}$ is similar. According to the uncorrlation between $\Delta {\mathbf{\underline{h}}}_{SR,k}^H$ and ${{{\mathbf{\hat {\underline{h}}}}}_{SR,k}}$, we have $\delta _{k}^{SR} = \mathbb E[ {{{| {\Delta {\mathbf{\underline{h}}}_{SR,k}^H{{{\mathbf{\hat {\underline{h}}}}}_{SR,k}}} |}^2}} ] = \mathbb E[ {{{| {{\mathbf{{\underline{h}}}}_{SR,k}^H{{{\mathbf{\hat {\underline{h}}}}}_{SR,k}}} |}^2}} ] - {( {{\text{Tr}}( {{{{\mathbf{\hat {\underline{C}}}}}_{SR,k}}} )} )^2}$. With (\ref{4-7}) and (\ref{4-8}), $\mathbb E[ {{{| {{\mathbf{{\underline{h}}}}_{SR,k}^H{{{\mathbf{\hat {\underline{h}}}}}_{SR,k}}} |}^2}} ]$ can be written as
\begin{equation}\label{X3}
\begin{aligned}
  & \mathbb E\left[ {{{\left| {{\mathbf{\underline{h}}}_{SR,k}^H{{{\mathbf{\hat {\underline{h}}}}}_{SR,k}}} \right|}^2}} \right] = \mathbb E\left[ {{{\left| {{\mathbf{{\underline{h}}}}_{SR,k}^H{{\mathbf{{\underline{C}}}}_{SR,k}}{{\mathbf{\Gamma }}_{SR,k}}{{\mathbf{{\underline{h}}}}_{SR,k}}} \right|}^2}} \right] \\
  & + {\left( {\frac{1}{{\tau {E_T}}}} \right)^2} \mathbb E\left[ {{{\left| {{\mathbf{{\underline{h}}}}_{SR,k}^H{{\mathbf{{\underline{C}}}}_{SR,k}}{{\mathbf{\Gamma }}_{SR,k}}{\mathbf{{\dot{P}}}}_R^H{{\mathbf{H}}_{SR,k}}{\mathbf{T}}_{SR,k}^{}\phi _{}^*} \right|}^2}} \right] \\
  &  + {\left( {\frac{1}{{\tau {E_T}}}} \right)^2} \mathbb E\left[ {{{\left| {{\mathbf{{\underline{h}}}}_{SR,k}^H{{\mathbf{{\underline{C}}}}_{SR,k}}{{\mathbf{\Gamma }}_{SR,k}}{\mathbf{{\dot{P}}}}_R^H{{\mathbf{R}}_{SR,k}}\phi _{}^*} \right|}^2}} \right] \\
  & + {\left( {\frac{1}{{\tau {E_T}}}} \right)^2}\mathbb E\left[ {{{\left| {{\mathbf{{\underline{h}}}}_{SR,k}^H{{\mathbf{\underline{C}}}_{SR,k}}{{\mathbf{\Gamma }}_{SR,k}}{\mathbf{{\dot{P}}}}_R^H{{\mathbf{N}}_{SR,k}}\phi _{}^*} \right|}^2}} \right] \\
\end{aligned}
\end{equation}
In the large-($N_R,N_T$) regime, the result in [17, Lemma 4 (iv)] shows that $ \mathbb E\left[ {{{\left| {{\mathbf{{\underline{h}}}}_{SR,k}^H{{\mathbf{{\underline{C}}}}_{SR,k}}{{\mathbf{\Gamma }}_{SR,k}}{{\mathbf{{\underline{h}}}}_{SR,k}}} \right|}^2}} \right] = {\left( {{\text{Tr}}\left( {{{{\mathbf{\hat {\underline{C}}}}}_{SR,k}}} \right)} \right)^2}$. Replacing ${\mathbf{\underline{h}}}_{SR,k}$ with ${\mathbf{{\dot{P}}}}_R^H{{\mathbf{H}}_{SR,k}}{{\mathbf{{\dot{p}}}}_{S,k}}$ and using the following large $N_R$ approximation (obtained based on [17, Lemma 4 (ii)])
\begin{equation}\label{X4}
\begin{aligned}
&  \frac{1}{{{N_R}}}{{\bf{H}}_{SR,k}}{{\bf{{\dot{P}}}}_R}{{\bf{\underline{C}}}_{SR,k}}{{\bf{\Gamma }}_{SR,k}}{\bf{{\dot{P}}}}_R^H{{\bf{H}}_{SR,k}}\\
& = \frac{1}{{{N_R}}}{\beta _{SR,k}}{\bf{\tilde C}}_{SR,k}^{1/2}{\bf{X}}_{SR,k}^H{\bf{C}}_{SR,k}^{1/2}{{\bf{{\dot{P}}}}_R}{{\bf{\underline{C}}}_{SR,k}} \\
& \times {{\bf{\Gamma }}_{SR,k}}{\bf{{\dot{P}}}}_R^H{\bf{C}}_{SR,k}^{1/2}{{\bf{X}}_{SR,k}}{\bf{\tilde C}}_{SR,k}^{1/2}\\
& = \frac{1}{{{N_R}}}{\beta _{SR,k}}{\rm{Tr}}\left( {{{\bf{\underline{C}}}_{SR,k}}{{\bf{\Gamma }}_{SR,k}}{\bf{{\dot{P}}}}_R^H{{\bf{C}}_{SR,k}}{{\bf{{\dot{P}}}}_R}} \right){{{\bf{\tilde C}}}_{SR,k}}
\end{aligned}
\end{equation}
It can be shown that
\begin{equation}\label{X5}
\begin{aligned}
  & \mathbb E\left[ {{{\left| {{\mathbf{{\underline{h}}}}_{SR,k}^H{{\mathbf{{\underline{C}}}}_{SR,k}}{{\mathbf{\Gamma }}_{SR,k}}{\mathbf{{\dot{P}}}}_R^H{{\mathbf{H}}_{SR,k}}{\mathbf{T}}_{SR,k}^{}\phi _{}^*} \right|}^2}} \right] \\
  & = {\nu _{S,k}}\tau E_T^2\frac{{{\bf{{\dot{p}}}}_{S,k}^H{\bf{\tilde C}}_{SR,k}^{}{\rm{diag}}\left( {{\bf{{\dot{p}}}}_{S,k}^H{{\bf{{\dot{p}}}}_{S,k}}} \right){\bf{\tilde C}}_{SR,k}^{}{{\bf{{\dot{p}}}}_{S,k}}}}{{\left( {{\rm{Tr}}\left( {{{{\bf{\hat {\underline{C}}}}}_{SR,k}}} \right)} \right)^{-2}}{{{\left( {{\lambda _1}\left( {{{{\bf{\tilde C}}}_{SR,k}}} \right)} \right)}^2}}}
\end{aligned}
\end{equation}
Using the similar approach on the remaining terms in (\ref{X3}), Lemma \ref{lm1} is obtained.
\end{proof}

\begin{lemma}\label{lm2}
Let ${{\bf{\underline{C}}}_{i,k}}$ and ${{\bf{\underline{\hat C}}}_{i,k}}$ ($i\in \{SR,RD\}$) be the covariance matrices of effective channel ${\bf {\underline{h}}}_{i,k}$ and its estimates ${\bf {\underline{\hat h}}}_{i,k}$ given by Theorem \ref{t3}. If ${\rm Tr}({{\bf{\underline{C}}}_{SR,k}})={\cal O}(N_R)$ and ${\rm Tr}({{\bf{\underline{C}}}_{RD,k}})={\cal O}(N_T)$, we have ${\rm Tr}({{\bf{\underline{\hat C}}}_{SR,k}})={\cal O}(N_R)$ and ${\rm Tr}({{\bf{\underline{\hat C}}}_{RD,k}})={\cal O}(N_T)$. Moreover, $\mathop {\lim }\limits_{{N_R} \to \infty } \frac{{{A_R}}}{{{N_R}}} > 0$ and $\mathop {\lim }\limits_{{N_T} \to \infty } \frac{{{A_T}}}{{{N_T}}} > 0$ (i.e., $A_R$ and $A_T$ scale with ${\cal O}(N_R)$ and ${\cal O}(N_T)$, respectively).
\end{lemma}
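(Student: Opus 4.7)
The plan is to exploit the particularly clean structure of $\underline{\mathbf{C}}_{SR,k}$ and $\mathbf{\Gamma}_{SR,k}$: from (\ref{4-8}) we can write $\mathbf{\Gamma}_{SR,k}^{-1}=c_1\underline{\mathbf{C}}_{SR,k}+c_2\mathbf{I}_{A_R}$ for two positive constants $c_1,c_2$ (depending only on $\nu_{S,k},\mu_R,\lambda_1(\tilde{\mathbf{C}}_{SR,k}),\beta_{SR,k},E_T,\tau$, all of which are $\mathcal{O}(1)$ by assumption \textbf{A1}). Since $\underline{\hat{\mathbf{C}}}_{SR,k}=\underline{\mathbf{C}}_{SR,k}\mathbf{\Gamma}_{SR,k}\underline{\mathbf{C}}_{SR,k}$, the two matrices are simultaneously diagonalizable, and if $\{\sigma_l\}_{l=1}^{A_R}$ denote the eigenvalues of $\underline{\mathbf{C}}_{SR,k}$ then
\begin{equation*}
\operatorname{Tr}\!\left(\underline{\hat{\mathbf{C}}}_{SR,k}\right)=\sum_{l=1}^{A_R}\frac{\sigma_l^{2}}{c_1\sigma_l+c_2}.
\end{equation*}
I will work directly with this representation.

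For the upper bound, I drop the $c_1\sigma_l$ term in the denominator to obtain $\operatorname{Tr}(\underline{\hat{\mathbf{C}}}_{SR,k})\le c_2^{-1}\operatorname{Tr}(\underline{\mathbf{C}}_{SR,k}^{2})\le c_2^{-1}\rho(\underline{\mathbf{C}}_{SR,k})\operatorname{Tr}(\underline{\mathbf{C}}_{SR,k})$. Assumption \textbf{A1} together with the fact that $\underline{\mathbf{C}}_{SR,k}=\beta_{SR,k}\lambda_1(\tilde{\mathbf{C}}_{SR,k})\dot{\mathbf{P}}_R^H\mathbf{C}_{SR,k}\dot{\mathbf{P}}_R$ and that $\dot{\mathbf{P}}_R$ has orthonormal columns gives $\rho(\underline{\mathbf{C}}_{SR,k})\le\beta_{SR,k}\lambda_1(\tilde{\mathbf{C}}_{SR,k})\rho(\mathbf{C}_{SR,k})=\mathcal{O}(1)$, so the right-hand side is $\mathcal{O}(N_R)$ by hypothesis. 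For the lower bound, I use $c_1\sigma_l+c_2\le c_1\rho(\underline{\mathbf{C}}_{SR,k})+c_2=\mathcal{O}(1)$ to get $\operatorname{Tr}(\underline{\hat{\mathbf{C}}}_{SR,k})\ge c_3\operatorname{Tr}(\underline{\mathbf{C}}_{SR,k}^{2})$ for some positive constant $c_3$, and then Cauchy--Schwarz $\operatorname{Tr}(\underline{\mathbf{C}}_{SR,k}^{2})\ge\operatorname{Tr}(\underline{\mathbf{C}}_{SR,k})^{2}/A_R$. Since $A_R\le N_R$ and $\operatorname{Tr}(\underline{\mathbf{C}}_{SR,k})=\Theta(N_R)$, this yields $\operatorname{Tr}(\underline{\hat{\mathbf{C}}}_{SR,k})\ge c_3\,\Theta(N_R^{2})/N_R=\Omega(N_R)$. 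Combining the two bounds gives $\operatorname{Tr}(\underline{\hat{\mathbf{C}}}_{SR,k})=\Theta(N_R)$, and the $RD$ counterpart follows by the same argument with the obvious substitutions.

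For the second part, I return to $\underline{\mathbf{C}}_{SR,k}=\beta_{SR,k}\lambda_1(\tilde{\mathbf{C}}_{SR,k})\dot{\mathbf{P}}_R^H\mathbf{C}_{SR,k}\dot{\mathbf{P}}_R$. The trace cyclic identity and von Neumann's trace inequality (or simply the fact that $\dot{\mathbf{P}}_R\dot{\mathbf{P}}_R^H$ is a rank-$A_R$ orthogonal projector) give
\begin{equation*}
\operatorname{Tr}\!\left(\underline{\mathbf{C}}_{SR,k}\right)=\beta_{SR,k}\lambda_1(\tilde{\mathbf{C}}_{SR,k})\operatorname{Tr}\!\left(\mathbf{C}_{SR,k}\dot{\mathbf{P}}_R\dot{\mathbf{P}}_R^H\right)\le\beta_{SR,k}\lambda_1(\tilde{\mathbf{C}}_{SR,k})\,\rho(\mathbf{C}_{SR,k})\,A_R.
\end{equation*}
Since the left-hand side is $\Theta(N_R)$ while $\beta_{SR,k}\lambda_1(\tilde{\mathbf{C}}_{SR,k})\rho(\mathbf{C}_{SR,k})=\mathcal{O}(1)$, we conclude $A_R=\Omega(N_R)$; combined with $A_R\le N_R$ this yields $\lim_{N_R\to\infty}A_R/N_R>0$. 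The analogous bound for $A_T$ is identical.

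The main subtlety I expect is the lower bound on $\operatorname{Tr}(\underline{\hat{\mathbf{C}}}_{SR,k})$: naively bounding each summand $\sigma_l^{2}/(c_1\sigma_l+c_2)$ from below by something proportional to $\sigma_l$ fails because $\sigma_l$ can be arbitrarily small for some indices. The key idea is therefore to invoke Cauchy--Schwarz on $\operatorname{Tr}(\underline{\mathbf{C}}_{SR,k}^{2})$, which converts the hypothesis $\operatorname{Tr}(\underline{\mathbf{C}}_{SR,k})=\Theta(N_R)$ into the required lower bound by using the trivial but crucial estimate $A_R\le N_R$. Every other step reduces to routine manipulations using assumption \textbf{A1}.
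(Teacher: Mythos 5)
Your proof is correct and follows essentially the same route as the paper's: both diagonalize $\underline{\mathbf{C}}_{SR,k}$, use $\mathbf{\Gamma}_{SR,k}^{-1}=c_1\underline{\mathbf{C}}_{SR,k}+c_2\mathbf{I}_{A_R}$ to write $\operatorname{Tr}(\underline{\hat{\mathbf{C}}}_{SR,k})=\sum_l \sigma_l^2/(c_1\sigma_l+c_2)$, bound the spectral radius of $\underline{\mathbf{C}}_{SR,k}$ via assumption \textbf{A1} and the orthonormality of $\dot{\mathbf{P}}_R$, and derive $A_R=\Omega(N_R)$ from $\operatorname{Tr}(\underline{\mathbf{C}}_{SR,k})\le A_R\,\beta_{SR,k}\lambda_1(\tilde{\mathbf{C}}_{SR,k})\lambda_1(\mathbf{C}_{SR,k})$. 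Your explicit Cauchy--Schwarz step $\operatorname{Tr}(\underline{\mathbf{C}}_{SR,k}^2)\ge\operatorname{Tr}(\underline{\mathbf{C}}_{SR,k})^2/A_R$ in the lower bound actually makes rigorous a step the paper leaves implicit, so this is a slight improvement rather than a divergence.
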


\begin{proof}
We present the proof for ${\rm Tr}({{\bf{\underline{\hat C}}}_{SR,k}})$ and $A_R$. Using definition of ${{{\mathbf{\hat {\underline{C}}}}}_{SR,k}}$ in Theorem \ref{t3} and eigenvalue decomposition ${{\mathbf{\underline{C}}}_{SR,k}} = {\mathbf{\underline{U}}}{\underline{\Sigma} _{SR,k}}{{\mathbf{\underline{U}}}^H}$, we have
\begin{equation}\label{X6}
\begin{aligned}
{\rm{Tr}}\left( {{{{\bf{\hat {\underline{C}}}}}_{SR,k}}} \right) & = {\rm{Tr}}\left( {{{\bf{{\underline{C}}}}_{SR,k}}{{\bf{\Gamma }}_{SR,k}}{{\bf{{\underline{C}}}}_{SR,k}}} \right)\\
& ={\rm{  Tr}}\left( {{\underline{\Sigma} _{SR,k}}{{\left( {{\omega _1}{\underline{\Sigma} _{SR,k}} + {\omega _2}{\bf I}_{A_R}} \right)}^{ - 1}}{\underline{\Sigma} _{SR,k}}} \right)\\
& = \sum\limits_{l = 1}^{{A_R}} {{{\left( {{\lambda _l}\left( {{{\bf{{\underline{C}}}}_{SR,k}}} \right)} \right)}^2}{{\left( {{\omega _1}{\lambda _l}\left( {{{\bf{{\underline{C}}}}_{SR,k}}} \right) + {\omega _2}} \right)}^{ - 1}}}
\end{aligned}
\end{equation}
where $\omega_1$ and $\omega_2$ are positive constants independent of $A_R$. From (\ref{X6}), ${\rm{Tr}}( {{{{\bf{\hat {\underline{C}}}}}_{SR,k}}} )$ is bounded as
\begin{equation}\label{X7}
\sum\limits_{l = 1}^{{A_R}} {\frac{{{{\left( {{\lambda _l}\left( {{{\bf{\underline{C}}}_{SR,k}}} \right)} \right)}^2}}}{{{\omega _1}{\lambda _1}\left( {{{\bf{\underline{C}}}_{SR,k}}} \right) + {\omega _2}}}}  \le {\rm{Tr}}\left( {{{{\bf{\hat {\underline{C}}}}}_{SR,k}}} \right) \le \frac{1}{{{\omega _1}}}{\rm{Tr}}\left( {{{\bf{\underline{C}}}_{SR,k}}} \right)
\end{equation}
In (\ref{X7}), the upper bound is an ${\cal O}(N_R)$ term. With the definition of ${{\bf{\underline{C}}}_{SR,k}}$ in the Theorem \ref{t3}, we have
\begin{equation}\label{X8}
\begin{aligned}
  {\lambda _1}\left( {{{\mathbf{\underline{C}}}_{SR,k}}} \right) & = {{{\beta _{SR,k}}{\lambda _1}\left( {{{{\mathbf{\tilde C}}}_{SR,k}}} \right)}}{\mathbf{u}}_1^H\left( {{{\mathbf{\underline{C}}}_{SR,k}}} \right)\\
  & \times {\mathbf{{\dot{P}}}}_R^H{{\mathbf{C}}_{SR,k}}{{\mathbf{{\dot{P}}}}_R}{{\mathbf{u}}_1}\left( {{{\mathbf{\underline{C}}}_{SR,k}}} \right) \\
  & \le {\beta _{SR,k}}{\lambda _1}\left( {{{{\mathbf{\tilde C}}}_{SR,k}}} \right){\lambda _1}\left( {{{\mathbf{C}}_{SR,k}}} \right) = {\cal O}\left( 1 \right)
  \end{aligned}
\end{equation}
The last step is based on assumption \textbf{A1}. Since ${\text{Tr}}\left( {{{\mathbf{\underline{C}}}_{SR,k}}} \right)={\cal O}(N_R)$, we can conclude that $\lambda _1\left( {{{\bf{\underline{C}}}_{SR,k}}} \right)={\cal O}(1)$ (otherwise ${\text{Tr}}\left( {{{\mathbf{\underline{C}}}_{SR,k}}} \right)\le A_R {\lambda _1}\left( {{{\mathbf{\underline{C}}}_{SR,k}}} \right) <{\cal O}(N_R)$). Thus, the lower bound in (\ref{X7}) also scales with ${\cal O}(N_R)$. Moreover, based on the definition of ${{\bf{\underline{C}}}_{SR,k}}$ in the Theorem \ref{t3}, we have
\begin{equation}\label{X9}
\begin{aligned}
  {\mathop{\rm Tr}\nolimits} \left( {{{\bf{\underline{C}}}_{SR,k}}} \right) & = {\beta _{SR,k}}{\lambda _1}\left( {{{{\bf{\tilde C}}}_{SR,k}}} \right)\sum\limits_{l = 1}^{{A_R}} {{\bf{{\dot{p}}}}_{R,l}^H} {{\bf{C}}_{SR,k}}{{\bf{{\dot{p}}}}_{R,l}} \\
  & \le {A_R}{\beta _{SR,k}}{\lambda _1}\left( {{{{\bf{\tilde C}}}_{SR,k}}} \right){\lambda _1}\left( {{{\bf{C}}_{SR,k}}} \right)
  \end{aligned}
\end{equation}
where ${{\bf{{\dot{p}}}}_{R,l}}$ is the $l$th column of ${\bf {\dot{P}}}_R$. From (\ref{X9}), $A_R$ must scale ${\cal O}(N_R)$. Otherwise, ${\text{Tr}}\left( {{{\mathbf{\underline{C}}}_{SR,k}}} \right)<{\cal O}(N_R)$ based on assumption \textbf{A1}.

\end{proof}

\subsection*{B. Proof of Theorem \ref{t1}}
To facilitate analysis, we first derive the covariance matrices of distortion noises ${\bf t}_S[u]$ and ${\bf r}_R[u]$. Based on the model (\ref{2-3}), we have ${{\bf\Theta } _S^{\textsf{T}}}= {\text{diag}}\left( {{\nu _{S,1}}{E_{S,1}}, \cdots ,{\nu _{S,K}}{E_{S,K}}} \right)$. Moreover, define ${{\mathbf{x}}_S}\left[ u \right] = {\left[ {{\mathbf{x}}_{S,1}^T\left[ u \right], \cdots ,{\mathbf{x}}_{S,K}^T\left[ u \right]} \right]^T}$, ${\mathbf{\Theta }}_R^\textsf{R}$ can be expressed based on model (\ref{2-4}) as
\begin{equation}\label{A1}
\begin{aligned}
  {\mathbf{\Theta }}_R^\textsf{R} & = {\mu _R}{\text{diag}}\left( {\mathbb{E}\left[ {{{\mathbf{H}}_{SR}}{{\mathbf{\Omega }}_S}{\mathbf{H}}_{SR}^H} \right]} \right) \\
  & + {\mu _R}{\text{diag}}\left( {\mathbb{E}\left[ {{{\mathbf{H}}_{EI}}{{\mathbf{\Omega }}_R}{\mathbf{H}}_{EI}^H} \right]} \right) + {\mu _R}{{\mathbf{I}}_{{N_R}}}
\end{aligned}
\end{equation}
where
\begin{equation}\label{A1-1}
\begin{array}{ll}
{{\bf{\Omega }}_S} = \mathbb{E} \left[ {\left( {{\bf{x}}_S\left[ u \right] + {{\bf{t}}_S}\left[ u \right]} \right){{\left( {{\bf{s}}\left[ u \right] + {{\bf{t}}_S}\left[ u \right]} \right)}^H}} \right] \\
\hspace{1.7em} = {\rm{ diag}}\left( {\left( {1 + {\nu _{S,1}}} \right){E_{S,1}}, \cdots ,\left( {1 + {\nu _{S,K}}} \right){E_{S,K}}} \right)\\
{{\bf{\Omega }}_R} = \mathbb{E}\left[ {\left( {{{\bf{x}}_R}\left[ u \right] + {{\bf{t}}_R}\left[ u \right]} \right){{\left( {{{\bf{x}}_R}\left[ u \right] + {{\bf{t}}_R}\left[ u \right]} \right)}^H}} \right] \\
\hspace{1.7em} = \mathbb{E}\left[ {{{\mathbf{x}}_R}\left[ u \right]{\mathbf{x}}_R^H\left[ u \right]} \right] + {\mathbf{\Theta }}_R^\textsf{T}
\end{array}
\end{equation}
For further analysis, we approximate $\mathbb{E}\left[ {{{\mathbf{x}}_R}\left[ u \right]{\mathbf{x}}_R^H\left[ u \right]} \right]$ in (\ref{A1-1}) with ${\rm diag}(\mathbb{E}\left[ {{{\mathbf{x}}_R}\left[ u \right]{\mathbf{x}}_R^H\left[ u \right]} \right] )= \frac{1}{\nu_R} {\mathbf{\Theta }}_R^\textsf{T}$. Note that the approximation results in a new upper bound on $\textsf{R}_{SR,k}^{{\text{Upper}}}$ (The new bound will also be referred to as $\textsf{R}_{SR,k}^{{\text{Upper}}}$ for convenience). Straight-forward computations yield to
\begin{equation}\label{A2}
\begin{aligned}
\mathbb E\left[ {{{\mathbf{H}}_{SR}}{{\mathbf{\Omega }}_S}{\mathbf{H}}_{SR}^H} \right] & = \sum_{l = 1}^K {{\nu _{S,l}}{E_{S,l}}\mathbb E\left[ {{{\mathbf{h}}_{SR,l}}{\mathbf{h}}_{SR,l}^H} \right]}  \\
& = \sum_{l = 1}^K  {{(1+\nu _{S,l})}{E_{S,l}}\beta_{SR,l}{{\mathbf{C}}_{SR,l}}}
\end{aligned}
\end{equation}
Following the spatial correlation model of ${{{\mathbf{H}}_{EI}}}$ in section III and the approximation under (\ref{A1-1}), the expression of ${\mathbb E\left[ {{{\mathbf{H}}_{EI}}{{\mathbf{\Omega }}_R}{\mathbf{H}}_{EI}^H} \right]}$ can be obtained as in (\ref{A3}), where the third step follows from the formula ${\text{vec}}\left( {{\mathbf{ABc}}} \right) = \left( {{{\mathbf{c}}^T} \otimes {\mathbf{A}}} \right){\text{vec}}\left( {\mathbf{B}} \right)$ and the fourth step is due to the fact that the elements of ${\bf X}_{EI}$ are i.i.d with distribution ${\cal CN}(0,1)$. Inserting (\ref{A2}), (\ref{A3}) into (\ref{A1}), the expression of ${\mathbf{\Theta }}_R^\textsf{R}$ in (\ref{3-3}) is obtained.

\begin{figure*}
\normalsize
\setcounter{mytempeqncnt}{\value{equation}}
\setcounter{equation}{54}
\begin{equation}\label{A3}
\begin{aligned}
   \mathbb E\left[ {{{\mathbf{H}}_{EI}}{{\mathbf{\Omega }}_R}{\mathbf{H}}_{EI}^H} \right] & \approx \left(1+\frac{1}{\nu_R}\right){\beta _{EI}} \mathbb E\left[ {{\mathbf{C}}_{EI}^{1/2}{{\mathbf{X}}_{EI}}{\mathbf{\tilde C}}_{EI}^{1/2}{{\mathbf{\Theta }}_R^\textsf{T}}{\mathbf{\tilde C}}_{EI}^{1/2}{\mathbf{X}}_{EI}^H{\mathbf{C}}_{EI}^{1/2}} \right] \hfill \\
  & = \left(1+\frac{1}{\nu_R}\right){\beta _{EI}}\sum\limits_{l = 1}^{{N_T}} {\mathbb{E}\left[ {{\mathbf{C}}_{EI}^{1/2}{{\mathbf{X}}_{EI}}{\mathbf{\tilde C}}_{EI}^{1/2}{{\left( {{\mathbf{\Theta }}_R^\textsf{T}} \right)}^{1/2}}{{\mathbf{e}}_l}{\mathbf{e}}_l^T{{\left( {{\mathbf{\Theta }}_R^\textsf{T}} \right)}^{1/2}}{\mathbf{\tilde C}}_{EI}^{1/2}{\mathbf{X}}_{EI}^H{\mathbf{C}}_{EI}^{1/2}} \right]}  \hfill \\
  & = \left(1+\frac{1}{\nu_R}\right){\beta _{EI}}\sum\limits_{l = 1}^{{N_T}} {\mathbb{E}\left[ {{{\left\| {{{\left( {{\mathbf{\tilde C}}_{EI}^{1/2}{{\left( {{\mathbf{\Theta }}_R^\textsf{T}} \right)}^{1/2}}{{\mathbf{e}}_l}} \right)}^T} \otimes {\mathbf{C}}_{EI}^{1/2}{\text{vec}}\left( {{{\mathbf{X}}_{EI}}} \right)} \right\|}^2}} \right]}  \hfill \\
  & = \left(1+\frac{1}{\nu_R}\right){\beta _{EI}}\sum\limits_{l = 1}^{{N_T}} {{{\left\| {{{\left( {{\mathbf{\tilde C}}_{EI}^{1/2}{{\left( {{\mathbf{\Theta }}_R^\textsf{T}} \right)}^{1/2}}{{\mathbf{e}}_l}} \right)}^T} \otimes {\mathbf{C}}_{EI}^{1/2}} \right\|}^2}}\hfill = \left(1+\frac{1}{\nu_R}\right){\beta _{EI}}{\text{Tr}}\left( {{{{\mathbf{\tilde C}}}_{EI}}{\mathbf{\Theta }}_R^\textsf{T}} \right){\mathbf{C}}_{EI}  \hfill \\
\end{aligned}
\end{equation}
\setcounter{equation}{\value{mytempeqncnt}}
\hrule
\end{figure*}

Then we derive the asymptotic bound in Theorem \ref{t1}. By replacing ${{\mathbf{H}}_{EI}}{{\bf\Theta } _R^{\textsf{T}}}{\mathbf{H}}_{EI}^H$ in ${\bf Q}_k$ with ${\rm diag}\left( {{\mathbf{H}}_{EI}}{{\bf\Theta } _R^{\textsf{T}}}{\mathbf{H}}_{EI}^H \right)$, we approximate ${\bf Q}_k$ as ${\mathbf{Q}}_k = {{\mathbf{H}}_{SR}}{{\bf\Theta } _S^{\textsf{T}}}{\mathbf{H}}_{SR}^H - {\nu _{S,k}}{E_{S,k}}{{\mathbf{h}}_{SR,k}}{\mathbf{h}}_{SR,k}^H
+ {\rm diag}\left( {{\mathbf{H}}_{EI}}{{\bf\Theta } _R^{\textsf{T}}}{\mathbf{H}}_{EI}^H \right) + {\mathbf{\Theta }}_R^\textsf{R} + {{\mathbf{I}}_{{N_R}}}$. The approximation results in a new upper bound on $\textsf{R}_{SR,k}^{{\text{Upper}}}$, since it reduces the power of residual EI after combining by ${\bf w}_{R,k}$ (The new bound will also be referred to as $\textsf{R}_{SR,k}^{{\text{Upper}}}$ for convenience). Based on [17, Lemma 4 (ii)], we can replace ${\text{diag}}\left( {{{\mathbf{H}}_{EI}}{\mathbf{\Theta }}_R^\textsf{T}{\mathbf{H}}_{EI}^H} \right)$ with its deterministic equivalence in the large-$N_T$ regime $\frac{1}{N_T}{\text{diag}}\left( {{{\mathbf{H}}_{EI}}{\mathbf{\Theta }}_R^\textsf{T}{\mathbf{H}}_{EI}^H} \right) \mathop  = \frac{1}{N_T} \beta_{EI}{\text{Tr}}( {{{{\mathbf{\tilde C}}}_{EI}}{\mathbf{\Theta }}_R^\textsf{T}} ){{\mathbf{I}}_{{N_R}}}$. Finally, the expression of $\textsf{R}_{SR,k}^{{\text{Upper}}}$ in Theorem \ref{t1} can be obtained by first applying [17, Lemma 4 (ii)] on ${{\mathbf{h}}_{SR}^H{\mathbf{Q}}_k^{ - 1}{{\mathbf{h}}_{SR,k}}}$, and then using [17, Theorem 1].

\subsection*{C. Proof of Theorem \ref{t2}}

Since the scenario of interest is the large-$(N_R,N_T)$ regime, we need only consider the effect of ${\bf w}_{T,k}$ on $ \textsf{R}_{SR,k}^{{\text{Upper}}}$ (given by Theorem \ref{t1}), more precisely, the term ${{\rm{Tr}}( {{{{\bf{\tilde C}}}_{EI}}{\bf{\Theta }}_R^\textsf{T}} )}$. With the assumption \textbf{A2}, we have ${\rm diag}({{{\bf{\tilde C}}}_{EI}})={\bf I}_{N_T}$. Thus, using the model (\ref{2-3}), we have ${{\rm{Tr}}( {{{{\bf{\tilde C}}}_{EI}}{\bf{\Theta }}_R^\textsf{T}} )} = {{\text{Tr}}( {{\mathbf{\Theta }}_R^\textsf{T}} )}
={\nu _R}{\text{Tr}}\left( {{\text{diag}}\left( {\mathbb E\left[ {{{\mathbf{x}}_R[u]}{\mathbf{x}}_R^H[u]} \right]} \right)} \right) = {\nu _R}\sum_{l=1}^K E_{R,l}$, which is independent of ${\bf w}_{T,k}$.

Then we derive the upper bound in Theorem \ref{t2}. Using the model (\ref{2-4}), the power of received distortion at the relay can be derived as $\mathbb E\left[ {{{\left\| {{{\mathbf{r}}_{D,k}}\left[ u \right]} \right\|}^2}} \right] = {\mu _{D,k}}\left( {{N_T}{\beta _{RD,k}}{E_{R,k}} + {\nu _R}{N_T}{\beta _{RD,k}}\sum_{l = 1}^K {{E_{R,l}}}  + 1} \right)$. Moreover, in the large $N_T$ regime, the term ${{\mathbf{h}}_{RD,k}^H{\mathbf{\Theta }}_R^\textsf{T}{{\mathbf{h}}_{RD,k}}}$ in (\ref{3-0-2}) approaches to ${{\mathbf{h}}_{RD,k}^H{\mathbf{\Theta }}_R^\textsf{T}{{\mathbf{h}}_{RD,k}}} = {\beta _{RD,k}}{\text{Tr}}\left( {{{\mathbf{C}}_{RD,k}}{\mathbf{\Theta }}_R^\textsf{T}} \right) = {\nu _R}{\beta _{RD,k}}\sum_{l=1}^K E_{R,l}$ [17, Lemma 4 (ii)]. Therefore, it is sufficient to design ${\bf w}_{T,k}$ to maximize the numerator of (\ref{3-0-2}), which is exactly the eigen BF scheme. The resultant upper bound on achievable rate can be obtained by applying [17, Lemma 4 (ii)] and assumption \textbf{A2} on the numerator of (\ref{3-0-2}).

\subsection*{D. Proof of Theorem \ref{t3}}

With the expression of LMMSE estimator \cite{IEEEhowto:KotechaTSP04}, we have ${{{\bf{\hat {\underline{h}}}}}_{SR,k}} = \mathbb{E}[ {{{\bf{{\underline{h}}}}_{SR,k}}{\bf{ \tilde{z}}}_{SR,k}^H} ] {( {\mathbb{E}[ {{{{\bf{\tilde{ z}}}}_{SR,k}}{\bf{ \tilde{z}}}_{SR,k}^H} ]} )^{ - 1}}{{{\bf{ \tilde{z}}}}_{SR,k}}$. Using the independence between ${{\bf{{\underline{h}}}}_{SR,k}}$ and distortion noises, we have
\setcounter{equation}{55}
\begin{equation}\label{C1}
\begin{aligned}
\mathbb E\left[ {{{\bf{\underline{h}}}_{SR,k}}{\bf{\tilde{z}}}_{SR,k}^H} \right] &= \mathbb E\left[ {{\bf{{\dot{P}}}}_R^H{{\bf{H}}_{SR,k}}{{\bf{{\dot{p}}}}_{S,k}}{\bf{{\dot{p}}}}_{S,k}^H{\bf{H}}_{SR,k}^H{{\bf{{\dot{P}}}}_R}} \right]\\
&= {\beta _{SR,k}}{\lambda _1}\left( {{{{\bf{\tilde C}}}_{SR,k}}} \right){\bf{{\dot{P}}}}_R^H{{\bf{C}}_{SR,k}}{{\bf{{\dot{P}}}}_R} \\
 & \buildrel \Delta \over = {{\bf{\underline{C}}}_{SR,k}}
\end{aligned}
\end{equation}
where the second step follows from a similar derivation with that in (\ref{A3}). Moreover, ${\mathbb{E}\left[ {{{{\bf{ \tilde{z}}}}_{SR,k}}{\bf{ \tilde{z}}}_{SR,k}^H} \right]}$ can be expressed as
\begin{equation}\label{C2}
\begin{aligned}
&  \mathbb E   \left[ {{{{\bf{\tilde z}}}_{SR,k}}{\bf{\tilde z}}_{SR,k}^H} \right]= {{\bf{\underline{C}}}_{SR,k}}\\
& +{\left( {\frac{1}{{\tau {E_T}}}} \right)^2}\mathbb{E}\left[ {{\mathbf{{\dot{P}}}}_R^H{{\mathbf{H}}_{SR,k}}{{\mathbf{T}}_{SR,k}}{\phi ^*}{\phi ^T}{\mathbf{T}}_{SR,k}^H{\mathbf{H}}_{SR,k}^H{{\mathbf{{\dot{P}}}}_R}} \right]\\
&  + {\left( {\frac{1}{{\tau {E_T}}}} \right)^2}\mathbb{E}\left[ {{\mathbf{{\dot{P}}}}_R^H{{\mathbf{R}}_{SR,k}}{\phi ^*}{\phi ^T}{\mathbf{R}}_{SR,k}^H{{\mathbf{{\dot{P}}}}_R}} \right] \\
& + {\left( {\frac{1}{{\tau {E_T}}}} \right)^2}\mathbb{E}\left[ {{\mathbf{{\dot{P}}}}_R^H{{\mathbf{N}}_{SR,k}}{\phi ^*}{\phi ^T}{\mathbf{N}}_{SR,k}^H{{\mathbf{{\dot{P}}}}_R}} \right]
\end{aligned}
\end{equation}
According to the independence between ${\bf H}_{SR,k}$ and ${\bf T}_{SR,k}$, the second term of right-hand side of (\ref{C2}) can be rewritten as
\begin{equation*}\label{C3}
 \begin{aligned}
& \mathbb{E}\left[ {{\mathbf{{\dot{P}}}}_R^H{{\mathbf{H}}_{SR,k}}{{\mathbf{T}}_{SR,k}}{\phi ^*}{\phi ^T}{\mathbf{T}}_{SR,k}^H{\mathbf{H}}_{SR,k}^H{{\mathbf{{\dot{P}}}}_R}} \right] \\
& ={\nu _{S,k}}\tau E_T^2\mathbb{E}\left[ {{\mathbf{{\dot{P}}}}_R^H{{\mathbf{H}}_{SR,k}}{\text{diag}}\left( {{{\mathbf{{\dot{p}}}}_{S,k}}{\mathbf{{\dot{p}}}}_{S,k}^H} \right){\mathbf{H}}_{SR,k}^H{{\mathbf{{\dot{P}}}}_R}} \right]\\
&  = {\nu _{S,k}}\tau E_T^2 \beta_{SR,k} {\rm{Tr}}\left( {{{{\bf{\tilde C}}}_{SR,k}}{\rm{diag}}\left( {{{\bf{{\dot{p}}}}_{S,k}}{\bf{{\dot{p}}}}_{S,k}^H} \right)} \right){\bf{{\dot{P}}}}_R^H{{\bf{C}}_{SR,k}}{{\bf{{\dot{P}}}}_R}
\end{aligned}
\end{equation*}
Using the similar approach on the remaining terms of (\ref{C2}), and substituting the resultant expression of (\ref{C2}) and (\ref{C1}) into the expression of LMMSE estimator, the result in Theorem \ref{t3} is obtained. Using (\ref{4-7})-(\ref{4-8}), the second order statistics of $ {\bf {\underline{\hat{h}}}}_{{SR},k} $ and $\Delta {\bf {\underline{h}}}_{{SR},k}$ in Theorem \ref{t3} can be easily verified.

\subsection*{E. Proof of Theorem \ref{t4}}

We show the proof for $\mathbb E[{\textsf{EI}}_k]$ and $\mathbb E[\textsf{D}_{R,k}^\textsf{T}]$ due to the space limitation. The proof for other terms in Theorem \ref{t4} is similar.

Note that ${\left[ {{\bf{\hat {\underline{H}}}}_{i}^H{{{\bf{\hat {\underline{H}}}}}_{i}}} \right]_{l,j}} = {\bf{\hat {\underline{h}}}}_{i,l}^H{{{\bf{\hat {\underline{h}}}}}_{i,j}} = {\mathbf{x}}_{i,l}^H{\mathbf{\hat {\underline{C}}}}_{i,l}^{1/2}{\mathbf{\hat {\underline{C}}}}_{i,j}^{1/2}{{\mathbf{x}}_{i,j}}$ ($i\in \{SR,RD\}$). Based on Lemma \ref{lm2}, the dimensions of ${{\mathbf{x}}_{SR,j}}$ and ${{\mathbf{x}}_{RD,j}}$ (i.e., $A_R$ and $A_T$) approach to infinity as $\min\{N_R,N_T\} \to \infty$. Thus, according to [17, Lemma 4 (ii) and (iii)], in the large-($N_R,N_T$) regime, we can deduce
\begin{equation}\label{D1}
\begin{aligned}
& \frac{1}{{{N_R}}}{\bf{\hat {\underline{H}}}}_{SR}^H{{{\bf{\hat {\underline{H}}}}}_{SR}} \\
& \hspace{3em} \to \frac{1}{{{N_R}}}{\rm{diag}}\left( {{\rm{Tr}}\left( {{{{\bf{\hat {\underline{C}}}}}_{SR,1}}} \right), \cdots ,{\rm{Tr}}\left( {{{{\bf{\hat {\underline{C}}}}}_{SR,K}}} \right)} \right)\\
& \frac{1}{{{N_T}}}{\bf{\hat {\underline{H}}}}_{RD}^H{{{\bf{\hat {\underline{H}}}}}_{RD}}\\
& \hspace{3em} \to \frac{1}{{{N_T}}}{\rm{diag}}\left( {{\rm{Tr}}\left( {{{{\bf{\hat {\underline{C}}}}}_{RD,1}}} \right), \cdots ,{\rm{Tr}}\left( {{{{\bf{\hat {\underline{C}}}}}_{RD,K}}} \right)} \right)
\end{aligned}
\end{equation}

\emph{1) Derivation of $\mathbb E[\textsf{EI}_k]$:}

\emph{(a) Asymptotic Expression:} Based on the model (\ref{2-3}), we have ${\mathbb E\left[ {{{\mathbf{t}}_R}\left[ u \right]{\mathbf{t}}_R^H\left[ u \right]} \right]} = {\nu _R}{\rm{diag}}\left( {\mathbb E\left[ {{{\bf{W}}_T}{\Lambda _R}{\bf{W}}_T^H} \right]} \right)$. Since ${{\bf{H}}_{EI}}$ and ${{\bf{\underline{\hat{H}}}}_{RD}}$ are independent, $\textsf{EI}_k$ can be rewritten as $\mathbb E \left[\textsf{EI}_k\right] = \mathbb E [{\bf{w}}_{R,k}^H{{\bf{H}}_{EI}}{{\bf{\Omega }}_R}{\bf{H}}_{EI}^H{{\bf{w}}_{R,k}} ]
 = \mathbb E[ {{{\| {{\bf{e}}_k^T{{( {{\bf{\hat {\underline{H}}}}_{SR}^H{{{\bf{\hat {\underline{H}}}}}_{SR}}} )}^{ - 1}}{{{\bf{\hat {\underline{H}}}}}_{SR}}{{\bf{{\dot{P}}}}_R^H}{{\bf{H}}_{EI}}{\bf{\Omega }}_R^{1/2}} \|^2}}} ]$, where ${{\bf{\Omega }}_R}$ is defined as ${{\bf{\Omega }}_R} = \mathbb{E}[ {\left( {{{\bf{x}}_R}\left[ u \right] + {{\bf{t}}_R}\left[ u \right]} \right){{\left( {{{\bf{x}}_R}\left[ u \right] + {{\bf{t}}_R}\left[ u \right]} \right)}^H}} ] = \mathbb E\left[ {{{\bf{W}}_T}{\Lambda _R}{\bf{W}}_T^H} \right] + {\nu _R}{\rm{diag}}\left( {\mathbb E\left[ {{{\bf{W}}_T}{\Lambda _R}{\bf{W}}_T^H} \right]} \right)$. Inserting (\ref{4-12-1}) into the expression of ${{\bf{\Omega }}_R}$ and using (\ref{D1}), we can obtain
\begin{equation}\label{D6}
  {{\bf{\Omega }}_R} = \sum\limits_{l = 1}^K {E_{R,l}}\frac {{{\bf{{\dot{P}}}}_T}{{{\bf{\hat {\underline{C}}}}}_{RD,l}}{\bf{{\dot{P}}}}_T^H + {\nu _R}{\rm{diag}}\left( {{{\bf{{\dot{P}}}}_T}{{{\bf{\hat {\underline{C}}}}}_{RD,l}}{\bf{{\dot{P}}}}_T^H} \right)}{{ {{\rm{Tr}}\left( {{{{\bf{\hat {\underline{C}}}}}_{RD,l}}} \right)} }}
\end{equation}
In the large-$(N_R,N_T)$ regime, using (\ref{D1}), we have
\begin{equation*}\label{D7}
\begin{aligned}
\mathbb E \left[\textsf{EI}_k\right] &  = \frac{\mathbb E\left[ {{{{\bf{\hat {\underline{h}}}}}_{SR,k}}{{\bf{{\dot{P}}}}_R^H}{{\bf{H}}_{EI}}{{\bf{\Omega }}_R}{\bf{H}}_{EI}^H{{\bf{{\dot{P}}}}_R}{\bf{\hat {\underline{h}}}}_{SR,k}^H} \right]}{\left( {{\rm{Tr}}\left( {{{{\bf{\hat {\underline{C}}}}}_{SR,k}}} \right)} \right)^{ 2}}  \\
& = \frac{\mathbb E\left[ {{\rm{Tr}}\left( {{\bf{\hat {\underline{C}}}}_{SR,k}^{1/2}{{\bf{{\dot{P}}}}_R^H}{{\bf{H}}_{EI}}{{\bf{\Omega }}_R}{\bf{H}}_{EI}^H{{\bf{{\dot{P}}}}_R}{\bf{\hat {\underline{C}}}}_{SR,k}^{1/2}} \right)} \right]}{\left( {{\rm{Tr}}\left( {{{{\bf{\hat {\underline{C}}}}}_{SR,k}}} \right)} \right)^{  2}}
\end{aligned}
\end{equation*}
The second step is based on the property ${\rm Tr}(\bf AB) ={\rm Tr}(\bf BA)$ and independence between ${{{\bf{\hat {\underline{h}}}}}_{SR,k}}$ and ${{\bf{H}}_{EI}}$. Finally, using a similar derivation with (\ref{A3}), $\mathbb E \left[\textsf{EI}_k\right]$ in Theorem \ref{t4} is obtained.

\emph{(b) Scaling Behavior:} Substituting (\ref{D6}) into the expression of $\mathbb E \left[\textsf{EI}_k\right]$ in Theorem \ref{t4} and neglecting the terms that have no effect on the scaling behavior, we have
\begin{equation*}
\begin{aligned}
\mathbb E\left[ {\textsf{EI}{_k}} \right] = {\beta _{EI}}{\left( {{\text{Tr}}\left( {{{{\mathbf{\hat {\underline{C}}}}}_{SR,k}}} \right)} \right)^{ - 2}}\sum\limits_{l = 1}^K {{E_{R,l}}{{\left( {{\text{Tr}}\left( {{{{\mathbf{\hat {\underline{C}}}}}_{RD,l}}} \right)} \right)}^{ - 1}}}   \\
   \times {\text{Tr}}\left( {{{{\mathbf{\hat {\underline{C}}}}}_{RD,l}}{\mathbf{{\dot{P}}}}_T^H{{{\mathbf{\tilde C}}}_{EI}}{{\mathbf{{\dot{P}}}}_T}} \right){\text{Tr}}\left( {{{{\mathbf{\hat {\underline{C}}}}}_{SR,k}}{\mathbf{{\dot{P}}}}_R^H{{\mathbf{C}}_{EI}}{{\mathbf{{\dot{P}}}}_R}} \right) \hfill \\
   \le {\beta _{EI}}\sum\limits_{l = 1}^K \frac{{{E_{R,l}}{\lambda _{{N_R} - {A_R} + 1}}\left( {{{\mathbf{C}}_{EI}}} \right)} {\lambda _{{N_T} - {A_T} + 1}}\left( {{{{\mathbf{\tilde C}}}_{EI}}} \right)}{ {{\text{Tr}}\left( {{{{\mathbf{\hat {\underline{C}}}}}_{SR,k}}} \right)} }
\end{aligned}
\end{equation*}
where the second step follows from the definition of ${\bf {\dot{P}}}_R$ and ${\bf {\dot{P}}}_T$. Based on Lemma \ref{lm2}, ${( {{\text{Tr}}( {{{{\mathbf{\hat {\underline{C}}}}}_{SR,k}}} )} )^{ - 1}}={\cal O}(N_R^{-1})$. Thus we can conclude that $\mathbb E \left[\textsf{EI}_k\right] \le {\cal O}({\lambda _{{N_R} - {A_R} + 1}}\left( {{{\mathbf{C}}_{EI}}} \right) {\lambda _{{N_T} - {A_T} + 1}}( {{{{\mathbf{\tilde C}}}_{EI}}} )KN_R^{ - 1})$.

\emph{2) Derivation of $\mathbb E[\textsf{D}_{R,k}^\textsf{T}]$:}

\emph{(a) Asymptotic Expression:} Based on the model (\ref{2-3}), the covariance matrix of ${\bf t}_{S,j}$ can be derived as ${\mathbb E\left[ {{{\mathbf{t}}_{S,j}}\left[ u \right]{\mathbf{t}}_{S,j}^H\left[ u \right]} \right]} ={\nu _{S,j}}{E_{S,j}}{\rm{diag}}\left( {{{\bf{{\dot{p}}}}_{S,j}}{\bf{{\dot{p}}}}_{S,j}^H} \right)$. Thus, in the large-$(N_R,N_T)$ regime, using (\ref{D1}) $\mathbb E[\textsf{D}_{R,k}^\textsf{T}]$ can be expressed as
\begin{equation}\label{D9}
\begin{aligned}
  & \mathbb E[\textsf{D}_{R,k}^\textsf{T}] = {\left( {{\rm{Tr}}\left( {{{{\bf{\hat {\underline{C}}}}}_{SR,k}}} \right)} \right)^{ - 2}}{\nu _{S,j}}{E_{S,k}}\\
  & \times\underbrace {\mathbb E\left[ {{\bf{\hat {\underline{h}}}}_{SR,k}^H{\bf{{\dot{P}}}}_R^H{{\bf{H}}_{SR,k}}{\rm{diag}}\left( {{{\bf{{\dot{p}}}}_{S,k}}{\bf{{\dot{p}}}}_{S,k}^H} \right){\bf{H}}_{SR,k}^H{{\bf{{\dot{P}}}}_R}{{{\bf{\hat {\underline{h}}}}}_{SR,k}}} \right]}_{{g_{kk}}} \\
   &+ {\left( {{\rm{Tr}}\left( {{{{\bf{\hat {\underline{C}}}}}_{SR,k}}} \right)} \right)^{ - 2}}\sum\limits_{j = 1,j \ne k}^K {\nu _{S,j}}{E_{S,j}} \\
  & \times\underbrace {\mathbb E\left[ {{\bf{\hat {\underline{h}}}}_{SR,k}^H{\bf{{\dot{P}}}}_R^H{{\bf{H}}_{SR,j}}{\rm{diag}}\left( {{{\bf{{\dot{p}}}}_{S,j}}{\bf{{\dot{p}}}}_{S,j}^H} \right){\bf{H}}_{SR,j}^H{{\bf{{\dot{P}}}}_R}{{{\bf{\hat {\underline{h}}}}}_{SR,k}}} \right]}_{{g_{kj}}}
\end{aligned}
\end{equation}
According to the independence between ${{{{\bf{\hat {\underline{h}}}}}_{SR,k}}}$ and ${{{\bf{H}}_{SR,j}}}$ when $k\ne j$, ${{g_{kj}}}$ can be derived as
\begin{equation}\label{D10}
g_{kj} = {\beta _{SR,j}}{\rm{Tr}}\left( {{{{\bf{\hat {\underline{C}}}}}_{SR,k}}{\bf{{\dot{P}}}}_R^H{{\bf{C}}_{SR,j}}{{\bf{{\dot{P}}}}_R}} \right)
\end{equation}
Moreover, substituting (\ref{4-7}) and (\ref{4-8}) in Theorem \ref{t3} into (\ref{D9}) and using the large-$N_R$ approximation in (\ref{X5}), it is straightforward but tedious to show that $g_{kk}$ can be derived as
\begin{equation}\label{D12}
  \begin{aligned}
    & g_{kk} = {\left( {{\rm{Tr}}\left( {{{{\bf{\underline{\hat C}}}}_{SR,k}}} \right)} \right)^2}{\bf{{\dot{p}}}}_{S,k}^H{\rm{diag}}\left( {{{\bf{{\dot{p}}}}_{S,k}}{\bf{{\dot{p}}}}_{S,k}^H} \right){{\bf{{\dot{p}}}}_{S,k}} \\
    & + \frac{{{\nu _{S,k}}}}{\tau }{\left( {{\rm{Tr}}\left( {{{{\bf{\underline{\hat C}}}}_{SR,k}}} \right)} \right)^2} \frac{{{\rm{Tr}}\left( {{{\left( {{\rm{diag}}\left( {{{\bf{{\dot{p}}}}_{S,k}}{\bf{{\dot{p}}}}_{S,k}^H} \right){{{\bf{\tilde C}}}_{SR,k}}} \right)}^2}} \right)}}{{{{\left( {{\lambda _1}\left( {{{{\bf{\tilde C}}}_{SR,k}}} \right)} \right)}^2}}} \\
    & + \frac{{{\mu _R}}}{\tau }\frac{{\left( {\frac{1}{{{E_T}}}{\rm{ + }}{\beta _{SR,k}}{\text{Tr}}\left( {{{{\mathbf{\tilde C}}}_{SR,k}}{{\mathbf{\Omega }}_{S,k}}} \right)} \right){\rm{Tr}}\left( {{\bf{\underline{C}}}_{SR,k}^3{\bf{\Gamma }}_{SR,k}^2} \right)}}{{{\lambda _1}\left( {{{{\bf{\tilde C}}}_{SR,k}}} \right)}} \\
    & + \frac{{1}}{{\tau {E_T}}}\frac{{{\rm{Tr}}\left( {{\bf{\underline{C}}}_{SR,k}^3{\bf{\Gamma }}_{SR,k}^2} \right)}}{{{\lambda _1}\left( {{{{\bf{\tilde C}}}_{SR,k}}} \right)}}
  \end{aligned}
\end{equation}
where ${\bf\Omega}_{S,k} = {{{\bf{{\dot{p}}}}_{S,k}}{\bf{{\dot{p}}}}_{S,k}^H + {\nu _{S,k}}{\rm{diag}}\left( {{{\bf{{\dot{p}}}}_{S,k}}{\bf{{\dot{p}}}}_{S,k}^H} \right)} $. Inserting (\ref{D10}) and (\ref{D12}) into (\ref{D9}), the expression of $\mathbb E[\textsf{D}_{R,k}^\textsf{T}]$ is obtained.

\emph{(b) Scaling Behavior:} Using the similar approach as that for $\mathbb E[\textsf{EI}_k]$, it can be shown that, as $N_S$ is fixed, the first and second terms on the right-hand side of (\ref{D9}) scale as ${\cal O}(1)$ and ${\cal O}(KN_R^{-1})$, respectively. Thus, $\mathbb E[\textsf{D}_{R,k}^\textsf{T}] ={\cal O}(1)$ since $K \le N_R$. As $N_S$ scales with $N_R$, it is shown in \cite{IEEEhowto:AdhikaryIT13} that the eigenvectors of ${\bf \tilde C}_{EI}$ form a unitary DFT matrix in the large $N_S$ regime. In this case, we have ${\rm diag}({\bf {\dot{p}}}_{S,k}{\bf {\dot{p}}}_{S,k}^H)=1/N_S$. Applying this property and Lemma \ref{lm2} on (\ref{D12}), it can be verified that the first term on the right-hand side of (\ref{D9}) scales as ${\cal O}(1/N_S)$ as $N_S$ scales with $N_R$. This completes the proof.

%

\begin{IEEEbiography}[{\includegraphics[width=1in,height=1.25in,clip,keepaspectratio]{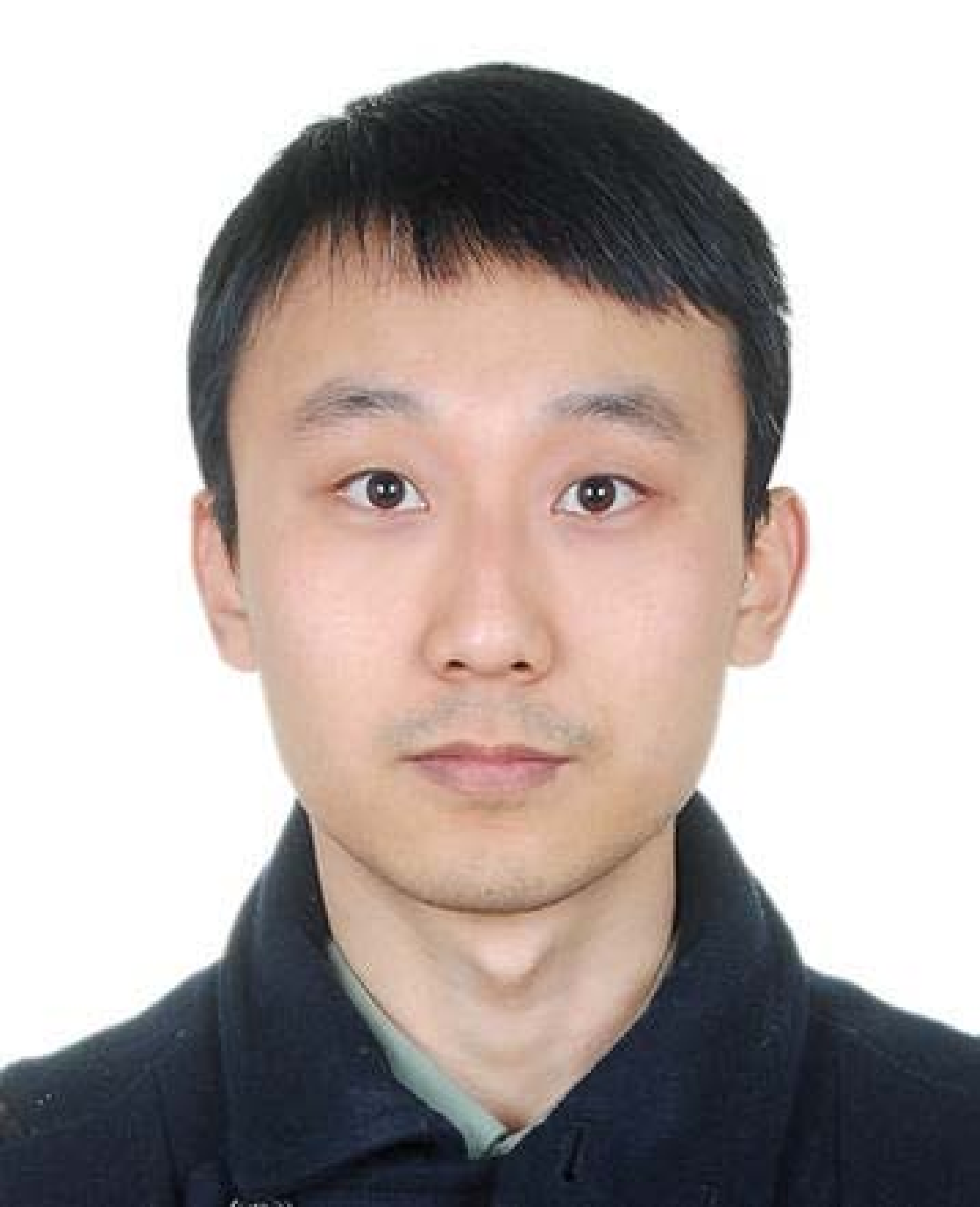}}]{Xiaochen Xia}
received his B.E. degree in electronic science and technology from Tianjin University (TJU) in 2010, and M.S. degree in communication and information system from PLA University of Science and Technology (PLAUST), in 2013. He is currently working toward the Ph.D. degree in institution of communications engineering, PLAUST. His research interests include relaying network, full-duplex communication, Network coding, MIMO techniques. He received the 2013 excellent master degree dissertation award of Jiangsu Province, China.
\end{IEEEbiography}

\begin{IEEEbiography}[{\includegraphics[width=1in,height=1.25in,clip,keepaspectratio]{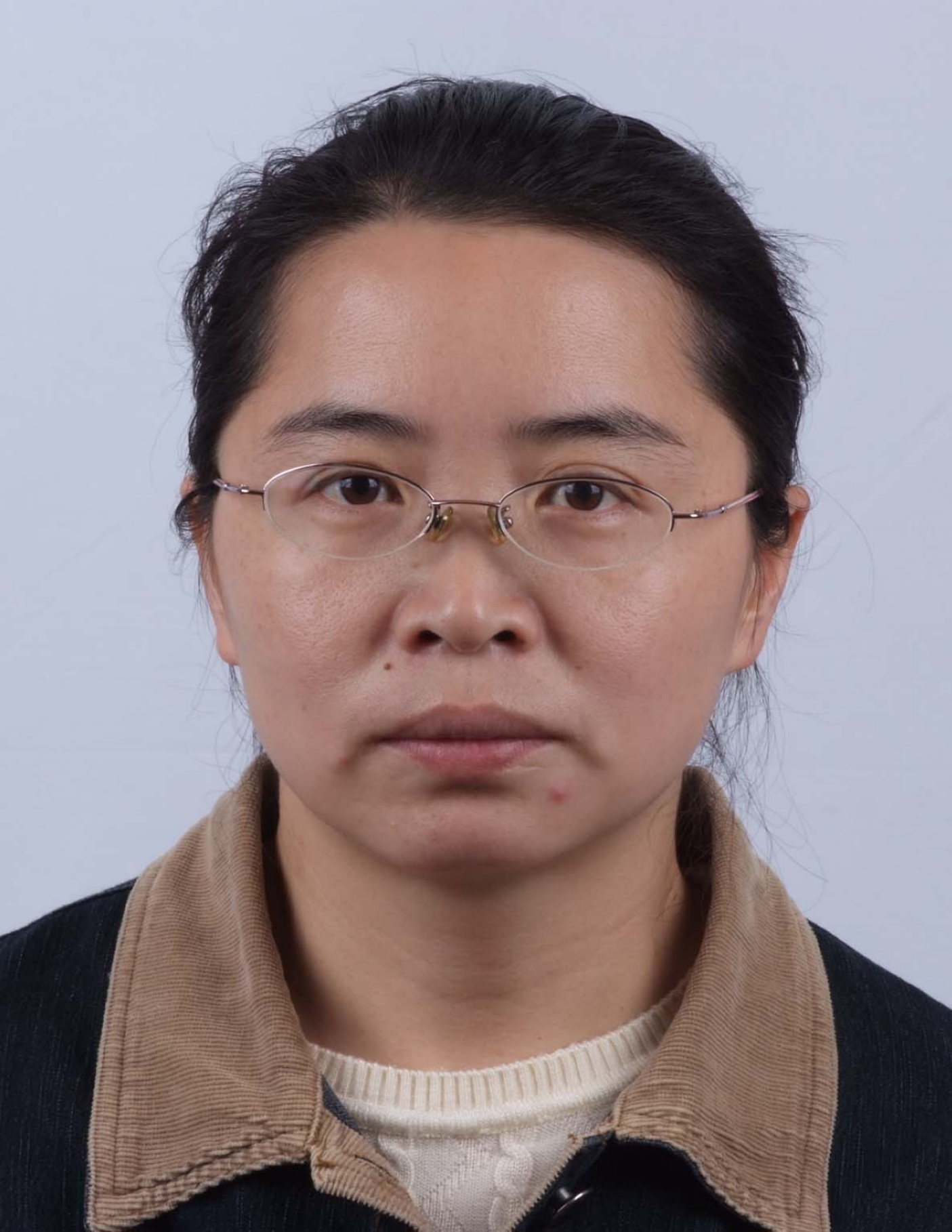}}]{Dongmei Zhang}
received the B.S. degree in institute of communications engineering, Nanjing, China in 1994, and the M.S. degree in communications and information system from the institute of communications engineering, Nanjing, China in 1997. She is currently an associate professor with the PLA University of Science and Technology, Nanjing, China. Her research interests include broadband wireless communications, cognitive radio, wireless resource allocation.
\end{IEEEbiography}

\begin{IEEEbiography}[{\includegraphics[width=1in,height=1.25in,clip,keepaspectratio]{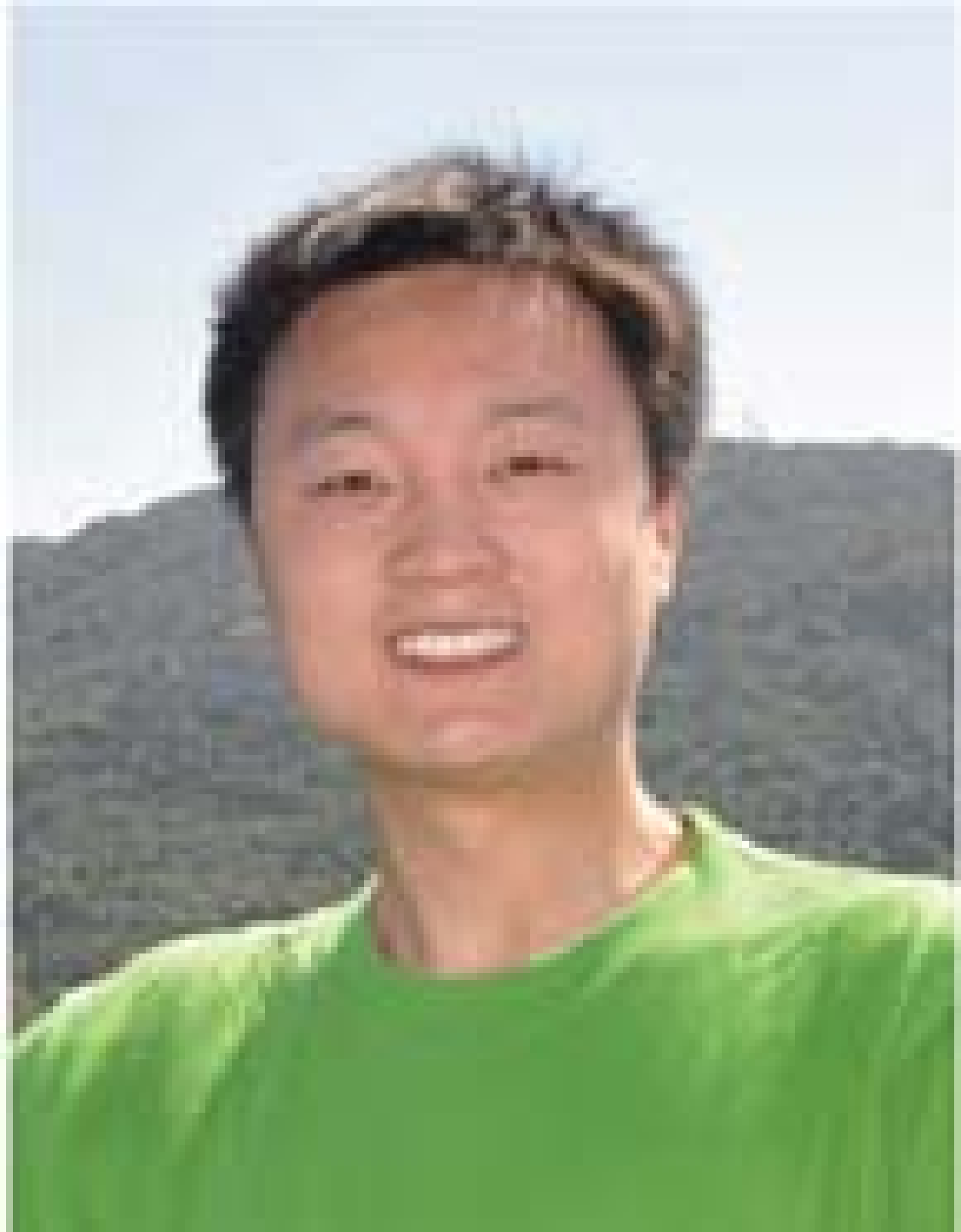}}]{Kui Xu}
received the B.S. degree and Ph.D. degree from the PLA University of Science and Technology (PLAUST), Nanjing, China in 2004 and 2009. He is currently a lecturer in the PLAUST. His research interests include broadband wireless communications, signal processing for communications and wireless communication networks. He is the author of about 50 papers in refereed journals and conference proceedings and holds 5 patents in China. Dr. Xu currently served on the technical program committee of the IEEE WCSP 2014. He received the URSI Young Scientists Award in 2014 and the 2010 ten excellent doctor degree dissertation award of PLAUST.
\end{IEEEbiography}

\begin{IEEEbiography}[{\includegraphics[width=1in,height=1.25in,clip,keepaspectratio]{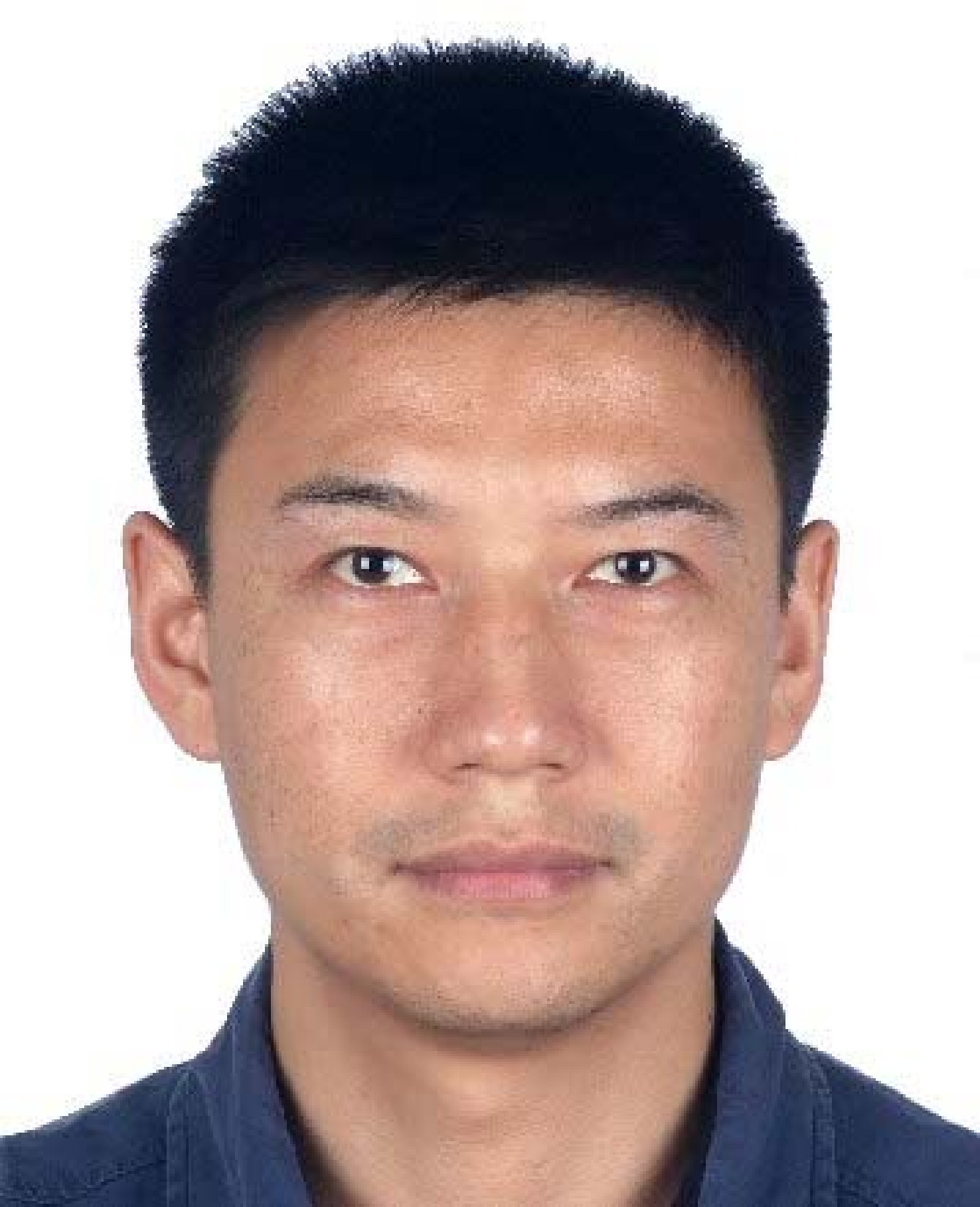}}]{Wenfeng Ma}
received the B.S. degree in microwave communication from the institute of communications engineering, Nanjing, China in 1995, and the M.S. degree in communications and information system from the institute of communications engineering, Nanjing, China in 1998, and the Ph.D. degree in communications and information system from the PLA University of Science and Technology, Nanjing, China, in 2002. He is currently an associate professor with the PLA University of Science and Technology, Nanjing, China. His research interests include wireless communication networks, broadband wireless communications.
\end{IEEEbiography}

\begin{IEEEbiography}[{\includegraphics[width=1in,height=1.25in,clip,keepaspectratio]{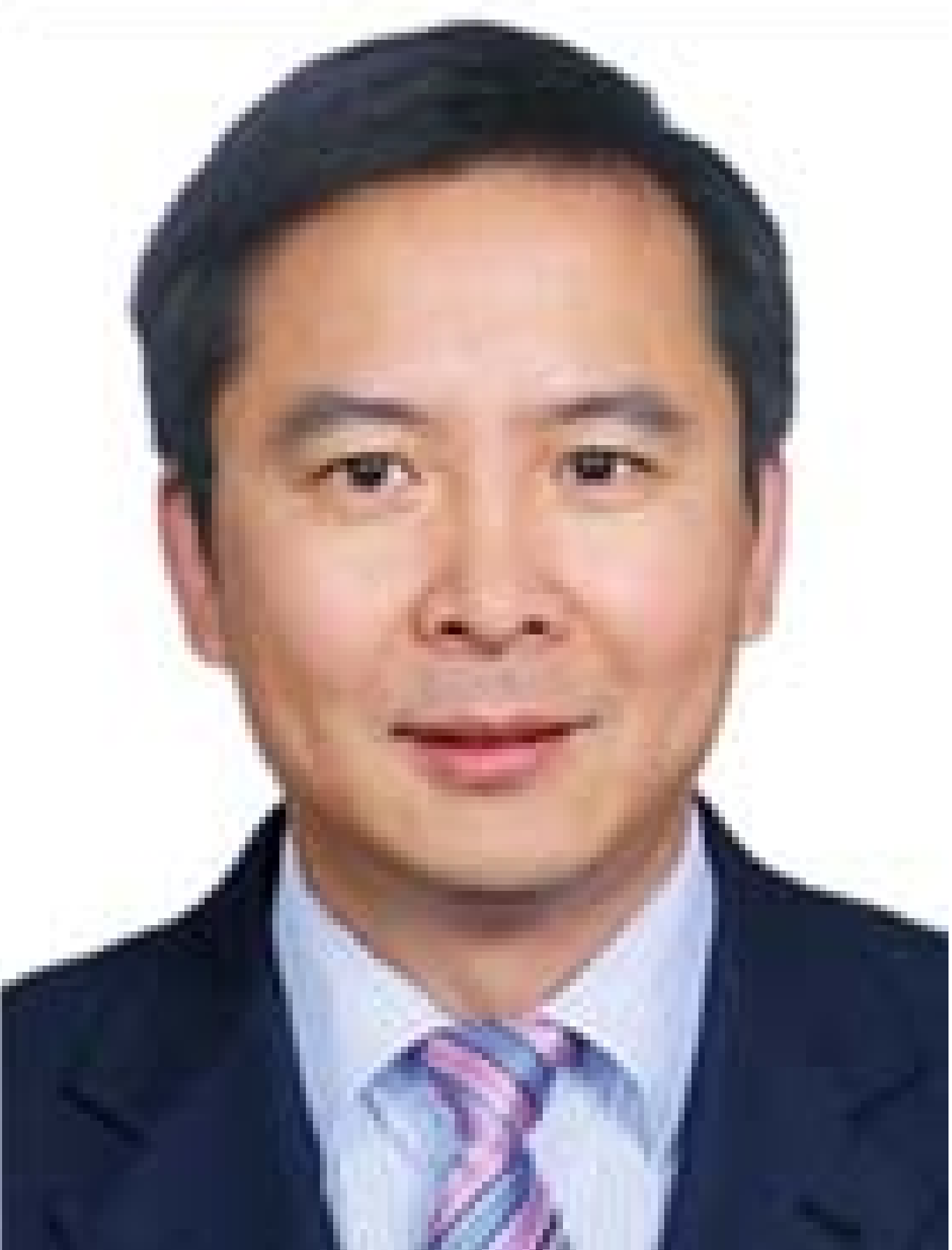}}]{Youyun Xu}
was graduated from Shanghai Jiao Tong University with Ph.D. Degree in Information and Communication Engineering in 1999. He is currently a professor with Nanjing Institute of Communication Engineering, China. He is also a part-time professor with the Institute of Wireless Communication Technology of Shanghai Jiao Tong University (SJTU), China. Professor Xu has more than 20 years professional experience of teaching and researching in communication theory and engineering. Now, his research interests are focusing on New Generation Wireless Mobile Communication System (IMT-Advanced and Related), Advanced Channel Coding and Modulation Techniques, Multi-user Information Theory and Radio Resource Management, Wireless Sensor Networks, Cognitive Radio Networks, etc.
\end{IEEEbiography}

\end{document}